\newtheorem{theorem}{Theorem}
\newtheorem{lemma}{Lemma}
\newcommand{\repeattheorem}[1]{%
  \begingroup
  \renewcommand{\thetheorem}{\ref{#1}}%
  \expandafter\expandafter\expandafter\theorem
  \csname reptheorem@#1\endcsname
  \endtheorem
  \endgroup
}
\xdef\csname reptheorem@#1\endcsname{%
    \unexpanded\expandafter{\BODY}%
  }%
\unskip\label{#1}\endtheorem
\DeclareMathOperator*{\tr}{Tr}
\DeclareMathOperator*{\locc}{LOCC}
\DeclareMathOperator*{\cptn}{CPTN}
\newcommand{\ketbra}[2]{|#1\rangle\langle#2|}
\newcommand{\ketbras}[1]{\ketbra{#1}{#1}}
\newcommand{\iu}{\mathrm{i}\mkern1mu}
\begin{document}

\title{Joint Wire Cutting with Non-Maximally Entangled States} 

\author{Marvin Bechtold}
\affiliation{Institute of Architecture of Application Systems, University of Stuttgart, Universitätsstraße 38, 70569 Stuttgart, Germany}
\orcid{0000-0002-7770-7296}
\email{bechtold@iaas.uni-stuttgart.de}
\author{Johanna Barzen}
\email{barzen@iaas.uni-stuttgart.de}
\affiliation{Institute of Architecture of Application Systems, University of Stuttgart, Universitätsstraße 38, 70569 Stuttgart, Germany}
\orcid{0000-0001-8397-7973}
\author{Frank Leymann}
\affiliation{Institute of Architecture of Application Systems, University of Stuttgart, Universitätsstraße 38, 70569 Stuttgart, Germany}
\orcid{0000-0002-9123-259X}
\email{leymann@iaas.uni-stuttgart.de}
\author{Alexander Mandl}
\affiliation{Institute of Architecture of Application Systems, University of Stuttgart, Universitätsstraße 38, 70569 Stuttgart, Germany}
\orcid{0000-0003-4502-6119}
\email{mandl@iaas.uni-stuttgart.de}
\author{Felix Truger}
\affiliation{Institute of Architecture of Application Systems, University of Stuttgart, Universitätsstraße 38, 70569 Stuttgart, Germany}
\orcid{0000-0001-6587-6431}
\email{truger@iaas.uni-stuttgart.de}

\maketitle

\begin{abstract}
Distributed quantum computing leverages the collective power of multiple quantum devices to perform computations exceeding the capabilities of individual quantum devices.
A currently studied technique to enable this distributed approach is wire cutting, which decomposes a quantum circuit into smaller subcircuits by cutting their connecting wires.
These subcircuits can then be executed on distributed devices, and their results are classically combined to reconstruct the original computation's result.
However, wire cutting requires additional circuit executions to preserve result accuracy, with their number growing exponentially with each cut.
Thus, minimizing this sampling overhead is crucial for reducing the total execution time.
Employing shared non-maximally entangled~(NME) states between distributed devices reduces this overhead for single wire cuts, moving closer to ideal teleportation with maximally entangled states.
Extending this approach to jointly cutting multiple wires using NME states remained unexplored.
Our paper addresses this gap by investigating the use of NME states for joint wire cuts, aiming to reduce the sampling overhead further. 
Our three main contributions include (i)~determining the minimal sampling overhead for this scenario, (ii)~analyzing the overhead when using composite NME states constructed from smaller NME states, and (iii)~introducing a wire cutting technique that achieves the optimal sampling overhead with pure NME states, paving the way towards wire cutting with arbitrary NME states.

\end{abstract}

\section{Introduction}
Despite the potential of quantum computing to efficiently tackle classically intractable problems~\cite{Shor1997,Liu2021a}, contemporary quantum devices encounter several limitations, such as a restricted number of available qubits, and scaling up to larger devices remains a significant challenge~\cite{Preskill2018,Leymann2020a}. 
In response, distributed quantum computing architectures offer a promising approach for scalability, leveraging the collective power of multiple quantum devices to perform computations that would be infeasible for a single device~\cite{Bravyi2022,Ang2022,Awschalom2021,Barral2024,Furutanpey2023a}. 
These quantum devices can be interconnected either through classical means~\cite{Avron2021,Dunjko2018} or, in the foreseeable future, by quantum links~\cite{Cuomo2020,Khait2023a,Niu2023a}.
The realization of quantum connectivity is expected to evolve from short-range quantum links between adjacent chips of modular quantum devices in the near term~\cite{Gold2021, Conner2021} to long-range quantum connections via cables between different devices as the technology advances~\cite{Zhong2019}.

Quantum teleportation, which enables the exchange of quantum states between remote quantum devices, is an essential building block for distributing quantum computations in the long term~\cite{Awschalom2021,Cuomo2020}.
However, its implementation necessitates maximally entangled states shared between distributed quantum devices.
These maximally entangled states enable the transfer of qubits between the devices by exchanging just two classical bits per qubit~\cite{Bennett1993}.
While generating entangled states among multiple commercial quantum devices is currently beyond reach, it is expected to become achievable in the foreseeable future~\cite{Bravyi2022,Choi2023}.
Furthermore, generating entangled states will be susceptible to noise~\cite{Gold2021,Zhong2019,Narla2016}, yielding \textit{non-maximally entangled~(NME)} states, i.e., states of intermediate level of entanglement, that complicate the use of teleportation, as error-free state transmission relies on perfect maximally entangled states~\cite{Prakash2012}.

In the interim, circuit cutting offers a practical alternative for distributing quantum computations without relying on maximally entangled states~\cite{Barral2024}.
It partitions a quantum circuit into multiple variations of subcircuits that can be executed on separate quantum devices, with the results combined to replicate the original circuit's outcome~\cite{Bechtold2023_cuttingPatterns}. 
Wire cutting is a specific form of circuit cutting.
It involves interrupting wires in a quantum circuit, each representing the state of a qubit, through repeated measurements and qubit initializations~\cite{Peng2019,Lowe2023,Harada2023}. 
This facilitates the distribution of the execution of a large circuit between multiple smaller devices by cutting the connecting wires.
However, wire cutting requires an increased number of shots to achieve the same accuracy in the expectation value as the original circuit execution~\cite{Brenner2023}.
These additional shots, known as \emph{sampling overhead}, increase multiplicatively per cut, resulting in exponential growth.
Therefore, minimizing the sampling overhead of each cut is crucial.

Recent findings show that allowing classical communication between subcircuits during wire cutting can significantly lower the sampling overhead~\cite{Brenner2023}. 
Additionally, cutting multiple wires jointly, rather than cutting each wire individually, further reduces the total overhead~\cite{Brenner2023}. 
Moreover, as first quantum links between devices are anticipated, using NME states in wire cutting has been shown to reduce the sampling overhead for cutting a single wire, offering a more accessible alternative to default quantum teleportation that requires maximally entangled states~\cite{Bechtold2024}. 

Building on these developments, our research focuses on jointly cutting multiple wires in a quantum circuit using NME states.
This approach aims to reduce the sampling overhead further and extend the application of NME states beyond cutting single wires.
We concentrate on parallel wire cuts, which involve cutting multiple wires within the same time slice of a quantum circuit.
However, we also argue that our results on the optimal sampling overhead can be applied to arbitrary wire cuts without limiting the cuts to the same time slice.
Our work presents three main contributions.
(i)~We determine the optimal sampling overhead, quantifying the required additional circuit executions for a parallel cut of multiple wires in a quantum circuit using NME states in \Cref{theorem_overhead}. 
This theorem demonstrates how NME states reduce the sampling overhead in this context, extending previous known sampling overheads that were limited to scenarios without utilizing NME states.
(ii)~We show how to cut $n$ wires given a set of available entangled states in \Cref{theorem_cut_composite_states,theorem_composite_max_ent,theorem_composite_separable}.
By combining individual entangled states for a single parallel cut, the sampling overhead is reduced compared to separate cuts using individual states, except for maximally entangled parts of the composite state.
As a step towards a wire cutting procedure that supports arbitrary NME states, (iii)~we present in \Cref{theorem_decomposition} a procedure using pure NME states to cut multiple wires in parallel that achieves the optimal sampling overhead.

This work is organized as follows:
\Cref{sec:preliminaries} introduces the preliminaries necessary for \Cref{sec:main} that presents the main contributions.
Following this, \Cref{sec:discussion} discusses the results, and \Cref{sec:related_work} presents related work.
Finally, \Cref{sec:conclusion} concludes the work.

\section{Preliminaries}\label{sec:preliminaries}
This section outlines the necessary preliminaries. 
First, \Cref{sec:nme_states} focuses on quantifying entanglement in NME states.
\Cref{sec:mubs} then introduces \emph{mutually unbiased bases}~(MUBs), as operators constructed from MUBs are later employed in the presented wire cutting technique.
Next, \Cref{sec:quasi_prob_sim,sec:quasi_prob_sim_states} cover the quasiprobability simulation of non-local operators and maximally entangled states, respectively.
These concepts underpin wire cutting, which is presented in \Cref{sec:wire_cutting}.
Lastly, \Cref{sec:teleportation} explores quantum teleportation for multiple qubits, including the errors encountered when using NME states.

\subsection{Quantifying Entanglement in NME States}\label{sec:nme_states}
Entanglement is a fundamental quantum mechanical phenomenon observed in bipartite quantum systems described by their combined Hilbert space $A \otimes B$ for subsystems $A$ and $B$.
The state of such a composite quantum system is represented by a density operator $\rho_{AB}$, which is a positive Hermitian operator with unit trace acting on $A \otimes B$.
The set of all density operators of a quantum system $A \otimes B$ is denoted as $D(A\otimes B)$.
The optional subscript in $\rho_{AB}$ clarifies the involved Hilbert spaces when necessary.
The density operator corresponding to a pure state $\ket{\psi} \in A \otimes B$ is represented by $\psi = \ketbras{\psi}\in D(A\otimes B)$. 

In bipartite systems, a state $\rho_{AB} \in D(A\otimes B)$ is called separable if it can be decomposed into independent states of its subsystems, meaning there exist $\rho_A \in D(A)$ and $\rho_B \in D(B)$ such that $\rho_{AB} = \rho_A \otimes \rho_B$.
The set of all separable states between $A$ and $B$ is referred to as $S(A,B) \subset D(A \otimes B)$.
Conversely, any state $\rho_{AB} \in D(A\otimes B)$ that does not allow for such decomposition, i.e., $\rho_{AB} \notin S(A,B)$, is called entangled.

Superoperators govern the evolution of density operators in quantum systems. 
These linear operators act on the density operators in $D(A\otimes B)$.
To be physically feasible, a superoperator $\mathcal{E}$ must satisfy two criteria: (i) it has to be completely positive and (ii) it must not increase the trace of any density operator it acts upon, i.e., $0 \le \tr[\mathcal{E}(\rho)] \le \tr[\rho]$ for any $\rho \in D(A\otimes B)$~\cite{Nielsen2009}.
The set of all \textit{completely positive trace-nonincreasing~(CPTN)} superoperators is represented by $\cptn(A\otimes B)$.

To characterize the degree of entanglement, \emph{local operations and classical communication~(LOCC)} play a crucial role~\cite{Chitambar2014}.
The corresponding set of operators $\locc(A,B) \subset \cptn(A\otimes B)$ encompasses local transformations on the subsystems $A$ and $B$, coordinated through classical communication.
An operator $\Lambda\in\locc(A, B)$ cannot create entanglement between $A$ and $B$.
Consequently, the degree of entanglement of the transformed state $\Lambda(\rho_{AB})$ remains at most equal to that of $\rho_{AB}$.
A function that satisfies this property can quantify entanglement and is called an entanglement monotone~\cite {Vidal2000}.
The \emph{generalized robustness of entanglement} is such an entanglement monotone~\cite{Steiner2003,Harrow2003}, which is used to quantify entanglement in the following.
It determines the minimal amount of an arbitrary state that must be mixed with $\rho_{AB}\in D(A\otimes B)$ to become a separable state~\cite{Steiner2003}, which is given by
\begin{align}\label{eq:robustness}
    \begin{split}
        R(\rho_{AB}) := \inf \biggl\{ \lambda \,\bigg|\,& \frac{\rho_{AB} + \lambda \omega_{AB} }{1+\lambda} \in S(A,B),\\
        &\omega_{AB} \in D(A\otimes B), \lambda \in \mathbb{R}_{\ge 0}  \biggr\}.
    \end{split}
\end{align}
Thus, $R(\rho_{A,B})=0$ if and only if $\rho_{A,B}\in S(A,B)$.
Moreover, for bipartite states $\rho_{A,B}$, the robustness value is upper bounded by $R(\rho_{A,B})\le \min(\dim(A),\dim(B))-1$~\cite{Steiner2003,Vidal1999}.

In the following, let $A$ and $B$ each comprise $n$ qubits, denoted as $A = A_0 \otimes ... \otimes A_{n-1}$ and $B = B_0 \otimes ... \otimes B_{n-1}$, where $A_i$ and $B_i$ represent the two-dimensional Hilbert spaces of individual qubits. 
We use $\ket{\Phi_n}_{AB}$ to represent the maximally entangled state between $A$ and $B$ with $R(\Phi_n) = 2^n-1$, defined as
\begin{align}\label{eq:max_ent_state}
\ket{\Phi_n}_{AB} &= \frac{1}{\sqrt{2^n}}\sum_{i=0}^{2^n-1} \ket{i}_{A}\otimes\ket{i}_{B} \\
&= \bigotimes_{i=0}^{n-1} \frac{1}{\sqrt{2}}(\ket{0}_{A_i}\otimes\ket{0}_{B_i} + \ket{1}_{A_i}\otimes\ket{1}_{B_i})\\ 
&= \bigotimes_{i=0}^{n-1} \ket{\Phi_1}_{A_iB_i}.\label{eq:n_bell_pairs}
\end{align}
The state $\ket{\Phi_n}_{AB}$ can be constructed using $n$ maximally entangled qubit pairs between $A$ and $B$, as shown in \Cref{eq:n_bell_pairs}.
To avoid subscript collisions in the density operator notation, we use either $\Phi_{AB}$, where its size $n$ is implicitly given by the systems $A$ and $B$, or $\Phi_n$ if the systems are clear from the context or irrelevant. 

In the following, we describe pure NME states used in our wire cutting procedure and give an explicit expression for their robustness of entanglement.
By making use of the Schmidt decomposition~\cite{Nielsen2009}, any entangled pure $2n$-qubit state $\ket{\psi}$ can be expressed as 
\begin{align}\label{eq:schmidt}
    \ket{\psi}_{AB} &= \sum_{i=0}^{2^n-1} \alpha_i \ket{\xi_i}_{A} \otimes \ket{\zeta_i}_{B} 
\end{align}
with $\alpha_i \in \mathbb{R}_{\geq 0}$, $\sum_{i} \alpha_i^2 = 1$, and orthonormal sets of $n$-qubit states $\{\ket{\zeta_i}\}$ and $\{\ket{\xi_i}\}$.
We refer to $\alpha := (\alpha_0, ..., \alpha_{2^n-1})$ as the \emph{Schmidt vector} consisting of the Schmidt coefficients.
It is worth noting that the Schmidt decomposition of any bipartite pure state can be computed on near-term quantum devices, given a sufficient number of instances of the state~\cite{BravoPrieto2020}.
For a pure state $\ket{\psi}_{AB}$, the robustness of entanglement is completely defined by its Schmidt vector $\alpha$~\cite{Steiner2003}:
\begin{align}\label{eq:robustness_pure_states}
    R(\psi_{AB}) &= \left(\sum_{i=0}^{2^n-1} \alpha_i\right)^2 - 1.
\end{align}

The representation in \Cref{eq:schmidt} shows that any $2n$-qubit state can be reformulated as 
\begin{align}\label{eq:Psi_alpha}
    \ket{\psi}_{AB} &= (U_A \otimes U_B) \ket{\Psi^{\alpha}}_{AB}.
\end{align}
Herein, $U_A= \sum_{i=0}^{2^n-1} \ketbra{\xi_i}{i}$ and $U_B = \sum_{i=0}^{2^n-1} \ketbra{\zeta_i}{i}$ are local unitary basis transformations from the computational basis into the basis of the Schmidt decomposition, and $\ket{\Psi^{\alpha}}$ is 
\begin{align}\label{eq:nme}
    \ket{\Psi^{\alpha}}_{AB} = \sum_{i=0}^{2^n-1} \alpha_i \ket{i}_{A} \otimes \ket{i}_{B}. 
\end{align}
Thus, the results obtained with the states $\ket{\Psi^{\alpha}}_{AB}$ can be extended to arbitrary pure states by applying the local unitary transformation $U_A \otimes U_B$ given in \Cref{eq:Psi_alpha}.
Therefore, in the following, we solely focus on NME qubit pairs $\ket{\Psi^{\alpha}}_{AB}$  of the form in \Cref{eq:nme}.

\subsection{Mutually Unbiased Bases}\label{sec:mubs}
Two orthonormal bases denoted as $\{\ket{u_j}\}_{j=0}^{d-1}$ and $\{\ket{v_j}\}_{j=0}^{d-1}$, of a $d$-dimensional Hilbert space are called mutually unbiased if all fidelities between their elements, which measure the closeness between these elements, have the same magnitude~\cite{Wootters1989}:
\begin{align}
\left|\braket{u_j|v_k}\right|^2 = \frac{1}{d} \quad \forall j,k \in \{0, ...,d-1\}.
\end{align}
This implies that if a system is prepared in a state from the first basis, then measuring any state in the second basis is equally likely.
This concept can be generalized to a set of $m$ different orthonormal bases $\{\mathcal{B}_j\}_{j=0}^{m-1}$  with $\mathcal{B}_j=\{\ket{e^j_l}\}_{l=0}^{d-1}$ where $\ket{e^j_l}$ is the $l$-th element of basis $\mathcal{B}_j$.
This set of bases is called mutually unbiased if all bases are pairwise mutually unbiased.

In this work, we use the specific construction from Durt et al.~\cite{Durt2010}, which is based on Galois fields introduced in \Cref{sec:galois_fields}, to generate a set of $2^n +1$ MUBs~$\{\mathcal{B}_j\}_{j=0}^{2^n}$ containing the computational basis as $\mathcal{B}_{2^n}$.
Each basis consists of $2^n$ elements, and the construction details are provided in \Cref{sec:phase_shift_operator}.
To facilitate the cutting of $n$ wires, we will employ $2^n$ unitary basis transformations $\{U_j\}_{j=0}^{2^n-1}$, each of which transforms the computational basis into a distinct MUB $\mathcal{B}_j$.
These transformations are defined as
\begin{align}\label{eq:mub_basis_transformation}
    U_j = \sum_{l=0}^{2^n-1}\ketbra{e_l^j}{l}.
\end{align}

An efficient algorithm exists for constructing these unitary basis transformations for $n$-qubit systems with $\mathcal{O}(n^3)$ time complexity~\cite{Wang2023a}. 
The corresponding circuits involve a maximum of $(n^2 + 7n)/2$ gates, including only Hadamard $H$, phase $S$, and $CZ$ gates.

\subsection{Quasiprobability Simulation of Non-Local Operators}\label{sec:quasi_prob_sim}
This section introduces the concept of quasiprobability simulations of non-local operators, a framework in which all circuit cutting techniques can be understood~\cite{Brenner2023}.
Within this framework, cutting a circuit distributed across subsystems $A$ and $B$ involves replicating its outcome by probabilistically substituting the non-local operators acting between these subsystems with LOCC~\cite{Piveteau2023}.

For this probabilistic substitution of the non-local operator \mbox{$\mathcal{E}\in \cptn(A\otimes B)$}, it is decomposed as
\begin{align}\label{eq:QPD}
    \mathcal{E} = \sum_{i=1}^{m} c_i \mathcal{F}_i
\end{align}
with $\mathcal{F}_i\in \locc(A, B)$~\cite{Piveteau2023}.
The coefficients $c_{i}\in \mathbb{R}$ are permitted to be negative but must satisfy the condition $\sum_{i=1}^{m} c_{i} = 1$. 
Due to possible negative coefficients, this is termed a quasiprobability distribution.
Consequently, such a decomposition of the operator $\mathcal{E}$ is known as \emph{quasiprobability decomposition~(QPD)}.
\looseness=-1

Using this QPD, the expectation value of the non-local operator $\mathcal{E}$, applied to a quantum state $\rho$ and measured with observable $O$, can be formulated as
\begin{align}\label{eq:expectation}
\tr[O\mathcal{E}(\rho)] = \kappa\sum_{i=1}^{m} p_{i} \tr[O\mathcal{F}_{i}(\rho)] \operatorname{sign}(c_i) 
\end{align}
using the probability distribution $p_i := |c_{i}|/\kappa$ with $\kappa := \sum_i |c_i|$ and denoting the sign of coefficient $c_i$ by $\operatorname{sign}(c_i)$~\cite{Brenner2023}.
\Cref{eq:expectation} enables estimating the expectation value $\tr[O\mathcal{E}(\rho)]$ via a Monte Carlo simulation  using only LOCC operators $\mathcal{F}_{i}\in \locc(A,B)$ as follows~\cite{Brenner2023}.
For each shot of the simulation, an index $i$ is randomly chosen with probability $p_{i}$, and the circuits corresponding to $\mathcal{F}_{i}$ are executed.
The measurement outcome, determined by the observable $O$, is then weighted by $\operatorname{sign}(c_i) \kappa$ and stored. 
Repeating this process and summing the weighted outcomes provides an estimate of $\tr[O\mathcal{E}(\rho)]$.

However, while this estimator preserves the expectation value, it increases the variance by a factor $\kappa$, known as the \emph{sampling overhead}.
To achieve a fixed statistical error $\epsilon$, the number of shots required scales as $\mathcal{O}(\kappa^2/\epsilon^2)$~\cite{Temme2017}.
Thus, finding a QPD with minimal sampling overhead $\kappa$ is crucial.
The optimal sampling overhead for decomposing $\mathcal{E}$ into operators $\mathcal{F}_i\in \locc(A,B)$ is denoted by $\gamma(\mathcal{E})$~\cite{Piveteau2023}.
It is defined as \looseness=-1
\begin{equation}\label{eq:min_sampling_overhead}
    \begin{split}
        \gamma(\mathcal{E}) := \min\biggl\{\sum_i |c_i| \,\bigg|\,&  \mathcal{E} = \sum_i c_i \mathcal{F}_i, c_i\in\mathbb{R},\\
        &\mathcal{F}_i \in \locc(A,B)\biggr\}.
    \end{split}
\end{equation}

When this method is applied to a circuit containing multiple non-local operators $\{\mathcal{E}_i\}_{i=0}^{n-1}$ that require decomposition, the resulting sampling overhead, achieved by utilizing the optimal QPD for each operator individually,  is given by the product of their individual sampling overheads~\cite{Brenner2023}, that is,
\begin{align}
    \prod_{i=0}^{n-1}\gamma(\mathcal{E}_i).
\end{align}

\subsection{Quasiprobability Simulation of Maximally Entangled States}\label{sec:quasi_prob_sim_states}
Quasiprobability simulation has been introduced to simulate non-local quantum operations by decomposing their superoperator into a linear combination of local operations and classical communication. 
Based on this concept, the measurement statistics of the maximally entangled state $\Phi_{AB}$ can be replicated by sampling only from NME states. 
The key idea is to find a superoperator $\mathcal{E}\in \cptn(A\otimes B)$ that converts an given NME state $\rho_{AB}$ into $\Phi_{AB}$, i.e., $\mathcal{E}(\rho_{AB})=\Phi_{AB}$~\cite{Piveteau2023}.
Since $\mathcal{E}$ increases the entanglement, it is not from the set $\locc(A,B)$~\cite{Chitambar2014}.

\begin{figure*}
    \centering
    \includegraphics{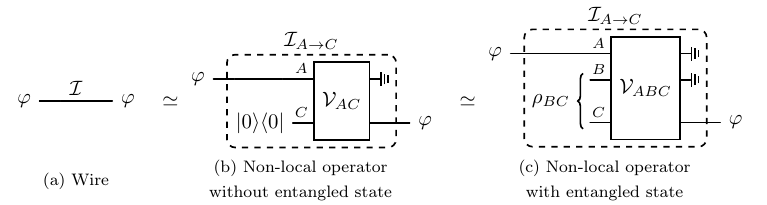}
    \caption{A wire (a) can be represented by the identity $\mathcal{I}_{A\rightarrow C}$ between systems $A$ and $C$, with a non-local operator $\mathcal{V}$. In (b), no entangled state is consumed by $\mathcal{V}$ and $\mathcal{I}_{A\rightarrow C}$ is obtained by tracing out $A$, symbolized by $\begin{quantikz}[column sep=4pt]\qw&\ground{}\end{quantikz}$~\cite{Brenner2023}. In (c), $\mathcal{V}$ consumes an NME state $\rho$ to produce $\mathcal{I}_{A\rightarrow C}$ when both $A$ and $B$ are traced out~\cite{Bechtold2024}. Adapted from~\cite{Bechtold2024}.}
    \label{fig:wire_cut_model}
\end{figure*}

Nevertheless, a QPD of $\mathcal{E}$ as given in \Cref{eq:QPD} can be used to decompose the transformation in a linear combination of operators from $\locc(A,B)$.
Based on this, a QPD of the maximally entangled state $\Phi_{AB}$ employing the NME state $\rho_{AB}$ can be formulated as
\begin{align}
    \Phi_{AB} &= \mathcal{E}(\rho_{AB})\\
    &= \sum_{i=1}^m c_i \mathcal{F}_i(\rho_{AB}) \label{eq:QPD_state}
\end{align}
where $\mathcal{F}_i \in \locc(A,B)$, and therefore the entanglement of the NME state $\mathcal{F}_i(\rho_{AB})$ does not exceed that of $\rho_{AB}$, i.e., $R(\mathcal{F}_i(\rho_{AB}))\le R(\rho_{AB})$.
By employing this QPD for $\Phi_{AB}$, the Monte Carlo simulation from \Cref{eq:expectation} can be used to estimate the expectation value for an observable $O$ by sampling from NME states $\mathcal{F}_i(\rho_{AB})$.

Based on this, the optimal sampling overhead for the quasiprobability simulation of the maximally entangled state $\Phi_{AB}$, with NME state $\rho_{AB}$ serving as a resource, is defined as
\begin{align}
    \hat{\gamma}^{\rho_{AB}}(\Phi_{AB}) := \min\left\{ \gamma(\mathcal{E}) \,\middle|\,   \mathcal{E}(\rho_{AB}) = \Phi_{AB} \right\}.
\end{align}
According to Takagi et al.~\cite[Proposition 7]{Takagi2024}, this is given by
\begin{align}\label{eq:optimal_overhead}
    \hat{\gamma}^{\rho_{AB}}(\Phi_{AB}) = \frac{2^{n+1}}{R(\rho_{AB}) + 1} - 1,
\end{align}
where $R(\rho_{AB})$ represents the generalized robustness of entanglement as defined in \Cref{eq:robustness}. 

This formulation also encompasses the particular scenario where separable states are employed to simulate the maximally entangled state $\Phi_{AB}$, which was studied in \cite{Brenner2023,Piveteau2023}.
In such instances, the optimal sampling overhead, denoted as $\hat{\gamma}(\Phi_{AB})$, is defined as
\begin{align}
    \hat{\gamma}(\Phi_{AB}) :&= \hat{\gamma}^{(\ketbras{00})^{\otimes n}}(\Phi_{AB}) \\
    &= 2^{n+1} - 1 \label{eq:sampling_overhead_sep_state}
\end{align}
where \Cref{eq:sampling_overhead_sep_state} results from the fact that $R((\ketbras{00})^{\otimes n})=0$.

\subsection{Wire Cutting}\label{sec:wire_cutting}

This section introduces wire cuts with and without NME states.
Moreover, for the scenario without NME states, an optimal QPD for parallel wire cutting is presented.

\subsubsection{Wire Cuts without NME states}
Wire cutting is the quasiprobability simulation of a non-local identity operator $\mathcal{I}_{A \rightarrow C}$, enabling the transfer of a state from the Hilbert space $A$ of a sending quantum device to the Hilbert space $C$ of another receiving device. 
The label $B$ for the Hilbert space is reserved to denote the sender's part of a shared entangled state later on.
The identity operator $\mathcal{I}_{A \rightarrow C}$ is realized by applying a non-local operator $\mathcal{V}$ and subsequently tracing out the sender's systems~\cite{Brenner2023,Bechtold2024}, as depicted in \Cref{fig:wire_cut_model}.
Thus, the minimal sampling overhead for $\mathcal{I}_{A \rightarrow C}$ corresponds to the minimal sampling overhead for quasiprobabilistically simulating such a $\mathcal{V}$.

For a wire cut without entanglement, the operator $\mathcal{V}$ operates on $A\otimes C$, as shown in \Cref{fig:wire_cut_model}~(b), and hence is denoted as $\mathcal{V}_{AC}$. 
This operator acts as the non-local identity $\mathcal{I}_{A \rightarrow C}$ between systems $A$ and $C$ once the sender's system $A$ is traced out~\cite{Brenner2023}:
\begin{align}\label{eq:id_without_NME}
\forall \varphi \in D(A): \operatorname{Tr}_{A}[\mathcal{V}_{AC}(\varphi \otimes \ketbra{0}{0}_C)] = \varphi,
\end{align}
where $\ketbra{0}{0}_C$ refers to the initial state of Hilbert space $C$.
As the state $\varphi$ transitions between different systems, its Hilbert space subscript is omitted. 
By employing a QPD for the non-local operator $\mathcal{V}_{AC}$, i.e., $\mathcal{V}_{AC}=\sum_i c_i \mathcal{F}_i$ with $\mathcal{F}_i \in \locc(A,C)$, wire cutting can be understood as the quasiprobabilistic simulation of the identity operator $\mathcal{I}_{A \rightarrow C}$: 
\begin{align}\label{eq:qpd_without_NME}
\forall \varphi \in D(A): \sum_i c_i \operatorname{Tr}_{A}[\mathcal{F}_i(\varphi \otimes \ketbra{0}{0}_C)] = \varphi.
\end{align}
Recent work established that the minimal sampling overhead for cutting a single wire, where $A$ and $C$ are two-dimensional Hilbert spaces, is $\gamma(\mathcal{I_{A \rightarrow  C}})=3$~\cite{Brenner2023}.
Henceforth, we will define the sampling overhead of the non-local identity operator $\mathcal{I}_{A \rightarrow C}$ simply as $\gamma(\mathcal{I})$, omitting the Hilbert space notation for brevity.

A parallel cut is a joint wire cut that involves the simultaneous cutting of $n$ wires in the same time slice of the circuit.
This process is described by the non-local quasiprobability simulation of the $n$-qubit identity operator $\mathcal{I}^{\otimes n}_{A \rightarrow C}$, where $A$ and $C$ are $2^n$-dimensional Hilbert spaces. 
The optimal sampling overhead for such a parallel cut is thus determined by $\gamma(\mathcal{I}^{\otimes n}) = 2^{n+1} - 1$~\cite{Brenner2023}.
As a result, a single joint wire cut of multiple wires has a lower sampling overhead compared to optimal individual cuts per wire when using classical communication without entanglement, since $\gamma(\mathcal{I}^{\otimes n}) = 2^{n+1} - 1 < 3^n = \left(\gamma(\mathcal{I})\right)^n$ for $n \ge 2$.

\begin{figure*}
    \centering
    \includegraphics{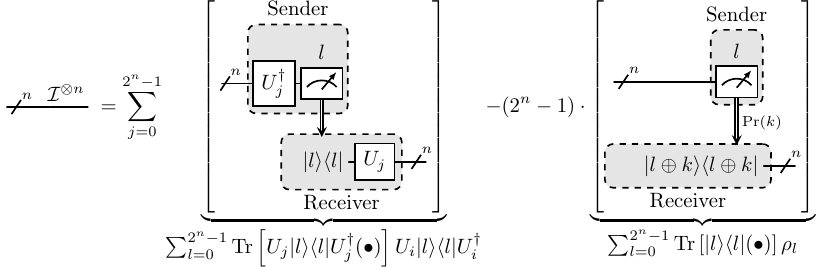}
    \caption{Optimal parallel cut of $n$ wires~\cite{Harada2023}.}
    \label{fig:wire_cut_harada}
\end{figure*}

\subsubsection{Wire Cuts with NME state}
Using NME states for a wire cut requires an extra Hilbert space $B$, with the sender controlling $A\otimes B$ and the receiver $C$.
The shared NME state $\rho_{BC}$ enables the use of a non-local operator $\mathcal{V}_{ABC}$ as depicted in \Cref{fig:wire_cut_model}~(c).
Analogously to \Cref{eq:id_without_NME}, $\mathcal{V}_{ABC}$ mimics the non-local identity $\mathcal{I}_{A \rightarrow C}$ between $A$ and $C$ when the sender's system $A\otimes B$ is traced out~\cite{Bechtold2024}:
\begin{align}\label{eq:non_local_V_with_NME}
        \forall \varphi \in D(A):\operatorname{Tr}_{AB}[\mathcal{V}_{ABC}(\varphi\otimes \rho_{BC})] = \varphi.
\end{align}
Similarly to \Cref{eq:qpd_without_NME}, wire cutting with NME state $\rho$ simulates the identity operator $\mathcal{I}_{A \rightarrow C}$ quasiprobabilistically.
A QPD for the non-local operator $\mathcal{V}_{ABC}$, given by $\mathcal{V}_{ABC}=\sum_i c_i \mathcal{F}_i$ with $\mathcal{F}_i \in \locc(A\otimes B,C)$,  is used.
Consequently, similar to \Cref{eq:min_sampling_overhead}, the optimal sampling overhead can be defined as
\begin{equation}\label{eq:min_sampling_overhead_wire_nme}
    \begin{split}
        \gamma^{\rho}(\mathcal{I}) := \min\biggl\{\sum_i |c_i|  \,\bigg|\,&  \mathcal{V} = \sum_i c_i \mathcal{F}_i,\, c_i\in\mathbb{R},\\
        &\mathcal{F}_i \in \locc(A\otimes B, C), \\
        &\forall \varphi \in D(A):\\
        &\operatorname{Tr}_{AB}[\mathcal{V}_{ABC}(\varphi\otimes \rho_{BC})] = \varphi\biggr\},
    \end{split}
    \raisetag{81pt}
\end{equation}
which represents the minimal sampling overhead of all QPDs for $\mathcal{V}_{ABC}$ that satisfy \Cref{eq:non_local_V_with_NME}~\cite{Bechtold2024}.
Using the fidelity of distillation~\cite{Regula2020}, another entanglement monotone, previous work established that the optimal sampling overhead for cutting a single wire with NME state is $\gamma^{\rho}(\mathcal{I})=\frac{4}{R(\rho)+1} -1$~\cite{Bechtold2023}.
Thus, the optimal sampling overhead depends on the entanglement in $\rho$.
An increase in the robustness value $R(\rho)$ leads to a decrease in the sampling overhead.
The generalization to the optimal sampling overhead for parallel cuts of multiple wires with NME states is later established in this work.

\begin{figure*}[ht]
    \centering
    \includegraphics{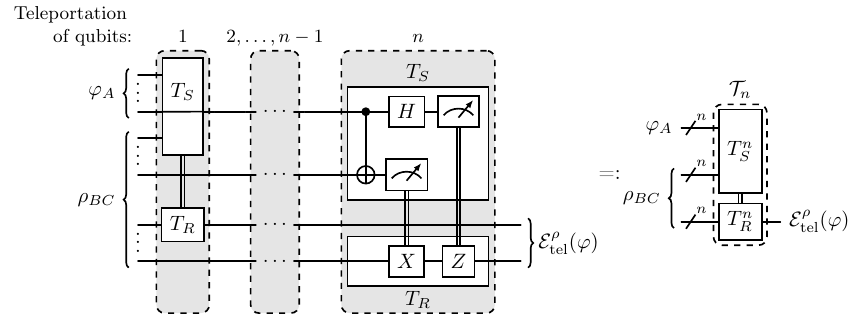}
    \caption{Teleportation circuit for $n$ qubits constructed from $n$ single-qubit teleportation protocols (left). Each single-qubit teleportation is surrounded by a gray box and involves the sender's ($T_S$) and receiver's ($T_R$) operations, shown in detail for the $n$-th teleportation. A condensed notation of the $n$-qubit teleportation is depicted on the right.}
    \label{fig:n_qubit_teleportation}
\end{figure*}

\subsubsection{An Optimal QPD without NME states}
A QPD for cutting $n$ parallel wires, achieving the optimal sampling overhead without any ancilla qubits, is provided by Harada et al.~\cite{Harada2023}:
\begin{equation}\label{eq:qpd_harada}
    \begin{split}
    \mathcal{I}^{\otimes n}(\bullet) =&\sum_{j=0}^{2^n-1} \sum_{l=0}^{2^n-1} \tr\left[ U_j\ketbras{l}U_j^{\dagger}(\bullet) \right] U_j\ketbras{l}U_j^{\dagger}\\
    &- \left(2^n-1\right)\sum_{l=0}^{2^n-1} \tr\left[ \ketbras{l}(\bullet) \right]\rho_l,
    \end{split}
    \raisetag{21pt}
\end{equation}
where 
\begin{align}
    \rho_l = \sum_{k=1}^{2^n-1}\Pr(k)\ketbras{l \oplus k}
\end{align}
and the probability of selecting $k$ from $0<k<2^n$ is uniformly distributed as:
\begin{equation}
\Pr(k) := \frac{1}{2^n - 1}.
\end{equation}
The unitary operators in the set $\{U_i\}_{i=0}^{2^n-1} \cup \{I^{\otimes n}\}$ are exactly the transformations that transform the computational basis into the complete set of $2^n+1$ different MUBs.
A specific construction for these unitary operators is detailed in \Cref{eq:mub_basis_transformation}.
However, this QPD is not limited to this specific construction and applies to any set of unitary operators that transform the computational basis into a complete set of MUBs~\cite{Harada2023}.
\Cref{fig:wire_cut_harada} depicts the circuits realizing this QPD.
Each sum over $l$ from \Cref{eq:qpd_harada} encompasses all possible measurement outcomes and is therefore implemented by a corresponding measurement in the circuits, resulting in a total of $2^n +1$ different circuits.

\subsection{Teleportation of Multi-Qubit States}\label{sec:teleportation}
In contrast to the quasiprobabilistic simulation of the transmission of a qubit's state, quantum teleportation actually transfers the state utilizing entanglement~\cite{Bennett1993}.
This protocol uses local quantum operations and communicates only two classical bits to transmit a qubit's state.
An error-free state transfer of a qubit requires a shared pair of maximally entangled qubits $\Phi_1$ between the sender and receiver~\cite{Prakash2012}.

To accurately teleport an $n$-qubit state without errors, a maximally entangled resource state of $2n$ qubits, i.e. $\Phi_n$, is required.
The teleportation can be optimally conducted through $n$ individual single-qubit teleportations~\cite{Chen2006}, each consuming a maximally entangled qubit pair $\Phi_1$ of $\Phi_n$.
This construction of the $n$-qubit teleportation circuit is depicted on the left side of \Cref{fig:n_qubit_teleportation}, using an arbitrary $2n$-qubit resource state $\rho_{BC}$, which must equal $\Phi_n$ to achieve optimal teleportation.
Individual single-qubit teleportations are represented by their condensed sender and receiver operations, $T_S$ and $T_R$, respectively, with the detailed implementation shown for the final teleportation procedure.
The right side of \Cref{fig:n_qubit_teleportation} shows a compact representation of the constructed $n$-qubit teleportation scheme, consolidating all sender and receiver operations into the transformations $T_S^n$ and $T_R^n$, respectively.
The complete $n$-qubit teleportation protocol is encapsulated by the operator $\mathcal{T}_n$, which operates on the $n$-qubit state $\varphi_A$ to be teleported and the $2n$-qubit resource state $\rho_{BC}$, represented as $\mathcal{T}_n(\varphi_A \otimes \rho_{BC})$.

When employing quantum teleportation with an arbitrary resource state $\rho_{BC}$, which may not be maximally entangled, errors occur and the resulting state is described according to Gu et al.~\cite{Gu2004} by
\begin{align}\label{eq:tele}
    \mathcal{E}_{\text{tel}}^{\rho}(\varphi) &= \operatorname{Tr}_{AB}\left[\mathcal{T}_n(\varphi_A \otimes \rho_{BC})\right] \\
    &=\sum_{\sigma \in \{I, X, Y, Z\}^{\otimes n}}\braket{\Phi^{\sigma}|\rho|\Phi^{\sigma}}\sigma\varphi\sigma.\label{eq:tele2}
\end{align}
Here, $\ket{\Phi^{\sigma}}$ represents the generalized Bell basis states associated with the $n$-qubit Pauli operators  $\sigma = \bigotimes_{i=0}^{n-1} \sigma_i$ with $\sigma_i \in \{I, X, Y, Z\}$. 
These generalized Bell states are defined as
\begin{align}\label{eq:phi_sigma}
    \ket{\Phi^{\sigma}} &= (\sigma \otimes I^{\otimes n})\ket{\Phi_n} \\
                        &= \bigotimes_{i=0}^{n-1} (\sigma_i \otimes I)\ket{\Phi_{1}}.
\end{align}
As a result, the occurrence of Pauli errors during teleportation depends on the overlap $\braket{\Phi^{\sigma}|\rho|\Phi^{\sigma}}$ between the resource state $\rho$ and the various generalized $n$-qubit Bell states $\Phi^{\sigma}$.

\section{Parallel Wire Cutting with NME states}\label{sec:main}
This section presents our results on how NME states can be used for joint cuts of multiple wires.
First, we introduce the optimal sampling overhead for parallel wire cuts when NME states are employed and examine the consequences of utilizing composite NME states, which are constructed from the combination of smaller NME states.
Although the results of the optimal sampling overhead are formulated in the framework of parallel wire cuts, we argue in \Cref{sec:discussion_arb_cuts} that they are also applicable to wire cuts at arbitrary circuit positions.
Following this, we propose an optimal QPD specifically designed for parallel wire cuts using pure NME states, and we demonstrate its effectiveness even when the available NME state is of a smaller dimension than that required for the wire cut.

\subsection{Optimal Sampling Overhead for a Parallel Wire Cut with NME states}\label{sec:parallel_overhead}
In scenarios where NME states are not employed in a wire cut, the sampling overhead required for the joint cutting of $n$ wires via classical communication aligns with the overhead observed in the simulation of a $2n$-qubit maximally entangled state~\cite{Brenner2023}.
Moreover, as discussed in \Cref{sec:quasi_prob_sim_states}, the sampling overhead for quasiprobabilistic simulation of maximally entangled states decreases when NME states are utilized compared to separable states~\cite{Takagi2024}. 
Building upon these insights, our prior research established that the sampling overhead for cutting a single wire reduces when using NME states, and it is consistent with the overhead for simulating a pair of maximally entangled qubits with an NME state~\cite{Bechtold2024}.
The following theorem extends these findings to the parallel cut of $n$ wires using an NME state. 

\begin{theorem}\label{theorem_overhead}
The optimal sampling overhead for a parallel wire cut of an $n$-qubit identity operator $\mathcal{I}^{\otimes n}$ using classical communication and an arbitrary $2n$-qubit NME resource state described by density operator $\rho$ is  
\begin{equation}
   \gamma^{\rho}(\mathcal{I}^{\otimes n}) = \frac{2^{n+1}}{R(\rho)+1} - 1.
\end{equation}
\end{theorem}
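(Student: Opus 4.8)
The plan is to prove \Cref{theorem_overhead} by reducing the parallel wire cut to the quasiprobabilistic simulation of the maximally entangled state $\Phi_n$ from the resource $\rho$, whose optimal sampling overhead $\hat{\gamma}^{\rho}(\Phi_n)$ is already known from \Cref{eq:optimal_overhead}. Concretely, I would establish the identity
\[
\gamma^{\rho}(\mathcal{I}^{\otimes n}) = \hat{\gamma}^{\rho}(\Phi_n)
\]
by proving the two inequalities separately. Since $\Phi_n$ is a $2n$-qubit state, the claimed value $\frac{2^{n+1}}{R(\rho)+1}-1$ then follows immediately from \Cref{eq:optimal_overhead}. Throughout, I let $A$, $B$, $C$, and an auxiliary reference $A'$ each denote $n$-qubit systems, so that $\Phi_{BC}$, $\Phi_{A'A}$, and $\Phi_{A'C}$ are all copies of $\Phi_n$.

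For the upper bound $\gamma^{\rho}(\mathcal{I}^{\otimes n}) \le \hat{\gamma}^{\rho}(\Phi_n)$, I would start from an optimal simulation of the maximally entangled state, $\Phi_{BC} = \sum_i c_i \mathcal{F}_i(\rho_{BC})$ with $\mathcal{F}_i \in \locc(B,C)$ and $\sum_i|c_i| = \hat{\gamma}^{\rho}(\Phi_n)$, which exists in the QPD-of-state form of \Cref{eq:QPD_state}. Feeding this simulated resource into the exact $n$-qubit teleportation protocol $\mathcal{T}_n$ of \Cref{fig:n_qubit_teleportation}, which reduces to the identity channel when the resource is $\Phi_n$ as seen from \Cref{eq:tele}, and setting $\tilde{\mathcal{F}}_i := \mathcal{T}_n \circ (\mathcal{I}_A \otimes \mathcal{F}_i)$, linearity yields $\operatorname{Tr}_{AB}\bigl[\sum_i c_i \tilde{\mathcal{F}}_i(\varphi_A \otimes \rho_{BC})\bigr] = \varphi$ for every $\varphi$. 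Because $\mathcal{T}_n \in \locc(A\otimes B, C)$ and $\mathcal{F}_i \in \locc(B,C) \subseteq \locc(A \otimes B, C)$, each $\tilde{\mathcal{F}}_i$ lies in $\locc(A\otimes B,C)$, so $\mathcal{V}_{ABC} := \sum_i c_i \tilde{\mathcal{F}}_i$ is an admissible wire-cut QPD in the sense of \Cref{eq:non_local_V_with_NME,eq:min_sampling_overhead_wire_nme} with cost $\sum_i |c_i| = \hat{\gamma}^{\rho}(\Phi_n)$, giving the bound.

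For the reverse inequality I would exploit channel--state duality. Given any wire-cut QPD $\mathcal{V}_{ABC} = \sum_i c_i \mathcal{F}_i$ with $\mathcal{F}_i \in \locc(A\otimes B, C)$ satisfying \Cref{eq:non_local_V_with_NME}, the induced map $\Lambda : \varphi \mapsto \operatorname{Tr}_{AB}[\mathcal{V}_{ABC}(\varphi_A \otimes \rho_{BC})]$ equals the identity channel $\mathcal{I}_{A\to C}$. Applying $\mathcal{I}_{A'} \otimes \Lambda$ to $\Phi_{A'A}$ therefore produces $\Phi_{A'C}$, and expanding $\Lambda$ via the QPD gives $\Phi_{A'C} = \sum_i c_i \mathcal{G}_i(\rho_{BC})$ with $\mathcal{G}_i(\rho_{BC}) := \operatorname{Tr}_{AB}[(\mathcal{I}_{A'} \otimes \mathcal{F}_i)(\Phi_{A'A} \otimes \rho_{BC})]$. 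Each $\mathcal{G}_i$ consists of a local preparation of $\Phi_{A'A}$ on the sender, the LOCC map $\mathcal{F}_i$, and a local partial trace over the sender's $A,B$; hence $\mathcal{G}_i \in \locc$ for the split with the sender holding $A'$ and the receiver holding $C$. Thus $\sum_i c_i \mathcal{G}_i$ is a legitimate simulation of $\Phi_n$ from $\rho$ of cost $\sum_i |c_i|$, and minimizing over wire-cut QPDs yields $\hat{\gamma}^{\rho}(\Phi_n) \le \gamma^{\rho}(\mathcal{I}^{\otimes n})$.

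The main obstacle I anticipate is not the arithmetic but the careful bookkeeping of the bipartition in each construction: one must verify that every composed or auxiliary-augmented map genuinely stays within $\locc$ for the correct sender/receiver split --- in particular that prepending a local maximal-entanglement preparation and appending a local partial trace preserve LOCC membership, and that $\mathcal{T}_n$ contributes only a Bell measurement on the sender together with a classically conditioned correction on the receiver. A secondary point to pin down is the equivalence between the superoperator-based definition of $\hat{\gamma}^{\rho}(\Phi_n)$ and its QPD-of-state form used above, which I would take from the discussion around \Cref{eq:QPD_state,eq:optimal_overhead}.
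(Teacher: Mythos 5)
Your proposal is correct and follows essentially the same route as the paper's proof: both directions establish $\gamma^{\rho}(\mathcal{I}^{\otimes n}) = \hat{\gamma}^{\rho}(\Phi_n)$ by (i) feeding an optimal state-QPD for $\Phi_{BC}$ into the teleportation protocol $\mathcal{T}_n$ for the upper bound, and (ii) applying the wire-cut QPD to half of a locally prepared maximally entangled state on a reference system to synthesize $\Phi_n$ from $\rho$ for the lower bound, then invoking \Cref{eq:optimal_overhead}. The only cosmetic difference is that the paper realizes your $\Phi_{A'A}$ preparation explicitly via a local preparation operator on $A\otimes B$ followed by a swap with the reference system $E$, precisely to make the LOCC bookkeeping you flag as the main obstacle fully explicit.
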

The detailed proof that reduces the optimal sampling overhead $\gamma^{\rho}(\mathcal{I}^{\otimes n})$ of a parallel cut of $n$ wires to the sampling overhead $\hat{\gamma}^{\rho}(\Phi_n)$ of simulating the maximally entangled $2n$ qubit state is provided in \Cref{sec:proof_overhead}.

The NME state $\rho$ discussed in \Cref{theorem_overhead} represents an arbitrary bipartite $2n$-qubit state.
However, future efforts in entanglement generation between quantum devices are likely to prioritize the creation of smaller entangled states, e.g. qubit pairs~\cite{Zhong2019,Kurpiers2018,Yan2022}, rather than focus on generating whole $2n$-qubit states.
This raises a crucial question: should one utilize a composite state constructed from individual shared entangled states for a single joint wire cut or perform separate cuts for each shared entangled state?
To address this question, we analyze the sampling overhead in the scenario where the objective is to cut $n$ wires with the aid of $m$ NME states that comprise a total of $2n$ qubits.
In this regard, the following theorem shows the benefit of utilizing the composite state in a single parallel cut:
\begin{theorem}\label{theorem_cut_composite_states}
    Let $\{\rho^{(i)}_{A_iB_i}\}_{i=0}^{m-1}$ be a set of $m \ge 2$ density operators with each $\rho^{(i)}_{A_iB_i}\in D(A_i \otimes B_i)$ for the corresponding $2^{k_i}$-dimensional Hilbert spaces $A_i$ and $B_i$ such that $n=\sum_{i=0}^{m-1} k_i$.
    Moreover, let each state $\rho^{(i)}_{A_iB_i}$ be an NME state, i.e., $0 \le R\left(\rho^{(i)}_{A_iB_i}\right) < 2^{k_i} -1$.
    Then, the following inequality holds for the composite state $\rho= \bigotimes_{i=0}^{m-1}\rho^{(i)}_{A_iB_i}$:
    \begin{align}
        \prod_{i=0}^{m-1}\gamma^{\rho^{(i)}_{A_iB_i}}(\mathcal{I}^{\otimes k_i}) > \gamma^{\rho}(\mathcal{I}^{\otimes n}). 
    \end{align}
\end{theorem}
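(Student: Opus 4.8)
The plan is to reduce both sides to closed forms using \Cref{theorem_overhead} and then settle a short algebraic inequality. Applying \Cref{theorem_overhead} to each factor yields $\gamma^{\rho^{(i)}_{A_iB_i}}(\mathcal{I}^{\otimes k_i}) = 2^{k_i+1}/(R(\rho^{(i)}_{A_iB_i})+1)-1$, and applying it to the composite state gives $\gamma^{\rho}(\mathcal{I}^{\otimes n}) = 2^{n+1}/(R(\rho)+1)-1$. Writing $a_i := 2^{k_i}/(R(\rho^{(i)}_{A_iB_i})+1)$, the NME hypothesis $R(\rho^{(i)}_{A_iB_i}) < 2^{k_i}-1$ is exactly $a_i>1$, and the left-hand side becomes $\prod_{i=0}^{m-1}(2a_i-1)$.

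The core is to pin down $R(\rho)$ for $\rho=\bigotimes_i \rho^{(i)}_{A_iB_i}$. The non-strict form of the claim is immediate on the operational side: tensoring the individually optimal QPDs of the factors produces a valid LOCC decomposition for the joint cut, so $\gamma^{\rho}(\mathcal{I}^{\otimes n}) \le \prod_i \gamma^{\rho^{(i)}_{A_iB_i}}(\mathcal{I}^{\otimes k_i})$. To sharpen this to the stated strict inequality through the closed forms, I would establish that the generalized robustness is multiplicative for this product, $R(\rho)+1 = \prod_{i}(R(\rho^{(i)}_{A_iB_i})+1)$, which turns the right-hand side into $2\prod_i a_i - 1$. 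The submultiplicative direction follows from the conic reformulation of \Cref{eq:robustness}, $R(\sigma)+1 = \min\{\tr[\tau] : \tau \succeq \sigma,\ \tau \in \mathrm{cone}(S(A,B))\}$: a tensor product of optimal cone-separable majorants of the factors is a feasible majorant of $\rho$ with trace equal to the product of the factor traces.

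Granting multiplicativity, the theorem reduces to the elementary inequality $\prod_{i=0}^{m-1}(2a_i-1) > 2\prod_{i=0}^{m-1} a_i - 1$ for all $a_i>1$ and $m\ge 2$, which I would prove by induction on $m$. For $m=2$ the difference equals $2(a_0-1)(a_1-1)>0$; for the step, using $\prod_{i<m}(2a_i-1) > 2A-1$ with $A:=\prod_{i<m}a_i>1$ and $2a_m-1>0$, one gets $\prod_{i\le m}(2a_i-1) > (2A-1)(2a_m-1)$, which exceeds $2\prod_{i\le m}a_i-1$ by $2(A-1)(a_m-1)>0$. The strictness is carried entirely by the factors $(a_i-1)$: equality would require some $a_i=1$, i.e. the corresponding $\rho^{(i)}_{A_iB_i}$ maximally entangled, which the hypothesis excludes. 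This is exactly the ``except for maximally entangled parts'' behavior anticipated for the composite cut.

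The main obstacle is the reverse, super-multiplicative, direction $R(\rho)+1 \ge \prod_i(R(\rho^{(i)}_{A_iB_i})+1)$ that is actually responsible for the strict gap, since the generalized robustness is in general only submultiplicative for mixed states and a naive tensor product of the optimal entanglement witnesses of the factors need not remain a valid witness on states entangled across the factor partition. I would approach it through the witness dual of \Cref{eq:robustness}, testing $\rho$ against $2^n\ketbras{\Phi_n}$ with the product choice $\ket{\Phi_n}=\bigotimes_i\ket{\Phi_{k_i}}$ and verifying that, for the resource states of interest, this maximally entangled witness is optimal for each factor so that the resulting lower bound matches the submultiplicative upper bound; controlling this step in full generality, where it is the NME assumption that provides the necessary slack, is the delicate part of the proof.
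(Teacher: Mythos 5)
Your overall architecture matches the paper's: reduce both sides to closed forms via \Cref{theorem_overhead}, lower-bound $R(\rho)$ for the product state, and finish with an elementary inequality. Your algebraic reduction is correct and in fact cleaner than the paper's computation: with $a_i = 2^{k_i}/(R(\rho^{(i)}_{A_iB_i})+1) > 1$, the inequality $\prod_i(2a_i-1) > 2\prod_i a_i - 1$ and its induction (difference $2(A-1)(a_m-1)$ at each step) is exactly the content of the paper's Equations~(\ref{eq:overhead_composite_state_of_two_equality_start})--(\ref{eq:apply_theorem_overhead}). You also correctly identify that only the super-multiplicative direction $R(\rho)+1 \ge \prod_i\bigl(R(\rho^{(i)}_{A_iB_i})+1\bigr)$ is needed (full multiplicativity is not required, since $\gamma^{\rho}$ is decreasing in $R(\rho)$), and your submultiplicativity argument via the conic formulation, while correct, is irrelevant to the claim.

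The genuine gap is that you do not actually establish the super-multiplicative direction, and the route you sketch for it would fail for general mixed states. Testing $\rho$ against the product witness $\bigotimes_i 2^{k_i}\ketbras{\Phi_{k_i}}$ in the dual of \Cref{eq:robustness} only yields $R(\rho)+1 \ge \prod_i 2^{k_i}\bigl\langle\Phi_{k_i}\bigl|\rho^{(i)}_{A_iB_i}\bigr|\Phi_{k_i}\bigr\rangle$, and for mixed $\rho^{(i)}_{A_iB_i}$ the bare overlap $2^{k_i}\langle\Phi_{k_i}|\rho^{(i)}|\Phi_{k_i}\rangle$ can be strictly smaller than $R(\rho^{(i)})+1$ (it saturates only for pure states, cf.\ \Cref{eq:Zm_robustness}); your proposal to ``verify that this maximally entangled witness is optimal for each factor'' cannot succeed for arbitrary density operators, which is what the theorem assumes. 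The paper closes exactly this gap in \Cref{lemma_lower_bound_robustness} by replacing the bare overlap with the \emph{fidelity of distillation} $f_{\locc(A,B)}(\rho) = \max_{\Lambda\in\locc(A,B)}\langle\Phi|\Lambda(\rho)|\Phi\rangle$, which by Regula et al.\ equals $2^{-n}(R(\rho)+1)$ exactly for all states; super-multiplicativity then follows because the tensor product of the individually optimal LOCC maps $\Lambda_i$ is a feasible LOCC map for the composite bipartition, and the overlap with $\Phi_{AB}=\bigotimes_i\Phi_{A_iB_i}$ factorizes. Substituting that lemma for your witness step makes your proof complete.
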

The proof of \Cref{theorem_cut_composite_states} can be found in \Cref{sec:appendix_proof_cut_composite_states}. 
This theorem demonstrates that performing a parallel cut always results in a lower sampling overhead, even when the entangled resource state is a composite state constructed from smaller NME states. 
However, \Cref{theorem_cut_composite_states} does not address scenarios involving maximally entangled states within the composite state. 
This particular situation is the focus of the subsequent theorem.
\begin{theorem}\label{theorem_composite_max_ent}
    Consider an arbitrary bipartite state $\rho_{A_0B_0}\in D(A_0\otimes B_0)$ and a maximally entangled state $\Phi_{A_1B_1}\in D(A_1 \otimes B_1)$, where $A_0$ and $B_0$ are $2^{n_0}$-dimensional Hilbert spaces and $A_1$ and $B_1$ are $2^{n_1}$-dimensional Hilbert spaces. 
    For the sampling overhead of cutting $n$ wires, where $n = n_0 + n_1$, it holds that
    \begin{align}
        \nonumber&\gamma^{\rho_{A_0B_0} \otimes \Phi_{A_1B_1}}(\mathcal{I}^{\otimes n}) \\
        &= \gamma^{\rho_{A_0B_0}}(\mathcal{I}^{\otimes n_0})\gamma^{\Phi_{A_1B_1}}(\mathcal{I}^{\otimes n_1}) \\
        &= \gamma^{\rho_{A_0B_0}}(\mathcal{I}^{\otimes n_0})
    \end{align}
\end{theorem}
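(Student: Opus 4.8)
The plan is to reduce the entire statement, via \Cref{theorem_overhead}, to a single multiplicativity property of the generalized robustness:
\begin{equation*}
R(\rho_{A_0B_0}\otimes\Phi_{A_1B_1})+1 = 2^{n_1}\left(R(\rho_{A_0B_0})+1\right).
\end{equation*}
The second equality of the theorem is then immediate: inserting $R(\Phi_{A_1B_1})=2^{n_1}-1$ into the overhead formula of \Cref{theorem_overhead} gives $\gamma^{\Phi_{A_1B_1}}(\mathcal{I}^{\otimes n_1})=2^{n_1+1}/2^{n_1}-1=1$ (no sampling overhead, reflecting that a maximally entangled resource teleports the $n_1$ wires exactly), so multiplying by it leaves $\gamma^{\rho_{A_0B_0}}(\mathcal{I}^{\otimes n_0})$ unchanged. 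For the first equality, \Cref{theorem_overhead} applied to the $2n$-qubit composite resource turns the claim into exactly the displayed robustness identity, since substituting it yields $\gamma^{\rho\otimes\Phi}(\mathcal{I}^{\otimes n}) = 2^{n_0+1}/(R(\rho)+1)-1 = \gamma^{\rho_{A_0B_0}}(\mathcal{I}^{\otimes n_0})$.

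To prove the robustness identity I would use the reformulation of \Cref{eq:robustness} as $1+R(\tau)=\min\{\mu\ge 0 : \tau\le\mu\sigma,\ \sigma\in S\}$ and argue two inequalities. The upper bound $R(\rho\otimes\Phi)+1\le 2^{n_1}(R(\rho)+1)$ is the straightforward sub-multiplicative direction: taking separable dominators $\rho\le(1+R(\rho))\sigma_\rho$ and $\Phi\le 2^{n_1}\sigma_\Phi$ and tensoring them gives $\rho\otimes\Phi\le 2^{n_1}(1+R(\rho))\,\sigma_\rho\otimes\sigma_\Phi$, where $\sigma_\rho\otimes\sigma_\Phi$ is separable across the $(A_0A_1)\,|\,(B_0B_1)$ cut.

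The reverse, super-multiplicative inequality is the crux, and it is where the maximal entanglement of $\Phi$ is essential; a naive product witness $W_\rho\otimes 2^{n_1}\Phi$ fails the product-state constraint on the joint cut, so I would instead obtain this bound operationally and translate it back through \Cref{theorem_overhead}. Concretely, I would build an explicit QPD for the $n$-wire cut with resource $\rho\otimes\Phi$ of sampling overhead $\gamma^{\rho_{A_0B_0}}(\mathcal{I}^{\otimes n_0})$: apply the optimal $\locc$ QPD $\sum_i c_i\mathcal{F}_i$ realizing the $n_0$-wire cut with resource $\rho_{A_0B_0}$, while teleporting the remaining $n_1$ wires using $\Phi_{A_1B_1}$. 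Since teleportation with a maximally entangled resource is a single deterministic $\locc$ channel that perfectly transmits the $n_1$-qubit state (\Cref{eq:tele2} with resource $\Phi$), the combined decomposition $\sum_i c_i(\mathcal{F}_i\otimes\mathcal{T}_{n_1})$ lies in $\locc(A\otimes B,C)$, satisfies the wire-cut identity \Cref{eq:non_local_V_with_NME}, and has overhead $\sum_i|c_i|=\gamma^{\rho_{A_0B_0}}(\mathcal{I}^{\otimes n_0})$. Hence $\gamma^{\rho\otimes\Phi}(\mathcal{I}^{\otimes n})\le\gamma^{\rho_{A_0B_0}}(\mathcal{I}^{\otimes n_0})$, which by \Cref{theorem_overhead} is precisely the missing bound $R(\rho\otimes\Phi)+1\ge 2^{n_1}(R(\rho)+1)$.

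The main obstacle I anticipate is making this construction airtight: verifying that tensoring an $\locc$ operation with the teleportation channel stays in $\locc(A\otimes B,C)$, that the composite map satisfies the trace condition \Cref{eq:non_local_V_with_NME} even for inputs entangled with the rest of the circuit (so that $\mathcal{I}_{A_0\to C_0}\otimes\mathcal{I}_{A_1\to C_1}=\mathcal{I}_{A\to C}$), and that perfect teleportation contributes a factor of exactly $1$ to the overhead. A direct proof of robustness super-multiplicativity would be an alternative, but it appears markedly harder, which is precisely why routing it through the overhead characterization of \Cref{theorem_overhead} is attractive.
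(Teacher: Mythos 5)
Your proposal is correct, but it reaches the key identity $R(\rho_{A_0B_0}\otimes\Phi_{A_1B_1})+1 = 2^{n_1}(R(\rho_{A_0B_0})+1)$ by a genuinely different route than the paper. The paper's proof invokes \Cref{lemma_robustness_composite_max_ent}, which rests on the additivity of the log-robustness $LR(\rho)=\log(1+R(\rho))$ whenever one tensor factor is a bipartite \emph{pure} state (a cited result of Rubboli and Renes); multiplicativity of $1+R$ then follows for any pure factor, maximally entangled or not, and the theorem is a two-line substitution into \Cref{theorem_overhead}. You instead split the identity into sub-multiplicativity (tensoring optimal separable dominators, which is sound: separability of $\sigma_\rho\otimes\sigma_\Phi$ across the joint cut and positivity of the tensored gaps both check out) and a super-multiplicative bound obtained \emph{operationally}, by exhibiting a QPD of overhead $\gamma^{\rho_{A_0B_0}}(\mathcal{I}^{\otimes n_0})$ that teleports the $n_1$ wires perfectly with $\Phi$ and then pulling the inequality back through \Cref{theorem_overhead}. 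The obstacles you flag are all surmountable — the trace condition for entangled inputs follows by linearity since each marginal map is the identity superoperator, and the tensor product of LOCC maps across compatible bipartitions is LOCC — so the argument closes. What each approach buys: the paper's is shorter and establishes the stronger statement for arbitrary pure factors, but leans on an external additivity theorem; yours is self-contained and exploits only the maximal entanglement of $\Phi$, at the cost of more verification. You might also note that the paper's \Cref{lemma_lower_bound_robustness} (proved via the fidelity of distillation for arbitrary states) already supplies your ``hard'' super-multiplicative direction directly, which would let you skip the operational construction entirely.
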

The proof of \Cref{theorem_composite_max_ent} is provided in \Cref{sec:appendix_proof_cut_composite_states}.
This theorem demonstrates that employing a single parallel cut using the composite state $\rho_{A_0B_0} \otimes \Phi_{A_1B_1}$, which includes a maximally entangled state, yields no advantage in sampling overhead compared to performing separate cuts for the two states.
Thus, maximally entangled states can be treated independently of the remaining NME states in a composite state.
The optimal sampling overhead, in this case, can be achieved by employing $\Phi_{A_1B_1}$ for quantum teleportations of $n_1$ wires and independently performing a joint cut on the remaining $n_0$ wires using $\rho_{A_0B_0}$.

In contrast to the scenario with maximally entangled states of \Cref{theorem_composite_max_ent}, \Cref{theorem_cut_composite_states} allows for parts of the composite resource state to be separable while still achieving a reduction in the sampling overhead.
Consequently, when the entangled resource state is smaller than required for cutting $n$ wires, the sampling overhead of a parallel cut of all $n$ wires, where the entangled resource state is augmented with a separable state, is lower than performing one cut with the NME state and a separate cut without entanglement for the remaining wires.
To exactly quantify the advantage in sampling overhead under these circumstances, consider the following theorem.
\begin{theorem}\label{theorem_composite_separable}
    Consider an entangled state $\rho_{A_eB_e}\in D(A_e\otimes B_e)$, where $A_e$ and $B_e$ are $2^{n_e}$-dimensional Hilbert spaces. 
    To cut $n$ wires, with $n>n_e$, consider the composite state $\rho_{A_eB_e} \otimes \tau_{A_sB_s}$, which augments the entangled state $\rho_{A_eB_e}$ with a separable state $\tau_{A_sB_s} \in S(A_s,B_s)$. 
    Here, $A_s$ and $B_s$ are $2^{n_s}$-dimensional Hilbert spaces, where $n_s= n-n_e$.
    The sampling overhead of a single parallel cut using the composite state $\rho_{A_eB_e} \otimes \tau_{A_sB_s}$ is lower than the overhead resulting from a parallel cut of $n_e$ wires with $\rho_{A_eB_e}$ and separately cutting the remaining $n_s$ wires without entanglement.
    The advantage in the sampling overhead is given by
    \begin{align}
        \nonumber&\gamma^{\rho_{A_eB_e}}(\mathcal{I}^{\otimes n_{e}})\gamma(\mathcal{I}^{\otimes n_{s}}) - \gamma^{\rho_{A_eB_e}\otimes \tau_{A_sB_s}}(\mathcal{I}^{\otimes n})\\
        &= \left(\gamma^{\rho_{A_eB_e}}(\mathcal{I}^{\otimes n_{e}}) - 1\right)(2^{n_s} - 1) \ge 0.\label{eq:theorem_composite_separable}
    \end{align}
\end{theorem}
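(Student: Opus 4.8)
The plan is to prove the identity in \Cref{eq:theorem_composite_separable} by reducing everything to the closed-form expression for the optimal sampling overhead established in \Cref{theorem_overhead}, namely $\gamma^{\rho}(\mathcal{I}^{\otimes n}) = 2^{n+1}/(R(\rho)+1) - 1$. The key structural fact I would exploit is that the generalized robustness of entanglement is multiplicative under tensor products in the precise form $R(\rho \otimes \tau)+1 = (R(\rho)+1)(R(\tau)+1)$. Since $\tau_{A_sB_s}$ is separable, $R(\tau_{A_sB_s}) = 0$ by the remark following \Cref{eq:robustness}, so the composite state satisfies $R(\rho_{A_eB_e}\otimes \tau_{A_sB_s}) = R(\rho_{A_eB_e})$.

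First I would write out the three relevant overhead values using \Cref{theorem_overhead}. For brevity set $r := R(\rho_{A_eB_e})+1$. Then $\gamma^{\rho_{A_eB_e}}(\mathcal{I}^{\otimes n_e}) = 2^{n_e+1}/r - 1$, and since cutting without entanglement corresponds to $R = 0$ (the separable case of \Cref{eq:sampling_overhead_sep_state}), $\gamma(\mathcal{I}^{\otimes n_s}) = 2^{n_s+1} - 1$. For the composite cut, using the multiplicativity observation above with $n = n_e + n_s$, I get $\gamma^{\rho_{A_eB_e}\otimes \tau_{A_sB_s}}(\mathcal{I}^{\otimes n}) = 2^{n+1}/r - 1 = 2^{n_e+1}2^{n_s}/r - 1$. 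Substituting these into the left-hand side of \Cref{eq:theorem_composite_separable} gives
\begin{align}
\nonumber &\left(\frac{2^{n_e+1}}{r} - 1\right)\left(2^{n_s+1} - 1\right) - \left(\frac{2^{n_e+1}2^{n_s}}{r} - 1\right)\\
&= \frac{2^{n_e+1}}{r}\left(2^{n_s+1} - 1 - 2^{n_s}\right) - \left(2^{n_s+1} - 1\right) + 1,
\end{align}
and the bracket $2^{n_s+1} - 1 - 2^{n_s} = 2^{n_s} - 1$ collapses the expression, after collecting terms, to $\left(2^{n_e+1}/r - 1\right)(2^{n_s}-1)$, which is exactly $\left(\gamma^{\rho_{A_eB_e}}(\mathcal{I}^{\otimes n_e}) - 1\right)(2^{n_s}-1)$ as claimed.

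It remains to verify nonnegativity, i.e.\ that $\gamma^{\rho_{A_eB_e}}(\mathcal{I}^{\otimes n_e}) \ge 1$ and $2^{n_s} \ge 1$. The second is immediate since $n_s = n - n_e \ge 1$. For the first, I would appeal to the robustness bound $R(\rho_{A_eB_e}) \le 2^{n_e} - 1$ stated after \Cref{eq:robustness}, which gives $r \le 2^{n_e}$ and hence $\gamma^{\rho_{A_eB_e}}(\mathcal{I}^{\otimes n_e}) = 2^{n_e+1}/r - 1 \ge 2^{n_e+1}/2^{n_e} - 1 = 1$. I expect the only genuine obstacle to be justifying the tensor-product multiplicativity of $R+1$; I would establish this directly from the definition in \Cref{eq:robustness}, showing $R(\rho\otimes\tau)+1 \le (R(\rho)+1)(R(\tau)+1)$ by mixing the optimal separabilizing witnesses of the factors, and the reverse inequality (here only the separable case with $R(\tau)=0$ is actually needed, so this reduces to the elementary fact that tensoring an entangled state with a fixed separable state preserves its robustness). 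Everything else is routine algebraic simplification.
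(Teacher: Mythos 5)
Your proposal follows essentially the same route as the paper's proof: express all three overheads in closed form via \Cref{theorem_overhead}, reduce the composite overhead using $R(\rho_{A_eB_e}\otimes\tau_{A_sB_s}) = R(\rho_{A_eB_e})$, and finish by algebra. The paper does exactly this computation, resting on \Cref{lemma_robustness_composite_sep} for the robustness identity. The one place where your argument is not yet a proof is the lower bound $R(\rho\otimes\tau)\ge R(\rho)$: you justify it as ``the elementary fact that tensoring an entangled state with a fixed separable state preserves its robustness,'' but that is precisely the statement to be shown, so as written the justification is circular. The upper bound via mixing the optimal separabilizing witnesses is fine and is what the paper does in \Cref{eq:proof_robustness_sep}; for the lower bound the paper instead invokes \Cref{lemma_lower_bound_robustness}, whose proof goes through the fidelity of distillation and its relation to $R$. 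A shorter fix available to you is LOCC monotonicity of the generalized robustness: tracing out $A_s\otimes B_s$ is a local operation mapping $\rho\otimes\tau$ to $\rho$, hence $R(\rho)\le R(\rho\otimes\tau)$. Closing this one step makes your argument complete.

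Two minor further points. Your collapsed intermediate expression should read $\left(2^{n_e+1}/r - 2\right)\left(2^{n_s}-1\right)$ rather than $\left(2^{n_e+1}/r-1\right)\left(2^{n_s}-1\right)$; the final identification with $\left(\gamma^{\rho_{A_eB_e}}(\mathcal{I}^{\otimes n_e})-1\right)\left(2^{n_s}-1\right)$ is nevertheless the correct one, so this is only a typo in the displayed step. Your nonnegativity argument via the dimension bound $R(\rho_{A_eB_e})\le 2^{n_e}-1$ is sound and slightly more explicit than the paper, which simply observes that both factors are nonnegative.
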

The proof of \Cref{theorem_composite_separable} is provided in \Cref{sec:appendix_proof_cut_composite_states}. 
In contrast to \Cref{theorem_cut_composite_states}, which only establishes the benefit of using the composite resource state $\rho_{A_eB_e}\otimes \tau_{A_sB_s}$ in terms of the sampling overhead over cutting with $\rho_{A_eB_e}$ and applying a standard cut without entanglement separately, this theorem quantifies the precise advantage in this specific scenario.
The advantage of the single parallel cut $\gamma^{\rho_{A_eB_e}\otimes \tau_{A_sB_s}}(\mathcal{I}^{\otimes n})$ increases as the entanglement in $\rho_{A_eB_e}$ decreases, which is reflected by a higher sampling overhead $\gamma^{\rho_{A_eB_e}}(\mathcal{I}^{\otimes n_{e}})$ in \Cref{eq:theorem_composite_separable}.
The benefit of a lower sampling overhead achieved by employing a single parallel cut over two separate cuts diminishes entirely when $\rho_{A_eB_e}$ is maximally entangled, as in this case, $\gamma^{\rho_{A_eB_e}}(\mathcal{I}^{\otimes n_{e}}) - 1 = 0$.
This is consistent with \Cref{theorem_composite_max_ent}.

\begin{figure*}
    \centering
    \includegraphics{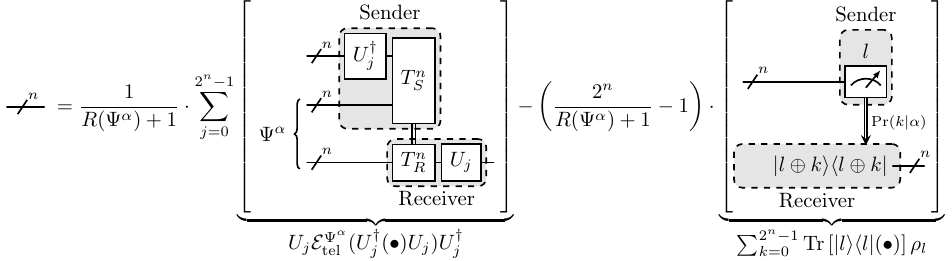}
    \caption{Parallel wire cut with NME state $\ket{\Psi^\alpha}$.}
    \label{fig:wire_cut_with_entanglement}
\end{figure*}

\subsection{Optimal QPD for a Parallel Wire Cut with Pure NME states}\label{sec:wire_cut_nme}
So far, the focus of this work has primarily been on understanding and quantifying the sampling overhead of a parallel cut of multiple wires in a quantum circuit using NME states.
This section presents a QPD that achieves this optimal sampling overhead to cut $n$ wires in parallel using pure NME states.

\begin{theorem}\label{theorem_decomposition}
    Let $\ket{\Psi^{\alpha}}$ denote a pure NME state of $2n$ qubits with entanglement robustness $R(\Psi^{\alpha})< 2^n -1$. 
The $n$-qubit identity operator $\mathcal{I}^{\otimes n}$ can be decomposed using quantum teleportation described by $\mathcal{E}_{\text{tel}}^{\Psi^{\alpha}}$ with $\Psi^{\alpha}$ as resource state as follows:
\begin{equation}\label{eq:thrm2}
    \begin{split}
    \mathcal{I}^{\otimes n}(\bullet) =&\phantom{-} \frac{1}{R(\Psi^{\alpha})+1}\sum_{j=0}^{2^n-1} U_j\mathcal{E}_{\text{tel}}^{\Psi^{\alpha}}\left(U_j^{\dagger}(\bullet) U_j\right)U_j^{\dagger}\\
    &- \left(\frac{2^n}{R(\Psi^{\alpha})+1}-1\right)\sum_{l=0}^{2^n-1} \tr\left[ \ketbras{l}(\bullet) \right]\rho_l,
    \end{split}
\end{equation}
where $U_j$ are the unitary operators from \Cref{eq:mub_basis_transformation} that transform the computational basis in the different MUBs and the density operator $\rho_l$ is 
\begin{align}
    \rho_l := \sum_{k=1}^{2^n-1}\Pr(k|\alpha)\ketbras{l \oplus k}
\end{align}
with the probabilities $\Pr(k|\alpha)$ for selecting $0<k<2^n$, calculated from the $2^n$-dimensional Schmidt vector $\alpha$ as 
\begin{align}
    \Pr(k|\alpha):=\frac{\left(\sum_{j=0}^{2^n-1}(-1)^{k \odot j}\alpha_j\right)^2}{2^n-1 -R(\Psi^{\alpha})}.
\end{align}
This decomposition achieves the optimal sampling overhead from \Cref{theorem_overhead}.
\end{theorem}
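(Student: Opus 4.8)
The plan is to verify two things about the claimed decomposition in \Cref{eq:thrm2}: first, that it is a \emph{correct} QPD, meaning the right-hand side, when applied to an arbitrary input state, reproduces $\mathcal{I}^{\otimes n}(\bullet)=\bullet$; and second, that its total coefficient weight $\sum_i|c_i|$ equals the optimal value $\gamma^{\rho}(\mathcal{I}^{\otimes n})=\tfrac{2^{n+1}}{R(\Psi^\alpha)+1}-1$ from \Cref{theorem_overhead}. Since the optimality of that value is already established, the sampling-overhead computation reduces to checking that the sum of absolute values of the prefactors matches; the substantive mathematical work is the correctness verification.

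First I would compute the action of the teleportation channel on the MUB-conjugated terms. The key input is \Cref{eq:tele2}, which expresses $\mathcal{E}_{\text{tel}}^{\Psi^\alpha}(\varphi)=\sum_{\sigma}\braket{\Phi^\sigma|\Psi^\alpha|\Phi^\sigma}\,\sigma\varphi\sigma$ as a Pauli-twirl-like channel whose coefficients are the overlaps of the resource state with the generalized Bell basis. For $\ket{\Psi^\alpha}=\sum_i\alpha_i\ket{i}\otimes\ket{i}$ these overlaps $\braket{\Phi^\sigma|\Psi^\alpha|\Phi^\sigma}$ are explicitly computable in terms of the Schmidt vector $\alpha$; I expect them to organize, after grouping the Pauli operators by their $Z$-type action, into exactly the quantities $(\sum_j(-1)^{k\odot j}\alpha_j)^2/2^n$ that appear in $\Pr(k|\alpha)$, together with a distinguished identity-weight term carrying $(\sum_j\alpha_j)^2/2^n = (R(\Psi^\alpha)+1)/2^n$ by \Cref{eq:robustness_pure_states}. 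The decisive structural fact I would invoke is that summing $U_j(\cdot)U_j^\dagger$ over a complete set of MUB transformations implements the depolarizing/twirling identity $\sum_{j=0}^{2^n-1}U_j\,\mathrm{Tr}[P(\bullet)]\,U_j P U_j^\dagger + (\text{comp.\ basis term}) \propto \mathrm{Tr}[\bullet]\,I + \text{projector onto }\bullet$; concretely, $\sum_{j=0}^{2^n}\sum_l \ketbras{e_l^j}\bra{e_l^j}\bullet\ketbras{e_l^j}$ acts as a known combination of $\bullet$ and $\mathrm{Tr}[\bullet]I$, which is precisely the algebraic identity underlying the no-NME decomposition \Cref{eq:qpd_harada}. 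I would substitute the Pauli expansion of $\mathcal{E}_{\text{tel}}^{\Psi^\alpha}$ into the MUB sum, use the MUB completeness relation to collapse the off-diagonal Pauli terms into a trace term plus the input, and collect coefficients.

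The main obstacle will be matching the subtracted correction term $-\left(\tfrac{2^n}{R+1}-1\right)\sum_l\mathrm{Tr}[\ketbras{l}\bullet]\rho_l$ against the residual non-identity part produced by the first sum. My plan is to show that the MUB sum of the teleported terms yields $\frac{1}{R+1}\big[(R+1)\,\bullet + (\text{leakage into } \rho_l\text{-type states})\big]$, where the leakage is governed exactly by the non-identity Pauli overlaps, and that $\Pr(k|\alpha)$ is normalized precisely so that $\sum_{k=1}^{2^n-1}\Pr(k|\alpha)=1$ — this normalization is what forces the denominator $2^n-1-R(\Psi^\alpha)$, which I would confirm by checking $\sum_{k\neq 0}(\sum_j(-1)^{k\odot j}\alpha_j)^2 = 2^n - (\sum_j\alpha_j)^2 = 2^n-1-R(\Psi^\alpha)$ via Parseval/orthogonality of the characters $(-1)^{k\odot j}$. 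Once normalization and the MUB twirl are in hand, the leakage term should cancel exactly against the $\rho_l$ correction, leaving $\mathcal{I}^{\otimes n}(\bullet)=\bullet$.

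Finally, for the sampling-overhead claim I would read off the two families of coefficients: the $2^n$ positive teleportation terms each carry weight $\tfrac{1}{R+1}$ (after confirming each $U_j\mathcal{E}_{\text{tel}}^{\Psi^\alpha}(U_j^\dagger(\bullet)U_j)U_j^\dagger$ is a valid LOCC channel with trace-one probabilistic weight, which holds because teleportation sandwiched by local unitaries is LOCC), and the negative correction term carries total absolute weight $\tfrac{2^n}{R+1}-1$. Summing absolute values gives $2^n\cdot\tfrac{1}{R+1} + \tfrac{2^n}{R+1}-1 = \tfrac{2^{n+1}}{R+1}-1$, matching \Cref{theorem_overhead} and certifying optimality. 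I would close by noting that each $\rho_l$ is a separable (indeed classically preparable) state so the correction terms are genuinely in $\locc(A\otimes B,C)$, completing the demonstration that this is an admissible optimal QPD.
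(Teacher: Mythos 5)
Your overall architecture matches the paper's proof: expand $\mathcal{E}_{\text{tel}}^{\Psi^{\alpha}}$ via the generalized Bell overlaps, note that only $\sigma\in\{I,Z\}^{\otimes n}$ survive (these are the phase operators $\hat{Z}_k$ with weights $\tfrac{1}{2^n}\bigl(\sum_j(-1)^{k\odot j}\alpha_j\bigr)^2$ and the $k=0$ weight $(R(\Psi^{\alpha})+1)/2^n$), conjugate by the MUB unitaries, match the leakage against the $\rho_l$ correction, normalize, and sum absolute coefficients. Your Parseval-type check $\sum_{k\neq 0}\bigl(\sum_j(-1)^{k\odot j}\alpha_j\bigr)^2=2^n-1-R(\Psi^{\alpha})$ is correct (the paper instead gets the same normalization from completeness of the Bell basis), and your final overhead arithmetic coincides with the paper's.

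The gap is in the step you call the ``decisive structural fact.'' You propose to collapse the MUB-conjugated sum using the aggregate completeness relation $\sum_{j,l}\tr[\ketbras{e_l^j}\bullet]\ketbras{e_l^j}=\bullet+\tr[\bullet]I$ that underlies \Cref{eq:qpd_harada}. That identity is not sufficient here. Because the teleportation channel is a \emph{non-uniformly} weighted twirl, $\mathcal{E}_{\text{tel}}^{\Psi^{\alpha}}(\varphi)=\sum_k q_k\hat{Z}_k\varphi\hat{Z}_k$ with $\alpha$-dependent $q_k$, the first sum in \Cref{eq:thrm2} becomes $\sum_k q_k\sum_j S_{j,k}\varphi S_{j,k}$, and to match the $\Pr(k|\alpha)$-weighted shifted states inside $\rho_l$ you need, for \emph{each fixed} $k\neq 0$ separately,
\begin{align}
\frac{1}{2^n}\sum_{j=0}^{2^n-1}S_{j,k}\,\varphi\,S_{j,k}\;=\;\sum_{l=0}^{2^n-1}\tr\left[\ketbras{l}\varphi\right]\ketbras{l\oplus k}.
\end{align}
This $k$-resolved identity (\Cref{lemma_sum_S_im}) does not follow from the aggregate completeness relation, which only controls the uniformly weighted double sum over $j$ and $k$; it is proved from the specific structure $S_{j,k}=s_{j,k}\hat{Z}_{j\odot k}\hat{X}_k$ of \Cref{eq:S_jk} together with the commutation relation $\hat{Z}_a\hat{X}_b=(-1)^{a\odot b}\hat{X}_b\hat{Z}_a$, the phase identity $(s_{j,k})^2=(-1)^{j\odot k\odot k}$, and the Galois-field character sum $\sum_j(-1)^{j\odot a}=2^n\delta_{a,0}$. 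Indeed, the paper points out in \Cref{sec:discussion_harada} that the decomposition is \emph{not} valid for an arbitrary complete set of MUBs --- exactly the generality at which your invoked completeness relation still holds --- so this step cannot be discharged by generic MUB properties; the Galois-field twirling lemma is the essential missing ingredient in your proposal.
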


The detailed proof of this theorem can be found in \Cref{sec:proof_decomposition}, and
the operator $\odot$ denotes the multiplication in the Galois field $\mathbb{GF}(2^n)$ of dimension $2^n$ as detailed in \Cref{sec:galois_fields}.
The QPD of \Cref{theorem_decomposition} is depicted in \Cref{fig:wire_cut_with_entanglement}.
It consists of $2^n$ quantum teleportations with the pure NME state $\Psi^{\alpha}$, that apply the MUB transformations from $\{U_j\}_{j=0}^{2^n-1}$.
Additionally, the QPD includes one measure-and-prepare operation, i.e., the subtraction in \Cref{eq:thrm2}, that initializes a diagonal density operator $\rho_l$. 
The probability $\Pr(k|\alpha)$, determining the initialization of the state $\ket{l \otimes k}$ after the prior measurement of $\ket{l}$, is dependent on the Schmidt coefficients of $\Psi^{\alpha}$.

Since only the $2^n$ teleportation circuits require the entangled states $\Psi^{\alpha}$, the total number of entangled states consumed by sampling from the QPD is proportional to $2^n(R(\Psi^{\alpha})+1)^{-1}$. 
Therefore, the higher the entanglement in $\Psi^{\alpha}$, quantified by $R(\Psi^{\alpha})$, the fewer entangled states are needed to achieve the desired result accuracy. 
On the other hand, to consume an entangled state $\Psi^{\alpha}$, each teleportation circuit has $n$ additional ancilla qubits compared to the measure-and-prepare circuit on the right of \Cref{fig:wire_cut_with_entanglement}.
This circuit solely measures $n$ qubits of the sender and initializes $n$ qubits of the receiver. 
Consequently, the number of shots from circuits without ancilla qubits is proportional to $2^n(R(\Psi^{\alpha})+1)^{-1} -1$. 
This stands in contrast to an implementation using a QPD for the maximally entangled state, as described in \Cref{sec:quasi_prob_sim_states}, followed by a teleportation-based wire cut~\cite{Brenner2023}, where every circuit in the QPD requires ancilla qubits.

\begin{figure*}
    \centering
    \includegraphics{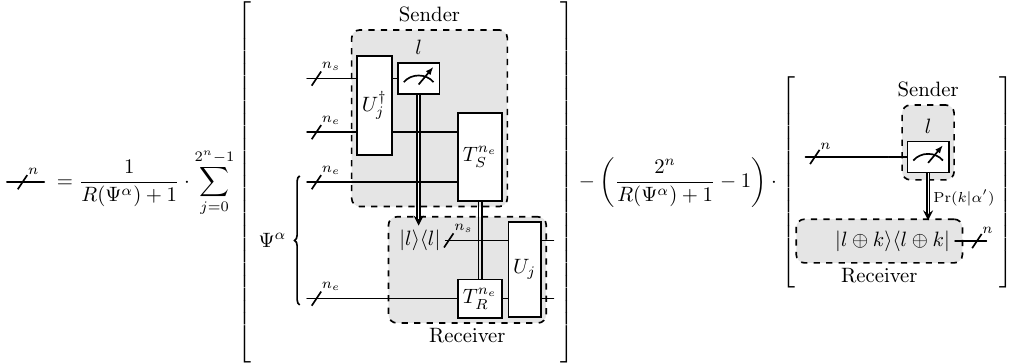}
    \caption{Parallel wire cut using NME state $\ket{\Psi^\alpha}$ consisting of $2n_e$ qubits to cut $n$ wires, where $n_e<n$, without ancilla qubits for separable states.}
    \label{fig:wire_cut_with_entanglement_sep}
\end{figure*}

\subsection{Optimal QPD with Pure States for \Cref{theorem_composite_separable}}\label{sec:QPD_k_pure_states}

Furthermore, in the scenario described in \Cref{theorem_composite_separable} where a pure NME state $\Psi^{\alpha}$, consisting of $2n_e$ qubits, is utilized to perform a wire cut on $n$ wires with $n_e < n$, the QPD of \Cref{theorem_decomposition} can still be applied. 
In this case, the NME state $\Psi^{\alpha}$ is supplemented by separable states for the additional $n_s=n-n_e$ qubits, specifically $(\ketbras{00})^{\otimes n_s}\otimes \Psi^{\alpha}$, which is then employed in the QPD. 

However, the process can be further streamlined. 
As the $n$-qubit teleportation is constructed from independent single qubit teleportation, we consider such an individual teleportation with a separable state.
The state $\mathcal{E}_{\text{tel}}^{\ketbras{00}}(\varphi)$, resulting from the teleportation of a one-qubit state $\varphi$ using the separable state $\ketbras{00}$, can be reduced to a simple measurement and reinitialization procedure, which is derived from \Cref{eq:tele2} in the following:
\begin{align}
    &\nonumber\mathcal{E}_{\text{tel}}^{\ketbras{00}}(\varphi)\\
    &=\sum_{\sigma \in \{I, X, Y, Z\}}\braket{\Phi^{\sigma}\ketbras{00}\Phi^{\sigma}}\sigma\varphi\sigma \\
    &=\sum_{\sigma \in \{I, X, Y, Z\}}|\!\braket{\Phi^{\sigma}|00}\!|^2\sigma\varphi\sigma \\
    &=\sum_{\sigma \in \{I, X, Y, Z\}}|\!\braket{\Phi|\sigma \otimes I|00}\!|^2\sigma\varphi\sigma,
\end{align}
where we applied the definition of $\ket{\Phi^{\sigma}}$ from \Cref{eq:phi_sigma} in the last step.
As a result of \Cref{eq:real_sum} in \Cref{sec:proof_decomposition}, it holds that
\begin{align}
    |\!\braket{\Phi|\sigma \otimes I|00}\!|^2 &= \frac{1}{2}\left(\braket{0|\sigma|0}\right)^{2}\\ &= \begin{cases}
        \frac{1}{2}, & \sigma \in \{I, Z\}\\
        0, & \sigma \in \{X, Y\}
    \end{cases}.
\end{align}
Therefore, we obtain
\begin{align}
    &\nonumber\mathcal{E}_{\text{tel}}^{\ketbras{00}}(\varphi)\\
    &= \frac{I\varphi I}{2} + \frac{Z\varphi Z}{2} \\
     \begin{split}
        &=\frac{(\ketbras{0} + \ketbras{1})\varphi(\ketbras{0}  + \ketbras{1})}{2}\\
        &\phantom{=} + \frac{(\ketbras{0} - \ketbras{1})\varphi (\ketbras{0} - \ketbras{1})}{2}
    \end{split} \\
    &= \ketbras{0}\varphi\ketbras{0} + \ketbras{1}\varphi\ketbras{1}\\
    &= \tr[\ketbras{0}\varphi]\ketbras{0} + \tr[\ketbras{1}\varphi]\ketbras{1} \\
    &= \sum_{j\in\{0,1\}} \tr[\ketbras{j}\varphi]\ketbras{j}\label{eq:measure_and_prepare}.
\end{align}
This simplification eliminates the need for inputting separable states $\ketbras{00}$.
All single-qubit teleportations of the $n$-qubit teleportation using a separable state can be replaced by the measure-and-prepare operations derived in \Cref{eq:measure_and_prepare}.
This streamlines the $n$-qubit teleportations circuits of the QPD from \Cref{theorem_decomposition} to be effectively executed with only $n_e$ ancilla qubits for the NME state $\Psi^{\alpha}$, as depicted in \Cref{fig:wire_cut_with_entanglement_sep}.
Consequently, the total number of qubits involved in the cutting procedure is reduced from $3n$ for the cut in \Cref{fig:wire_cut_with_entanglement} to $2n+n_e$ for the streamlined cut in \Cref{fig:wire_cut_with_entanglement_sep}.

The coefficients of the QPD are derived by leveraging that the robustness of entanglement remains invariant when a separable state is appended as shown in \Cref{lemma_robustness_composite_sep} in \Cref{sec:appendix_proof_cut_composite_states}, i.e., $R((\ketbras{00})^{\otimes (n_s)}\otimes \Psi^{\alpha}) = R(\Psi^{\alpha})$.
To account for the smaller NME state's dimensionality, the probability distribution $\Pr(k|\alpha)$ must be replaced by $\Pr(k | \alpha')$, where $\alpha'$ denotes the $2^n$ Schmidt coefficients of the expanded state $(\ketbras{00})^{\otimes (n_s)}\otimes \Psi^{\alpha}$. 
This adjustment ensures that the QPD accurately reflects the modified entanglement structure when using a smaller NME state.

\section{Discussion}\label{sec:discussion}

\begin{table}[t]
\centering
\begin{tabular}{l||c|c}
  & \thead{LOCC without\\NME states ($\gamma$)\\\cite{Brenner2023}} & \thead{LOCC with NME\\states $\rho$ ($\gamma^{\rho}$)} \\
 \hline\hline
 \thead{Single wire\\($\mathcal{I}$)} & \makecell{$3$} &  \makecell{$\displaystyle\frac{4}{R(\rho)+1}-1$\\\cite{Bechtold2024}}   \\
 \hline
 \thead{$n$ wires\\($\mathcal{I}^{\otimes n}$)} & \makecell{$2^{n+1} - 1$} &  \makecell{$\displaystyle\frac{2^{n+1}}{R(\rho)+1} - 1$\\{[\Cref{theorem_overhead}]}}
\end{tabular}
\caption{Comparison of the sampling overheads of the different wire cutting procedures.}
\label{tab:sampling_overheads}
\end{table}

\Cref{theorem_overhead} establishes the optimal sampling overhead for wire cutting when both classical communication and NME states are employed. 
This result extends the previously known optimal sampling overhead for parallel wire cutting, which only considered classical communication and did not incorporate the use of NME states~\cite{Brenner2023}. 
\Cref{tab:sampling_overheads} presents an overview of the optimal sampling overheads for wire cutting with and without NME states. 
Utilizing a separable resource state $\rho$ with $R(\rho)=0$ in a wire cutting procedure equates the sampling overheads in both scenarios.

In the following, \Cref{sec:discussion_arb_cuts} discusses how our findings on the optimal sampling overhead for parallel wire cuts using NME states can be applied to the sampling overhead for joint cuts at arbitrary circuit positions.
\Cref{sec:discussion_trade_off} examines the practical constraints associated with large-scale joint wire cuts employing NME states.  
Furthermore, \Cref{sec:discussion_harada} discusses how the QPD for parallel cuts with NME states, as outlined in \Cref{theorem_decomposition}, extends the work of Harada et al.~\cite{Harada2023} and examines the limitations of the QPD.

\subsection{Arbitrary Wire Cuts with NME States}\label{sec:discussion_arb_cuts}
The results for the sampling overhead in \Cref{sec:parallel_overhead} are formulated for parallel wire cuts.
A more general task is cutting $n$ wires at arbitrary positions in the circuit without the requirement that these cuts be aligned within the same time slice of a circuit~\cite{Brenner2023}.
This setting allows arbitrary operations between wire cuts.
Consequently, it is generally impossible to consolidate the identity operators of the cuts into the form $\mathcal{I}^{\otimes n}$.
Nevertheless, without using entanglement, both scenarios yield the same sampling overhead of $2^{n+1} - 1$~\cite{Brenner2023}, when the arbitrary operations between wire cuts are treated as black boxes and not considered in the cutting process~\cite{Schmitt2023}. 

Following the argumentation of Brenner et al.~\cite{Brenner2023}, the same can be shown when NME states are employed for cutting.
A parallel wire cut is a particular instance of the more general problem of cutting wires at arbitrary positions with black box operations in between.
Therefore, the sampling overhead $\gamma^{\rho}(\mathcal{I}^{\otimes n})$ for parallel wire cuts presented in this work provides a lower bound for the optimal sampling overhead for arbitrary wire cuts using NME states.

On the other hand, implementing $n$ wire cuts at arbitrary positions within a circuit can be achieved through $n$ teleportations.
These teleportations require $n$ maximally entangled qubit pairs, which %
are utilized at various points during the circuit's execution. %
By quasiprobabilistically simulating the $n$ required maximally entangled states $\ket{\Phi_1}^{\otimes n} = \ket{\Phi_n}$ together before the circuit's execution, one can achieve the sampling overhead $\hat{\gamma}^{\rho}(\Phi_n)$ when employing the NME state $\rho$. 
Therefore, $\hat{\gamma}^{\rho}(\Phi_n)$ serves as an upper bound for the sampling overhead of $n$ arbitrary wire cuts.
However, the sampling overhead $\gamma^{\rho}(\mathcal{I}^{\otimes n})$ for parallel cutting and the sampling overhead $\hat{\gamma}^{\rho}(\Phi_n)$ for the simulation of $\ket{\Phi_n}$ coincide (see \Cref{theorem_overhead} and \Cref{eq:optimal_overhead}).
Thus, the sampling overhead $\gamma^{\rho}(\mathcal{I}^{\otimes n})$ for parallel cutting as established in \Cref{theorem_overhead} also applies to the scenario of arbitrary wire cuts with black box operations in between. 
Consequently, \Cref{theorem_cut_composite_states,theorem_composite_max_ent,theorem_composite_separable} apply also to an arbitrary cut of $n$ wires with NME states.

When the operations between the wire cuts are considered in the cutting process, the sampling overhead can potentially be reduced by using a custom QPD for the overall operation, which includes both the identity operators to be cut and the intermediate operations.

\subsection{Trade-Off: Larger NME States and Resource Constraints}\label{sec:discussion_trade_off}
While \Cref{theorem_cut_composite_states} shows cutting all $n$ wires together has a lower sampling overhead than cutting them in smaller chunks, this strategy is limited in practice by the availability and size of (composite) entangled states.
The rate at which these states can be produced and the frequency with which consumed states can be regenerated may restrict their practical use~\cite{Pompili2022}. 
Assuming that the generation process itself is not a limiting factor, the potential size of a usable (composite) entangled state for a joint cut of $n$ wires scales linearly, equating to $2n$ qubits. 
However, from a practical point of view, generating arbitrarily large entangled states is not advantageous, as, e.g.,  decoherence rates increase with the size of the entangled state~\cite{Kam2023}. 
Moreover, such states would consume a significant portion of a quantum device's qubit resources.
This is counterproductive to the primary rationale for implementing circuit cutting techniques, i.e., to manage the limited qubit availability. 

Consequently, a more sensible approach,  similar to the concept of an entanglement factory as discussed by Pivetau et al.~\cite{Piveteau2023}, involves limiting the size of the generated (composite) entangled state to $2n_e$ qubits, sufficient for jointly cutting $n_e$ wires with the NME state.
The value of $n_e$ should balance between leveraging the benefits of larger entangled states, which can reduce the sampling overhead per wire cut, and ensuring that it remains within the practical limits of the hardware. 
Specifically, $n_e$ should be chosen so that it does not exceed $n$, aligns with entanglement generation capabilities, and utilizes only a reasonable fraction of a device's qubit resources.

To allow the cutting of more than $n_e$ wires using entanglement, this approach involves regenerating the entangled state as it is consumed. 
However, in situations where the regeneration of entanglement is not feasible, \Cref{theorem_composite_separable} proposes the alternative solution of supplementing the $2n_e$-qubit entangled state with separable states for the additional $n-n_e$ wires.
Importantly, whether separable states for parallel cuts need to be explicitly provided as inputs depends on the specific QPD.
\Cref{fig:wire_cut_with_entanglement_sep} shows that the QPD of \Cref{theorem_decomposition} can be modified not to require the input of separable states in the above scenario, and thus, they do not allocate any qubits.

\subsection{The QPD in \Cref{theorem_decomposition} as a Generalization and its Limitations}\label{sec:discussion_harada}
The optimal QPD with pure NME states, as presented in \Cref{theorem_decomposition}, is a generalization of the QPD without NME states outlined in \Cref{eq:qpd_harada} from Harada et al.~\cite{Harada2023}.
The strategy described in \Cref{sec:QPD_k_pure_states} demonstrates this generalization by allowing for the interpolation between the two scenarios by employing a pure NME state comprising $2n_e$ qubits, where $n_e<n$, for the purpose of cutting $n$ wires. 
Notably, the QPD from \Cref{eq:qpd_harada} without NME states is recovered in the special case where only separable states are used, corresponding to $n_e=0$.
In this case, the Schmidt vector $\alpha'= (1, 0, ..., 0)$ consists of a single non-zero entry, and thus the two probabilities $\Pr(k)$ and $\Pr(k|\alpha')$ coincide:
\begin{align}
    \Pr(k|\alpha') &= \frac{\left(\sum_{j=0}^{2^n-1}(-1)^{k \odot j}\alpha'_j\right)^2}{2^n-1 -R((\ketbras{00})^{\otimes n})}  \\
    &= \frac{1}{2^n-1} =  \Pr(k).
\end{align}

A drawback of the QPD described in \Cref{theorem_decomposition} is, however, that it necessitates the use of the specific MUB transformations $\{U_j\}_{j=0}^{2^n-1}$ as specified in \Cref{eq:mub_basis_transformation}.
In contrast, the original QPD by Harada et al.~\cite{Harada2023} is compatible with any set of $2^n$ operators that transform the computational basis into $2^n$ distinct MUBs.
To our knowledge, using such an arbitrary set of MUB transformations in conjunction with the teleportations with $\ket{\Psi^{\alpha}}$ does not allow completing the QPD by subtracting a single measure-and-prepare operation as in \Cref{eq:thrm2}.

Although the wire cut presented in \Cref{theorem_decomposition} is designed to support pure NME states of the form $\ket{\Psi^{\alpha}}$, its applicability extends to arbitrary bipartite pure states. 
As demonstrated in \Cref{eq:Psi_alpha}, any bipartite pure state $\ket{\psi}_{AB}$ can be transformed into the form of $\ket{\Psi^{\alpha}}$ by applying local unitary transformations to each partition, i.e., $\ket{\Psi^{\alpha}} = (U_A^{\dagger}\otimes U_B^{\dagger})\ket{\psi}_{AB}$.
For a given $\ket{\psi}_{AB}$, these unitary transformations can be determined once prior to the circuit execution by using the quantum singular value decomposer presented by Bravo-Prieto et al.~\cite{BravoPrieto2020}. 
Once obtained, these transformations can be reused for each wire cut as long as the resource state remains unchanged.

While the wire cut presented in \Cref{theorem_decomposition} is limited to pure NME states, it represents a significant step towards developing a wire cutting procedure for mixed states, which are more commonly encountered in practical applications due to the presence of noise during entanglement generation.
This work serves as a foundation for future research aimed at developing wire cutting techniques that are specifically tailored to mixed NME states resulting from the noise characteristics of the practical quantum hardware at hand.

\section{Related Work}\label{sec:related_work}
Simulating maximally entangled states using NME states through quasiprobability simulation, as discussed in \Cref{sec:quasi_prob_sim_states}, has been extensively studied in the context of virtual entanglement distillation~\cite{Takagi2024,Yuan2024}. 
This approach diverges from traditional entanglement distillation, which aims to transform an NME state into a physical state with higher entanglement~\cite{Bennett1996a,Bennett1996b,Rozpedek2018}. 
Virtual distillation focuses solely on replicating the measurement statistics of the target state through quasiprobability simulation. 
Its feasibility has been recently demonstrated in experiments~\cite{Zhang2023b}.
The resulting virtually entangled states can be employed in quantum protocols that rely on maximally entangled states.
Consequently, a wire cut can be realized by using these virtual states in quantum teleportation.
Our research, however, explores an alternative strategy by directly implementing a wire cut with NME states, effectively achieving a virtual distillation of the identity operator~\cite{Takagi2024}.
This method bypasses the preliminary step of simulating a maximally entangled state before teleportation, offering a more efficient pathway. 

In addition to scenarios where classical communication facilitates wire cutting, wire cutting can be accomplished solely through local operators~(LO) without classical communication, as demonstrated by Peng et al.~\cite{Peng2019}.
However, this approach incurs an increased sampling overhead of $\gamma_{LO}(\mathcal{I}) = 4$, and there is no advantage in simultaneously cutting multiple wires, i.e., $\gamma_{LO}(\mathcal{I}^{\otimes n}) = 4^n = \left(\gamma_{LO}(\mathcal{I})\right)^n$~\cite{Brenner2023}.
For observables of rank at most $r\le 2^n$, the sampling overhead above can be generalized to $\gamma_{LO}(\mathcal{I}^{\otimes n}) = 2^nr$~\cite{Harrow2024}.
A natural question to ask, but beyond the scope of this paper, is whether entanglement can lower the sampling overhead in this context.

Nevertheless, optimizing the sampling overhead is not the sole objective.
The number of circuits in the QPD is also a critical factor as each subcircuit requires additional compilation time and therefore increases the total computation time~\cite{Harada2023}.
In scenarios without entanglement, $2^n+1$ circuits represent the optimal configuration for a parallel cut of $n$ wires.
This raises the question of what is the minimal number of circuits in wire cutting with entangled states.
However, addressing this question is beyond the scope of this paper. 

In addition to wire cutting, gate cutting is an alternative circuit cutting technique~\cite{Piveteau2023,Mitarai2021,Ufrecht2023a}.
It encompasses cutting multi-qubit gates by quasiprobabilistically simulating the corresponding non-local operator with local operators as described in \Cref{sec:quasi_prob_sim}.
Similar to wire cutting, joint cutting of multiple gates has been shown to reduce the sampling overhead compared to individual gate cuts~\cite{Piveteau2023,Ufrecht2023b,Schmitt2023}.
The choice between wire cutting and gate cutting to achieve a lower sampling overhead depends on the specific circuit under consideration~\cite{Brenner2023}.  
Harrow et al.~\cite{Harrow2024} introduce a new measure of the entangling power of unitary operations called the product extent and a procedure for gate cutting whose sampling overhead equals the product extent, providing a direct link between the entanglement properties of the gates and the efficiency of the cutting process.
Furthermore, Jing et al.~\cite{Jing2024} establish an exponential lower bound on the sampling overhead assisted by local operations and classical communication for any non-local operator. 
This lower bound is determined by the entanglement cost of the operator under separability preserving operations.

Techniques to automatically identify locations for wire and gate cuts in quantum circuits have been introduced~\cite{Tang2021,Brandhofer2024,Kan2024}. 
Pawar et al.~\cite{Pawar2023} additionally integrate circuit cutting with the reuse of qubits to further reduce the circuit size.
Furthermore, to tackle statistical errors introduced by the finite sampling from the subcircuits in wire cutting, maximum likelihood methods have been proposed~\cite{Perlin2021}.
While the primary goal of circuit cutting is to reduce the size of quantum circuits, several empirical studies have shown that executing smaller subcircuits can lead to improved overall results~\cite{Perlin2021,Ayral2021,Bechtold2023}.

\section{Conclusion}\label{sec:conclusion}
This work demonstrates the utility of NME states in jointly cutting multiple wires in a quantum circuit.
We identify the optimal sampling overhead for this scenario, which generalizes the established sampling overhead without NME states.
The higher the entanglement utilized in the joint cut, the lower its sampling overhead.
This optimal sampling overhead can be achieved using pure NME states with the parallel wire cut introduced in \Cref{sec:wire_cut_nme}.

Moreover, combining smaller NME states into a single composite state for a single joint cut reduces the sampling overhead compared to performing individual cuts for each constituent NME state.
However, this advantage does not hold for maximally entangled components of a composite state, as their integration into the joint cut does not offer any additional reduction in the sampling overhead.
These maximally entangled components can be utilized independently in quantum teleportations.
Furthermore, when an NME state is insufficiently large to cut all targeted wires, it can be supplemented with a separable state to perform a single joint cut, resulting in a lower sampling overhead than performing one cut with the given NME state and a separate cut without entanglement for the remaining wires.
The results of this work bridge the gap between traditional wire cutting of multiple wires without entanglement and quantum teleportations with maximally entangled states by providing a flexible approach that leverages NME states.

Building upon the introduced wire cutting protocol that employs pure NME states, future work can investigate wire cutting protocols that utilize mixed NME states, enabling the utilization of entangled states that have experienced decoherence due to noise.
Additionally, applying the idea of utilizing NME states to gate cutting techniques to reduce their sampling overhead is another promising research avenue.

\section*{Acknowledgment}
This work was partially funded by the BMWK projects \textit{EniQmA} (01MQ22007B) and \textit{SeQuenC} (01MQ22009B).

\bibliographystyle{quantum}
\bibliography{bibliography-trunc5}

\onecolumn
\appendix

\section{Galois Fields}\label{sec:galois_fields}
The used construction of the MUBs from \Cref{sec:mubs} and the proof of the $n$-qubit wire cut with pure NME states relies on Galois fields.
Therefore, this section gives a brief introduction. %
Within the context of a $n$-qubit quantum system, we consider finite fields, commonly referred to as Galois fields, consisting of $2^n$ elements.
For $n=1$, the finite field is $\mathbb{F}_2 = \{0, 1\}$ with addition and multiplication modulo $2$.
For $n>1$, this Galois field is denoted as $\mathbb{GF}(2^n)$. 
It is entirely defined by its number of elements, as all finite fields with the same number of elements are equivalent up to the relabeling of the elements~\cite{Kibler2017}.
Typically, the Galois field $\mathbb{GF}(2^n)$ is constructed with the Galois extension of degree $n$ of the field $\mathbb{F}_2$~\cite{Kibler2017}.
To this end, we consider the ring of polynomials $\mathbb{F}_2[x]$ in the indeterminate $x$ with coefficients in the field $\mathbb{F}_2$.
Especially, a monic irreducible polynomial $P_n(x)$ of degree $n$ is employed, with coefficients $\mu_i \in \mathbb{F}_2$. 
This polynomial is expressed as:
\begin{align}
P_n(x) = \sum_{i=0}^{n} \mu_ix^i,
\end{align}
where monic means that $\mu_n=1$, and irreducible implies that the polynomial is non-constant and cannot be factored into the product of two non-constant polynomials in $\mathbb{F}_2[x]$. 
Note that such an irreducible polynomial always exists, and there are efficient algorithms for systematically constructing it~\cite{Shoup1994,Couveignes2012}.
Utilizing a monic irreducible polynomial $P_n(x)$, the field $\mathbb{GF}(2^n)$ is the residue class field of the ring of polynomials $\mathbb{F}_2[x]$ modulo $P_n(x)$, that is,
\begin{align}
    \mathbb{GF}(2^n) = \mathbb{F}_2[x]/\langle P_n(x) \rangle.
\end{align}

Consequently, each element $\mathbb{GF}(2^n)$ can be uniquely labeled by an $n$-tuple $\Vec{a}=(a_0, a_1, \ldots, a_{n-1})$ with $a_i \in \mathbb{F}_2$, defining a polynomial of degree less than $n$ as given by
\begin{align}\label{eq:galois_polynomial}
a(x) = \sum_{i=0}^{n-1} a_ix^i.
\end{align}
Equally, each element in $\mathbb{GF}(2^n)$ can be interpreted as an integer value between $0$ and $2^n-1$, that is given by
\begin{align}
a=\sum_{i=0}^{n-1} a_i 2^i.
\end{align}
Depending on the context, we will either use the $n$-tuple representation $\Vec{a}$ or the integer representation $a$ of the element in $\mathbb{GF}(2^n)$.

Furthermore, each field is characterized by two fundamental operations: an addition, denoted as $\oplus$, and a multiplication, denoted as $\odot$.
The addition and multiplication of the field elements in $\mathbb{GF}(2^n)$ are then carried out as addition and multiplication of the corresponding polynomials modulo the irreducible polynomial $P_n(x)$.
Therefore, the addition $\oplus$ is defined as
\begin{align}
a \oplus b &= a(x) + b(x) \mod P(x)\\
&=\sum_{i=0}^{n-1} (a_i \oplus b_i) x^i \\
&= (a_0 \oplus b_0, a_1 \oplus b_1, \ldots, a_{n-1} \oplus b_{n-1})
\end{align}
where $a_i+b_i$ is the addition in $\mathbb{F}_2$, and thus, the addition $a \oplus b$ is equivalent to the component-wise addition modulo $2$ of the $n$-tuples $a$ and $b$.
Moreover, the multiplication $\odot$ is defined as
\begin{align}
    a \odot b &= a(x) \cdot b(x) \mod P(x)\\
    &= \left(\sum_{s=0}^{n-1} a_s x^s\right)\left(\sum_{t=0}^{n-1} b_t x^t\right) \mod P(x)\\
    &= \sum_{s,t=0}^{n-1} a_s b_t x^{s+t} \mod P(x). \label{eq:mult_pol}
\end{align}

To derive an $n$-tuple $\vec{c}$, where $c = a \odot b$ corresponds to the result of \Cref{eq:mult_pol}, polynomial division by $P(x)$ is required when the multiplication process yields a polynomial with non-zero coefficients for terms $x^{s+t}$ where $s+t \ge n$.
The following lemma shows that the multiplication can be reduced to multiplication with $n$ matrices.
\begin{lemma}\label{lemma_mult_matrix_galois}
    For a Galois field $\mathbb{GF}(2^n)$, where multiplication is performed modulo an irreducible polynomial $P_n(x)$, there exists $n$ symmetric matrices $\mathcal{M}_i\in \mathbb{F}_2^{n \times n}$such that for any elements $a,b \in \mathbb{GF}(2^n)$, the tuple resulting from $a \odot b$ can be computed as
    \begin{align}
        a \odot b &= (\Vec{a}\mathcal{M}_0 \Vec{b}^T, \Vec{a}\mathcal{M}_1 \Vec{b}^T, \ldots , \Vec{a}\mathcal{M}_{n-1} \Vec{b}^T).
    \end{align}
    The matrices $\mathcal{M}_i$ are invertible in modulo two arithmetic.
    \begin{proof}
        The detailed construction of these matrices is provided in the reference~\cite{Durt2010}.
    \end{proof}
\end{lemma}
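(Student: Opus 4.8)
The plan is to establish \Cref{lemma_mult_matrix_galois} by making the polynomial multiplication of \Cref{eq:mult_pol} completely explicit and then tracking how each output coordinate depends \emph{bilinearly} on the input tuples $\vec{a}$ and $\vec{b}$. First I would observe that the product $a(x)b(x)=\sum_{s,t=0}^{n-1} a_s b_t\, x^{s+t}$ is a polynomial of degree at most $2n-2$, so I would split it into its low part (degrees $0$ through $n-1$) and its high part (degrees $n$ through $2n-2$). The high-degree monomials $x^{s+t}$ with $s+t\ge n$ must be reduced modulo $P_n(x)$. Since $P_n(x)=\sum_{i=0}^n \mu_i x^i$ is monic of degree $n$, the relation $x^n=\sum_{i=0}^{n-1}\mu_i x^i$ (working over $\mathbb{F}_2$) lets me reduce each power $x^m$ with $m\ge n$ recursively to a unique $\mathbb{F}_2$-linear combination of $\{1,x,\dots,x^{n-1}\}$. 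I would record these reduction coefficients as a fixed table $r_{m,i}\in\mathbb{F}_2$, defined by $x^m\equiv\sum_{i=0}^{n-1} r_{m,i}x^i \pmod{P_n(x)}$ for every $m$, which depends only on $P_n$ and not on $a$ or $b$.

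Next I would collect the coefficient of $x^i$ in the fully reduced product. Because every monomial $x^{s+t}$ contributes $a_s b_t\, r_{s+t,i}$ to the $x^i$-coefficient, the $i$-th output coordinate is $c_i=\sum_{s,t=0}^{n-1} r_{s+t,i}\, a_s b_t$. This is exactly a bilinear form $\vec{a}\,\mathcal{M}_i\,\vec{b}^{\,T}$ where the matrix $\mathcal{M}_i\in\mathbb{F}_2^{n\times n}$ has entries $(\mathcal{M}_i)_{s,t}=r_{s+t,i}$. Since $(\mathcal{M}_i)_{s,t}$ depends on $s$ and $t$ only through the sum $s+t$, the matrix is manifestly symmetric, giving the claimed symmetry with essentially no extra work. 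This yields the representation $a\odot b=(\vec{a}\mathcal{M}_0\vec{b}^{\,T},\dots,\vec{a}\mathcal{M}_{n-1}\vec{b}^{\,T})$ directly.

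For invertibility of each $\mathcal{M}_i$ over $\mathbb{F}_2$, I would argue structurally from the field axioms rather than by inspecting entries. The key fact is that $\mathbb{GF}(2^n)$ is a field, so for any fixed nonzero $a$ the map $b\mapsto a\odot b$ is an $\mathbb{F}_2$-linear bijection on the $n$-dimensional space of tuples. Fixing $a$ and reading off the $i$-th coordinate, the vector $\vec{a}\mathcal{M}_i$ is the $i$-th row of the matrix of this bijection; as $i$ ranges over $0,\dots,n-1$ the collection $\{\vec{a}\mathcal{M}_i\}_i$ must therefore span $\mathbb{F}_2^n$. I expect the main obstacle to be converting this ``the matrices together encode an invertible map'' statement into invertibility of each \emph{individual} $\mathcal{M}_i$: a single bilinear form $\vec{a}\,\mathcal{M}_i\,\vec{b}^{\,T}$ need not be nondegenerate just because the field multiplication is. I would handle this by exhibiting, for a fixed $i$, an element whose multiplication isolates the $i$-th coordinate — concretely, using the fact that multiplication by the basis element $x^i$ is itself an invertible linear map on the field, so $\mathcal{M}_i$ can be written as a product of the (invertible) multiplication-by-$x^i$ matrix and the base case $\mathcal{M}_0$, reducing invertibility of all $\mathcal{M}_i$ to invertibility of $\mathcal{M}_0$. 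Establishing that $\mathcal{M}_0$ is invertible then follows from nondegeneracy of the field trace form, at which point I would simply cite the explicit construction in Durt et al.~\cite{Durt2010} to discharge the remaining computational verification.
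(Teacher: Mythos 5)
The paper itself does not prove this lemma---it only cites Durt et al.~\cite{Durt2010}---so your proposal is an attempt at a proof the paper omits. The constructive part is correct and is essentially the standard argument: reducing $x^{m}$ modulo $P_n(x)$ to $\sum_i r_{m,i}x^i$ gives $(\mathcal{M}_i)_{s,t}=r_{s+t,i}$, and symmetry is immediate because the entry depends on $s,t$ only through $s+t$ (the matrices are Hankel). That much is fine, and you correctly flag the real difficulty: invertibility of each \emph{individual} $\mathcal{M}_i$ does not follow merely from the fact that multiplication by a fixed nonzero element is a bijection.

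Your proposed resolution of that difficulty, however, contains a false step. Writing $\pi_i$ for the $i$-th coordinate functional, you claim $\mathcal{M}_i$ factors as the multiplication-by-$x^i$ matrix times $\mathcal{M}_0$, i.e.\ $\pi_i(y)=\pi_0(x^i\odot y)$. This fails already for $n=3$ with $P_3(x)=x^3+x+1$: taking $y=x$ gives $\pi_1(x)=1$ while $\pi_0(x\odot x)=\pi_0(x^2)=0$ (and replacing $x^i$ by $x^{-i}$ does not help either, since $\pi_0(x^{-1}\odot 1)=\pi_0(x^2+1)=1\neq 0=\pi_1(1)$). The identity you want only holds with $x^i$ replaced by the \emph{correct} element $c_i$ satisfying $\pi_i=\pi_0\circ L_{c_i}$, whose existence follows because $c\mapsto\pi_0\circ L_{c}$ is an injective linear map from $\mathbb{GF}(2^n)$ into its dual and hence surjective. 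In fact the cleanest repair bypasses $\mathcal{M}_0$ and the trace form entirely: for any nonzero functional $\lambda$ on $\mathbb{GF}(2^n)$, the bilinear form $(a,b)\mapsto\lambda(a\odot b)$ is nondegenerate, because for $a\neq 0$ the map $b\mapsto\lambda(a\odot b)=\lambda\circ L_a(b)$ is a nonzero functional ($L_a$ being a bijection), so no nonzero $a$ lies in the radical. Applying this with $\lambda=\pi_i$ (nonzero since $\pi_i(x^i)=1$) gives invertibility of every $\mathcal{M}_i$ directly. With that substitution your argument becomes a complete, self-contained proof of the lemma, which is more than the paper provides.
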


Furthermore, when an element of $\mathbb{GF}(2^n)$ is used as an exponent, its computation is based on its integer representation. 
When the base is $-1$ with exponent $a\in \mathbb{GF}(2^n)$, the result is determined solely by $a_0$, that is,
\begin{align}\label{eq:neg_one_exp_a_0}
    (-1)^a = (-1)^{\sum_{i=0}^{n-1} a_i 2^i} = (-1)^{a_0}.
\end{align}
This implies the following identity for $a,b\in \mathbb{GF}(2^n)$, which we will use later:
\begin{align}\label{eq:gf_exponent_sum}
    (-1)^a(-1)^b = (-1)^{a_0}(-1)^{b_0} = (-1)^{a_0+b_0} = (-1)^{(a \oplus b)_0} = (-1)^{a \oplus b}.
\end{align}
Additionally, in the later proof of \Cref{theorem_decomposition} given in \Cref{sec:proof_decomposition}, we use the identity given by the following lemma.
\begin{lemma}\label{lemma_sum_galois}
    Let $a\in \mathbb{GF}(2^n)$, then it holds that
    \begin{align}
        \sum_{j=0}^{2^n-1}(-1)^{j \odot a} = 2^n\delta_{a,0}
    \end{align}
    where $\delta_{a,0}$ is the Kronecker delta.
    \begin{proof}
        The proof is given in Equation (2.17) of~\cite{Durt2010}.
    \end{proof}
\end{lemma}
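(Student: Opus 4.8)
The plan is to reduce the Galois-field character sum to an ordinary additive character sum over $\mathbb{F}_2^n$ and exploit its multiplicative factorization. First I would invoke \Cref{eq:neg_one_exp_a_0} to replace the exponent by its lowest bit, so that $(-1)^{j \odot a} = (-1)^{(j \odot a)_0}$ and only the zeroth component of the product matters. Next, by \Cref{lemma_mult_matrix_galois}, this component is the bilinear form $(j \odot a)_0 = \Vec{j}\,\mathcal{M}_0\,\Vec{a}^T$, where $\mathcal{M}_0 \in \mathbb{F}_2^{n \times n}$ is the symmetric invertible matrix associated with multiplication modulo $P_n(x)$. Setting $\Vec{c} := \Vec{a}\,\mathcal{M}_0$ and using the symmetry of $\mathcal{M}_0$, the exponent becomes the linear form $(j \odot a)_0 = \sum_{i=0}^{n-1} j_i c_i \pmod 2$ in the bits $j_i$ of $j$.

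Since summing $j$ from $0$ to $2^n-1$ is the same as summing $\Vec{j}$ over all of $\mathbb{F}_2^n$, and the exponent is additive in the independent bits, the sum factorizes:
\begin{align}
    \sum_{j=0}^{2^n-1}(-1)^{(j \odot a)_0} = \prod_{i=0}^{n-1}\;\sum_{j_i \in \{0,1\}}(-1)^{j_i c_i}.
\end{align}
Each inner factor equals $2$ when $c_i = 0$ and $0$ when $c_i = 1$, so the whole product is $2^n$ precisely when $\Vec{c} = 0$ and vanishes otherwise. The final step is to identify when $\Vec{c} = \Vec{a}\,\mathcal{M}_0$ vanishes: because \Cref{lemma_mult_matrix_galois} guarantees that $\mathcal{M}_0$ is invertible over $\mathbb{F}_2$, we have $\Vec{c} = 0$ if and only if $\Vec{a} = 0$, i.e. $a = 0$. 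This yields exactly $2^n\delta_{a,0}$, as claimed.

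I do not expect a genuine obstacle here, as the argument is essentially the orthogonality relation for the additive characters of $\mathbb{F}_2^n$. The only point requiring care is the bookkeeping of transposes when passing from the symmetric bilinear form $\Vec{j}\,\mathcal{M}_0\,\Vec{a}^T$ to the linear form $\sum_i j_i c_i$ with $\Vec{c} = \Vec{a}\,\mathcal{M}_0$; symmetry of $\mathcal{M}_0$ makes this step transparent but is not strictly necessary. The one structural input that makes the proof rigorous rather than merely heuristic is the invertibility of $\mathcal{M}_0$ supplied by \Cref{lemma_mult_matrix_galois}, which is precisely what upgrades the vanishing criterion to the clean equivalence $\Vec{c} = 0 \iff a = 0$.
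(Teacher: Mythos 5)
Your proof is correct, but it is genuinely different from what the paper does: the paper does not prove this lemma at all, deferring instead to Equation~(2.17) of Durt et al.~\cite{Durt2010}. Your argument is a self-contained derivation that reuses only machinery already set up in the paper --- \Cref{eq:neg_one_exp_a_0} to reduce the exponent to its lowest bit, and the matrix $\mathcal{M}_0$ from \Cref{lemma_mult_matrix_galois} to turn $(j\odot a)_0$ into a linear form in the bits of $j$ --- after which the sum factorizes into $n$ independent two-term character sums and the invertibility of $\mathcal{M}_0$ converts $\Vec{c}=\Vec{a}\,\mathcal{M}_0=0$ into $a=0$. Every step checks out, including the transpose bookkeeping you flag (symmetry of $\mathcal{M}_0$ gives $\Vec{j}\,\mathcal{M}_0\Vec{a}^T=\Vec{j}\,\Vec{c}^T$), and this is exactly the same calculation the paper itself performs later in \Cref{eq:phase_op_pauli_prod}, so your proof fits the paper's toolbox well. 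For completeness, there is an even shorter route that avoids $\mathcal{M}_0$ entirely: for $a\neq 0$ the map $j\mapsto j\odot a$ is a bijection of $\mathbb{GF}(2^n)$ (as $a$ is a unit in the field), so the sum equals $\sum_{j'=0}^{2^n-1}(-1)^{j'_0}=0$ because exactly half of all field elements have lowest bit $0$; your version buys explicitness at the cost of invoking the structure of $\mathcal{M}_0$, while the bijection argument is more elementary but less aligned with the computations the paper performs elsewhere.
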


\section{Constructing MUBs using Shift and Phase Operators}\label{sec:phase_shift_operator}
Specifying the maximal size of a set of MUBs for a fixed dimension $d$ is still an open question in quantum information theory~\cite{Horodecki2022}, which, among other things, is important for finding optimal schemes for orthogonal quantum measurement in this dimension.
However, it is known that the number of MUBs for a dimension $d$ is at most $d+1$, and if the dimension $d$ is a power of a prime $p$, i.e., $d = p^n$, it was shown that there exists a set of $d+1$ MUBs~\cite{Wootters1989}.
We refer to a set of MUBs that achieves this maximal size of $d+1$ as \emph{complete}.

From a complete set of mutually unbiased bases, unitary operators can be generated that play a crucial role in our wire cutting protocol.
Since unitary operators form a vector space, this space can be endowed with the Hilbert-Schmidt inner product, defined for unitary operators $A$ and $B$ as $\braket{A,B}_{HS}= \tr(A^{\dagger}B)$. 
Orthogonality between $A$ and $B$ is achieved when $\braket{A,B}_{HS} = 0$.
Given this, a complete set of MUBs $\{\mathcal{B}_j\}_{j=0}^{d}$ is fundamentally connected to a set of $d^2$ orthogonal unitary operators $\{S_{j,k}\}$~\cite{Bandyopadhyay2002}.
Specifically, $\mathcal{B}_j$ serves as the eigenbasis for all $S_{j,k}$ with $k \in \{0, \ldots, d-1\}$.
A characteristic of this set of operators $\{S_{j,k}\}$ is that the operators can be organized into $d+1$ subsets, each containing $d$ elements that commute pairwise~\cite{Bandyopadhyay2002}.
The only common element among all subsets is the identity operator, denoted by $S_{j,0}$ for the different subsets. 
In general, the operator $S_{j,k}$ is the $k$-th unitary operator in the $j$-th subset, for $j \in \{0, \ldots, d\}$ and $k \in \{0, \ldots, d-1\}$.
Since all $d$ operators $\{S_{j,k}\}_{k=0}^{d-1}$ in the $j$-th subset commute, they have a shared eigenbasis that simultaneously diagonalizes all of the operators~\cite[Theorem
1.3.21]{Horn1985}.
This eigenbasis coincides with the MUB $\mathcal{B}_j$~\cite{Bandyopadhyay2002}.
Consequently, each unitary operator $S_{j,k}$ with its eigenvalues $\lambda_{j,k,l}$ can be expressed as
\begin{align}\label{eq:sab_sum_of_mub}
    S_{j,k} = \sum_{l=0}^{d-1} \lambda_{j,k,l} \ketbras{e_{l}^{j}}.
\end{align}

Fortunately, when considering the Hilbert space of $n$ qubits, the dimension is $d=2^n$, which is a power of a prime. 
Consequently, for $n$ qubits, there exists a complete set of $2^n +1$ MUBs. 
To construct such a complete set of MUBs, we follow the approach from Durt et al.~\cite{Durt2010}, which defines the orthogonal unitary operators $\{S_{j,k}\}$. 
Hereby, basis $\mathcal{B}_{2^n}$ is chosen as the computational basis, i.e, $\mathcal{B}_{2^n}= \{\ket{l}\}_{l=0}^{2^n-1}$.
The corresponding commuting unitary operators for $j=2^n$ in this construction are thereby given by $S_{2^n,k}=\hat{Z}_k$ with the \emph{phase operator} $\hat{Z}_k$ defined as
\begin{align}
\hat{Z}_k &:= \sum_{l=0}^{2^n-1}(-1)^{l\odot k}\ketbras{l},\label{eq:phase_op}
\end{align}
where $\odot$ denotes the multiplication in the Galois field $\mathbb{GF}(2^n)$ of dimension $2^n$.
Here, $l$ and $k$ are interpreted as the integer representations of elements in this finite field, as detailed in \Cref{sec:galois_fields}. 
The remaining $2^n$ MUBs are constructed by completing the set of orthogonal unitary operators $S_{j,k}$.
Following Durt et al.~\cite{Durt2010}, the operators for $j,k \in \{0, \ldots, 2^n-1 \}$ are defined as
\begin{align}\label{eq:S_jk}
    S_{j,k} := s_{j,k}\hat{Z}_{j\odot k}\hat{X}_k
\end{align}
with phase factor 
\begin{align}\label{eq:phase_factor}
    s_{j,k} = \prod_{0 \le r,t \le n-1 } \iu^{j\odot(k_r2^r)\odot (k_t2^t)}
\end{align}
and the \emph{shift operator} $\hat{X}_k$, which is defined as
\begin{align}
\hat{X}_k &:= \sum_{l=0}^{2^n-1}\ket{l}\bra{l\oplus k}.
\end{align}
Here, $\oplus$ denotes the addition in $\mathbb{GF}(2^n)$ as detailed in \Cref{sec:galois_fields}. 
Moreover, for the dimension $2^n$, the shift and phase operator can be constructed via products of Pauli $Z$ and $X$ gates, respectively, as detailed below in \Cref{sec:shift_phase_op_as_paulis}.

The resulting joint eigenbasis of the $j$-th commuting subset of the constructed unitary operators is $\mathcal{B}_j=\{\ket{e^j_l}\}_{l=0}^{2^n-1}$ with the $l$-th element of the basis $\mathcal{B}_j$ is given by
\begin{align}
    \ket{e_l^j} = \frac{1}{\sqrt{2^n}}\sum_{k=0}^{2^n-1} (-1)^{l \odot k} \overline{s_{j,k}} \ket{k},
\end{align}
where 
\begin{align}
    \overline{s_{j,k}} = \prod_{0 \le r,t \le n-1 } (-\iu)^{j\odot(k_r2^r)\odot (k_t2^t)}
\end{align}
is the complex conjugate of the phase factor $s_{j,k}$~\cite{Durt2010}.
For this particular construction of MUBs~\cite{Durt2010}, \Cref{eq:sab_sum_of_mub} can be refined to
\begin{align}
    S_{j,k} = \sum_{l=0}^{2^n-1} (-1)^{l \odot k} \ketbras{e_{l}^{j}}.
\end{align}

\subsection{The Shift and Phase Operators as Products of Pauli Operators}\label{sec:shift_phase_op_as_paulis}
The shift and phase operators for $n$ qubits are defined with respect to the Galois field $\mathbb{GF}(2^n)$.
In this context, an orthonormal basis of the $2^n$-dimensional Hilbert space is indexed by the elements of $\mathbb{GF}(2^n)$, which means that each basis state $\ket{a}$ corresponds to a field element $a\in \mathbb{GF}(2^n)$. 

It holds for the shift operator that
\begin{align}
    \hat{X}_a &= \sum_{k=0}^{2^n-1}\ket{k}\bra{k\oplus a}\\
    &= \sum_{k=0}^{2^n-1} \bigotimes_{i=0}^{n-1} \ket{k_i}\bra{k_i\oplus a_i}\label{eq:separable_basis_states}\\
    &= \sum_{k_0 \in \{0,1\}} \ldots \sum_{k_{n-1}\in\{0,1\}} \bigotimes_{i=0}^{n-1} \ket{k_i}\bra{k_i\oplus a_i},
\end{align}
since $\ket{k}$ and $\ket{k \oplus a}$ are trivially separable using their binary decomposition given by $k = \sum_{i=0}^{n-1}(k_i2^i)$ and $k\oplus a = \sum_{i=0}^{n-1}(k_i2^i) \oplus (a_i2^i)$ with $k_i, a_i \in \mathbb{F}_2$.
We proceed by exchanging the order of the summation and the tensor product:
\begin{align}
    \hat{X}_a &= \bigotimes_{i=0}^{n-1} \sum_{k_{i}\in\{0,1\}} \ket{k_i}\bra{k_i\oplus a_i}\\
    &= \bigotimes_{i=0}^{n-1} \left(\ketbra{0}{0 \oplus a_i} + \ketbra{1}{1 \oplus a_i}\right)\\
    &= \bigotimes_{i=0}^{n-1} X^{a_i} 
\end{align}
where $X^{a_i}$ is the Pauli $X$ gate classically controlled by $a_i$. %

Furthermore, it holds for the phase operator that 
\begin{align}\label{eq:phase_op_pauli_prod}
    \hat{Z}_a &= \sum_{k=0}^{2^n-1}(-1)^{k\odot a}\ket{k}\bra{k}\\
    &= \sum_{k=0}^{2^n-1}(-1)^{\Vec{k} \mathcal{M}_0 \Vec{a}^T}\ket{k}\bra{k} \label{eq:insert_M_0}\\
    &= \sum_{k=0}^{2^n-1}(-1)^{\Vec{k} \mathcal{M}_0 \mathcal{M}_0^{-1} \Vec{a'}^T}\ket{k}\bra{k} \label{eq:M_inverse_a_prime}\\
    &= \sum_{k=0}^{2^n-1}(-1)^{\Vec{k}\Vec{a'}^T}\ket{k}\bra{k} \\
    &= \sum_{k=0}^{2^n-1} \bigotimes_{i=0}^{n-1} (-1)^{k_i a'_i}\ket{k_i}\bra{k_i} \\
    &= \bigotimes_{i=0}^{n-1} \left( \sum_{k_i\in \{0,1\}} (-1)^{k_i a'_i}\ket{k_i}\bra{k_i} \right) \\
    &= \bigotimes_{i=0}^{n-1} \left(\ketbras{0} + (-1)^{a'_i} \ketbras{1}\right)\\
    &= \bigotimes_{i=0}^{n-1} Z^{a'_i} 
\end{align}
where \Cref{eq:insert_M_0} uses $(-1)^{k\otimes a} = (-1)^{(k\otimes a)_0}$ resulting from \Cref{eq:neg_one_exp_a_0} and then applies $(k\otimes a)_0 = \Vec{k} \mathcal{M}_0 \Vec{a}^T$ resulting from \Cref{lemma_mult_matrix_galois}.
Moreover, \Cref{eq:M_inverse_a_prime} is valid because the matrix $\mathcal{M}_0$ from \Cref{lemma_mult_matrix_galois} is invertible, and thus there exists for each $a\in \mathbb{GF}(2^n)$ an unique $a' \in \mathbb{GF}(2^n)$ such that $\Vec{a}^T=\mathcal{M}_0^{-1} \Vec{a'}^T$.
Additionally, $Z^{a'_i}$ represents the Pauli $Z$ gate classically controlled by $a'_i$. %
Thus, the phase operator $\hat{Z}_a$ can be constructed by a product of Pauli $I$ and $Z$ gates.

\subsection{Commutation of the Shift and Phase Operators}
The following lemma describes the commutation relation of the phase and the shift operators.
\begin{lemma}\label{lemma_commutation_XZ}
    In general, the shift operator $\hat{X}_a$ and the phase operator $\hat{Z}_b$ for $a,b \in \mathbb{GF}(2^n)$ do not commute.
    The commutation relation is given by
    \begin{align}
        \hat{Z}_{a}\hat{X}_b = (-1)^{a\odot b} \hat{X}_b\hat{Z}_{a}.
    \end{align}
\end{lemma}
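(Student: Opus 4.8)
The plan is to verify the identity by a direct computation of both operator products in the computational basis, relying only on the orthonormality of the basis states indexed by $\mathbb{GF}(2^n)$ and on the distributivity of the field multiplication $\odot$ over the field addition $\oplus$. No spectral or structural facts about the MUBs are needed; everything reduces to collapsing sums over matrix units $\ket{m}\bra{m'}$.

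First I would expand the left-hand side. Inserting the phase operator $\hat{Z}_a$ from \Cref{eq:phase_op} and the shift operator $\hat{X}_b=\sum_{l}\ket{l}\bra{l\oplus b}$, and using $\braket{l|m}=\delta_{l,m}$, the cross terms collapse so that
\begin{align*}
\hat{Z}_a\hat{X}_b &= \sum_{l=0}^{2^n-1}(-1)^{l\odot a}\,\ket{l}\bra{l\oplus b}.
\end{align*}
Next I would expand the right-hand side in the same manner; here the orthonormality fixes the projector index $l$ of $\hat{Z}_a$ to equal $m\oplus b$, giving
\begin{align*}
\hat{X}_b\hat{Z}_a &= \sum_{m=0}^{2^n-1}(-1)^{(m\oplus b)\odot a}\,\ket{m}\bra{m\oplus b}.
\end{align*}
Both expressions are now sums over the same matrix units $\ket{m}\bra{m\oplus b}$, so it remains only to compare their scalar coefficients term by term.

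The comparison is where the Galois field arithmetic enters. Using commutativity and distributivity of $\odot$ over $\oplus$, I would write $(m\oplus b)\odot a=(m\odot a)\oplus(a\odot b)$, and then apply the identity $(-1)^{x}(-1)^{y}=(-1)^{x\oplus y}$ from \Cref{eq:gf_exponent_sum} to factor the right-hand coefficient as $(-1)^{(m\oplus b)\odot a}=(-1)^{m\odot a}(-1)^{a\odot b}$. Multiplying the right-hand side by the prefactor $(-1)^{a\odot b}$ and using $\bigl[(-1)^{a\odot b}\bigr]^2=1$ then reproduces exactly the left-hand coefficient $(-1)^{m\odot a}$, which establishes $\hat{Z}_a\hat{X}_b=(-1)^{a\odot b}\hat{X}_b\hat{Z}_a$.

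The only point requiring care—rather than a genuine obstacle—is the bookkeeping of the exponents: one must consistently treat $\odot$ and $\oplus$ as the field operations in $\mathbb{GF}(2^n)$ rather than ordinary integer arithmetic, and invoke \Cref{eq:gf_exponent_sum} (which reduces $(-1)^{(\cdot)}$ to its dependence on the lowest bit of the field element) to split the combined exponent. Once this is handled, the remainder is a purely mechanical collapse of sums via orthonormality.
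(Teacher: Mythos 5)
Your proposal is correct and follows essentially the same route as the paper's proof: both collapse the operator products over the computational basis via orthonormality, then use distributivity of $\odot$ over $\oplus$ together with \Cref{eq:gf_exponent_sum} to split $(-1)^{(m\oplus b)\odot a}$ into $(-1)^{m\odot a}(-1)^{a\odot b}$. The only cosmetic difference is that the paper performs an index substitution $k\mapsto k'\oplus b$ inside a single expression while you expand both sides and match coefficients, which is the same calculation in different bookkeeping.
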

\begin{proof}
    \begin{align}
    \hat{Z}_{a}\hat{X}_b &= \left(\sum_{k=0}^{2^n-1}(-1)^{k\odot a}\ket{k}\bra{k}\right) \left(\sum_{l=0}^{2^n-1}\ket{l}\bra{l\oplus b}\right) \\
    &=\sum_{k=0}^{2^n-1}(-1)^{k\odot a}\ket{k}\bra{k\oplus b}.
    \end{align}
Notice that in $\mathbb{GF}(2^n)$ each element is self-inverse regarding the addition.
Thus, $k = k \oplus b \oplus b = k' \oplus b$ where $k'= k \oplus b$.
Therefore, we can replace the sum over $k$ with the sum over $k'$ since for each $k$, there exists a unique $k'$, and the sum is still over all elements of $\mathbb{GF}(2^n)$.
As a result, we obtain
\begin{align}
    \hat{Z}_{a}\hat{X}_b &=\sum_{k'=0}^{2^n-1}(-1)^{(k' \oplus b)\odot a}\ket{k'\oplus b}\bra{k'} \\
    &=\sum_{k'=0}^{2^n-1}(-1)^{b \odot a}(-1)^{k' \odot a}\ket{k'\oplus b}\bra{k'}\ket{k'}\bra{k'} \\
     &=(-1)^{a \odot b}\left( \sum_{l'=0}^{2^n-1}\ket{l' \oplus b}\bra{l'}\right) \left( \sum_{k'=0}^{2^n-1}(-1)^{k' \odot a}\ket{k'}\bra{k'} \right) \label{eq:double_sum} \\
    &=(-1)^{a \odot b}\left( \sum_{l=0}^{2^n-1}\ket{l}\bra{l\oplus b}\right) \left( \sum_{k'=0}^{2^n-1}(-1)^{k' \odot a}\ket{k'}\bra{k'} \right) \label{eq:replace_l_prime} \\
    &=(-1)^{a \odot b}\hat{X}_b\hat{Z}_{a}.
    \end{align}
\end{proof}
\Cref{eq:double_sum} is valid, since all for all $l'\ne k'$ the term $\braket{l'|k'}=0$ and therefore only terms with $l' = k'$ remain.
\Cref{eq:replace_l_prime} replaces again the sum over $l'$ with the sum over $l$ such that $l' \oplus b = l$.

\section{Proof of \Cref{theorem_overhead}}\label{sec:proof_overhead}
The theorem is restated for convenience.
\begingroup
\def\thetheorem{\ref{theorem_overhead}}
\begin{theorem}
The optimal sampling overhead for a parallel wire cut of an $n$-qubit identity operator $\mathcal{I}^{\otimes n}$ using classical communication and an arbitrary $2n$-qubit NME resource state described by density operator $\rho$ is  
\begin{equation}
   \gamma^{\rho}(\mathcal{I}^{\otimes n}) = \frac{2^{n+1}}{R(\rho)+1} - 1.
\end{equation}
\end{theorem}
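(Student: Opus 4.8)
The plan is to show that the parallel-cut overhead coincides with the overhead of simulating the $2n$-qubit maximally entangled state from the same resource, i.e.\ $\gamma^{\rho}(\mathcal{I}^{\otimes n}) = \hat{\gamma}^{\rho}(\Phi_n)$, and then to substitute the closed form of Takagi et al.\ from \Cref{eq:optimal_overhead}. I would establish this identity through the two inequalities $\gamma^{\rho}(\mathcal{I}^{\otimes n}) \le \hat{\gamma}^{\rho}(\Phi_n)$ and $\gamma^{\rho}(\mathcal{I}^{\otimes n}) \ge \hat{\gamma}^{\rho}(\Phi_n)$, after which $\gamma^{\rho}(\mathcal{I}^{\otimes n}) = \frac{2^{n+1}}{R(\rho)+1}-1$ is immediate.

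For the upper bound I would start from an optimal decomposition $\Phi_{BC} = \sum_i c_i \mathcal{F}_i(\rho_{BC})$ with $\mathcal{F}_i \in \locc(B,C)$ and $\sum_i|c_i| = \hat{\gamma}^{\rho}(\Phi_n)$, which exists by \Cref{eq:QPD_state}. The $n$-qubit teleportation channel $\mathcal{T}_n$ of \Cref{sec:teleportation} is itself LOCC between the sender $A\otimes B$ and the receiver $C$ and is linear in its resource argument. Hence, applying each $\mathcal{F}_i$ to $\rho_{BC}$ and then teleporting yields operations $\mathcal{G}_i \in \locc(A\otimes B, C)$, and $\mathcal{V}_{ABC} := \sum_i c_i \mathcal{G}_i$ reproduces, by linearity and the fact that teleportation with $\Phi_n$ realizes the identity, the defining property \Cref{eq:non_local_V_with_NME}. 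Its overhead is $\sum_i|c_i| = \hat{\gamma}^{\rho}(\Phi_n)$, so the minimum in \Cref{eq:min_sampling_overhead_wire_nme} is at most this value.

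For the lower bound I would use the maximally entangled input (Choi) trick. Let $\mathcal{V}_{ABC} = \sum_i c_i \mathcal{F}_i$ with $\mathcal{F}_i \in \locc(A\otimes B, C)$ be any admissible wire-cut decomposition. Since $\tr_{AB}[\mathcal{V}_{ABC}(\varphi\otimes\rho_{BC})] = \varphi$ holds for all $\varphi \in D(A)$, and these span the operator space on $A$, the induced map $\mathcal{N} := \tr_{AB}[\mathcal{V}_{ABC}(\,\cdot\,\otimes\rho_{BC})]$ equals the identity channel $\mathcal{I}_{A\to C}$. Feeding one half of a maximally entangled pair $\Phi_{RA}$, with reference $R$ of dimension $2^n$ kept on the sender's side, into $\mathcal{N}$ therefore returns $\Phi_{RC} = \sum_i c_i \tr_{AB}[\mathcal{F}_i(\Phi_{RA}\otimes\rho_{BC})]$. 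Setting $\mathcal{G}_i(\rho_{BC}) := \tr_{AB}[\mathcal{F}_i(\Phi_{RA}\otimes\rho_{BC})]$ and noting that preparing $\Phi_{RA}$ and tracing out $A\otimes B$ are local to the sender, each $\mathcal{G}_i$ is LOCC across the $R$--$C$ bipartition. Thus $\{c_i,\mathcal{G}_i\}$ simulates the $2n$-qubit maximally entangled state $\Phi_{RC} \cong \Phi_n$ from $\rho$ with overhead $\sum_i|c_i|$, yielding $\hat{\gamma}^{\rho}(\Phi_n) \le \gamma^{\rho}(\mathcal{I}^{\otimes n})$.

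I expect the lower bound to be the main obstacle: one must check carefully that, after substituting the reference system and tracing out $A\otimes B$, the maps $\mathcal{G}_i$ indeed remain within $\locc(R,C)$ with $\rho$ as their sole entangled resource and with robustness unchanged at $R(\rho)$, so that $\{c_i,\mathcal{G}_i\}$ is admissible in the minimization defining $\hat{\gamma}^{\rho}(\Phi_n)$. This generalizes the single-wire reasoning to the $2^n$-dimensional systems of a parallel cut. With both inequalities established, substituting \Cref{eq:optimal_overhead} gives $\gamma^{\rho}(\mathcal{I}^{\otimes n}) = \hat{\gamma}^{\rho}(\Phi_n) = \frac{2^{n+1}}{R(\rho)+1}-1$, completing the proof.
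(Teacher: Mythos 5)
Your proposal is correct and follows essentially the same route as the paper: both directions reduce $\gamma^{\rho}(\mathcal{I}^{\otimes n})$ to $\hat{\gamma}^{\rho}(\Phi_n)$ (teleportation composed with an optimal QPD for $\Phi_n$ for the upper bound, and feeding the induced identity channel a maximally entangled reference for the lower bound) before substituting the closed form of Takagi et al. The only difference is presentational: the paper realizes your Choi-state argument for the lower bound as an explicit circuit that locally prepares $\Phi_{AB}$ on the sender's side and swaps it with the sender's half of the resource, which makes manifest that the resulting maps lie in $\locc(E,C)$ with $\rho$ as the sole shared entanglement --- precisely the admissibility point you flag as needing care.
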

\addtocounter{theorem}{-1}
\endgroup

\begin{proof}
    To establish the optimal sampling overhead for a parallel cut of $n$ wires using arbitrary $2n$-qubit states, we generalize the proof for cutting a single wire employing arbitrary two-qubit states, as outlined in~\cite{Bechtold2024}.
    Our goal is to establish that the sampling overhead $\gamma^{\rho}(\mathcal{I}^{\otimes n})$ for parallel cutting $n$ wires with arbitrary $2n$-qubit states $\rho$ is equivalent to the sampling overhead  $\hat{\gamma}^{\rho}(\Phi_n)$ for simulating the maximally entangled $2n$-qubit state $\Phi_n$ using $\rho$.
    This equivalence is demonstrated by proving that $\hat{\gamma}^{\rho}(\Phi_n)$ serves as both an upper and lower bound for $\gamma^{\rho}(\mathcal{I}^{\otimes n})$.

    The upper bound $\gamma^{\rho}(\mathcal{I}^{\otimes n}) \le \hat{\gamma}^{\rho}(\Phi_n)$ is established by showing that a QPD for $\Phi_n$ can be used to simulate the identity operator $\mathcal{I}^{\otimes n}$ using quantum teleportation. 
    Consider three $n$-qubit Hilbert spaces $A$, $B$, $C$.
    A wire cut based on teleportation can implement the identity operator $\mathcal{I}^{\otimes n}_{A \rightarrow C}$ to quasi-probabilistically transfer an $n$-qubit state $\varphi_A$ from $A$ to $C$.
    This process necessitates the maximally entangled $2n$-qubit state $\Phi_n$ %
    , which can be quasiprobabilistically simulated.
    To this end, a QPD for the maximally entangled state $\Phi_{BC}$ is employed, based on the NME state $\rho_{BC}$. 
    The QPD is expressed as $\Phi_{BC} = \sum_i c_i \mathcal{F}_i(\rho_{BC})$, where $\mathcal{F}_i \in \locc(B,C)$ and the sampling overhead is given by $\sum_i |c_i|$.
    Incorporating this QPD within the $n$-qubit teleportation described by the operator $\mathcal{T}_n$, yields a QPD for the non-local $n$-qubit identity operator $\mathcal{I}^{\otimes n}_{A \rightarrow C}$ between systems $A$ and $C$ when $A\otimes B$ is traced out:
    \begin{align}
        \forall \varphi \in D(A): \varphi &= \operatorname{Tr}_{AB}\left[\mathcal{T}_n(\varphi\otimes\Phi_{BC})\right]\\
        &= \sum_i c_i \operatorname{Tr}_{AB}\left[\mathcal{T}_n(\varphi \otimes \mathcal{F}_i(\rho_{BC}))\right].
    \end{align}
    This establishes an $n$-qubit wire cut using NME states $\rho_{BC}$, as each operator within the QPD solely consists of LOCC between $A\otimes B$ and $C$: $\mathcal{T}_n\in \locc(A\otimes B, C)$ and $\mathcal{F}_i \in \locc(B,C)$.
    The sampling overhead of this teleportations-based wire cut is $\sum_i |c_i|$, which, by design, is equivalent to the sampling overhead for simulating the maximally entangled $2n$-qubit state using $\rho$. 
    Consequently, the upper bound $\gamma^{\rho}(\mathcal{I}^{\otimes n}) \le \hat{\gamma}^{\rho}(\Phi_n)$ is proven.

    To prove the lower bound $\gamma^{\rho}(\mathcal{I}^{\otimes n}) \ge \hat{\gamma}^{\rho}(\Phi_n)$, we consider a parallel cut of $n$ wires with NME states, which can be modeled analogously to the cut of one wire with NME states as depicted in \Cref{fig:wire_cut_model}~(c) except that now the Hilbert spaces $A$, $B$, and $C$ are of dimension $2^n$, each consisting of $n$ qubits.
    Consequently, the non-local operator $\mathcal{V}_{ABC}$, which acts on $3n$ qubits, implements $\mathcal{I}^{\otimes n}_{A \rightarrow C}$ when system $A\otimes B$ is traced out afterward.
    This is analogous to the operation defined for the cut of a single wire with NME states in \Cref{eq:non_local_V_with_NME}.
    We aim to demonstrate that a QPD for $\mathcal{V}_{ABC}$, which achieves the optimal sampling overhead $\gamma^{\rho}(\mathcal{I}^{\otimes n})$, can be employed to construct a QPD for the maximally entangled state $\Phi_n$ using the NME state $\rho$, maintaining the same overhead. 
    Therefore, an optimal parallel cut of $n$ wires is given by an optimal QPD for $\mathcal{V}_{ABC}$ , which can be expressed as $\mathcal{V}_{ABC} = \sum_i c_i \mathcal{F}_i$ with $\mathcal{F}_i \in \locc(A \otimes B, C)$.

    \begin{figure}
        \centering
        \includegraphics{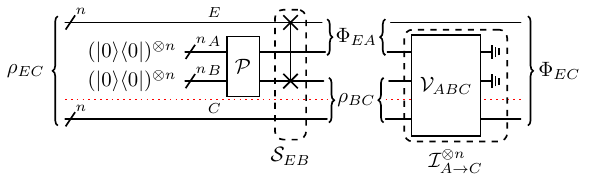}
        \caption{Circuit used in the proof, with qubits separated into subsystems $E \otimes A\otimes B$ and $C$ by the red dotted line.}
        \label{fig:proof_circuit}
    \end{figure}
    
    Consider an additional $n$-qubit Hilbert space $E $, which, together with the Hilbert spaces $A$, $B$, and $C$, forms the composite system $E  \otimes A \otimes B \otimes C$ consisting of $4n$ qubits.
    The system is partitioned into two subsystems: $E  \otimes A \otimes B$ and $C$.
    Initially, within the composite system, $E \otimes C$ hosts an NME state $\rho_{E C}$, while systems $A$ and $B$ are both in the inital state $(\ketbras{0})^{\otimes n}$, as shown on the left side of \Cref{fig:proof_circuit}.
    The bipartite entanglement between the two subsystems is solely characterized by the state $\rho_{E C}$, as it is the only shared state.

    The initial state undergoes a transformation where an operator $\mathcal{P}$ prepares the maximally entangled state $\Phi_{AB}$ between systems $A$ and $B$, starting from the initial state $(\ketbras{0})^{\otimes 2n}_{AB}$.
    Following this, systems $E $ and $B$ exchange states by applying the swap operator $\mathcal{S}_{E B}$.
    This yields the state, illustrated in the middle of \Cref{fig:proof_circuit}, given by
    \begin{align}\label{eq:prepare_and_swap}
        \mathcal{S}_{E B}\left(\mathcal{P}((\ketbras{0})^{\otimes 2n}_{AB}) \otimes \rho_{E C}\right) = \Phi_{E A}\otimes \rho_{BC}.
    \end{align}
    Since $\mathcal{P}$ operates on $A \otimes B$ and $\mathcal{S}_{E B}$ solely changes the states of subsystems $E$ and $B$, their action leaves the system $C$ unaffected. 
    Hence, they are local operators with respect to the defined partition of the subsystems.

    Following the previous state preparation, we execute the identity $\mathcal{I}^{\otimes n}_{A \rightarrow C}$ to produce the maximally entangled state $\Phi_{E C}$ employing $\rho_{BC}$.
    In this process, the state of the system $A$ is transferred to system $C$ through the non-local operator $\mathcal{V}_{ABC}$, as shown on the right side of \Cref{fig:proof_circuit}.
    To quasiprobabilistically simulate $\mathcal{V}_{ABC}$ with local transformations, we employ the QPD, which attains the optimal sampling overhead $\gamma^{\rho}(\mathcal{I}^{\otimes n})$ with $\rho_{BC}$:
    \begin{align}
        \mathcal{V}_{ABC} = \sum_i c_i \mathcal{F}_i,
    \end{align} 
    where $\mathcal{F}_i \in \locc(A\otimes B,C)$.
    This QPD's sampling overhead is expressed as $\sum_i |c_i|$, which is equal to $\gamma^{\rho}(\mathcal{I}^{\otimes n})$, as we assume a QPD with optimal sampling overhead as defined in \Cref{eq:min_sampling_overhead_wire_nme}.

    Therefore, the maximally entangled state $\Phi_{E C}$ between the two subsystems $E  \otimes A\otimes B$ and $C$ is  quasiprobabilistically simulated from the state $\Phi_{E A}\otimes \rho_{BC}$ by leveraging a QPD:
    \begin{align}
        \Phi_{E C} &= (\mathcal{I}^{\otimes n}_{E } \otimes \mathcal{I}^{\otimes n}_{A \rightarrow C})(\Phi_{E A}\otimes \rho_{BC})\\
        &= \operatorname{Tr}_{AB}[(\mathcal{I}^{\otimes n}_{E } \otimes \mathcal{V}_{ABC})(\Phi_{E A}\otimes \rho_{BC})] \\
                &= \sum_i c_i \operatorname{Tr}_{AB}[(\mathcal{I}^{\otimes n}_{E } \otimes \mathcal{F}_i)(\Phi_{E A}\otimes \rho_{BC})].%
    \end{align}
    Thus, the QPD of the maximally entangled state utilizes the local operators $\widetilde{\mathcal{F}}_i  \in \locc(E  \otimes A\otimes B, C)$ defined as
    \begin{align}
        \widetilde{\mathcal{F}}_i((\ketbras{0})^{\otimes 2n}_{AB} \otimes \rho_{E C}) &:= (\mathcal{I}^{\otimes n}_{E } \otimes \mathcal{F}_i)\mathcal{S}_{E B}(\mathcal{P}((\ketbras{0})^{\otimes 2n}_{AB}) \otimes \rho_{E C})\label{eq:local_operator}\\
        &\phantom{:}= (\mathcal{I}^{\otimes n}_{E } \otimes \mathcal{F}_i)(\Phi_{E A}\otimes \rho_{BC}),
    \end{align}
    where \Cref{eq:local_operator} uses \Cref{eq:prepare_and_swap} to prepare the state $\Phi_{E A}\otimes \rho_{BC}$.

    Considering that both the extension of the quantum system by additional qubits in the initial state $(\ketbras{0})^{\otimes 2n}_{AB}$ and the subsequent partial trace operation occur entirely within the same subsystem and are both completely positive and trace-preserving operators, we can define the operators $\mathcal{G}_i(\rho_{E C})$ as follows:
    \begin{align}
        \mathcal{G}_i(\rho_{E C}) =  \operatorname{Tr}_{AB} \left[ \widetilde{\mathcal{F}}_i((\ketbras{0})^{\otimes 2n}_{AB} \otimes \rho_{E C})\right],
    \end{align}
    which inherently belong to $\locc(E ,C)$.
    
    As a consequence, we can establish a QPD for the maximally entangled state between $E $ and $C$ as follows:
    \begin{align}
        \Phi_{E C} &= \sum_i c_i \operatorname{Tr}_{AB}\left[\widetilde{\mathcal{F}}_i((\ketbras{0})^{\otimes 2n}_{AB} \otimes \rho_{E C})\right] \\
        &=  \sum_i c_i \mathcal{G}_i(\rho_{E C}).
    \end{align}
    This construction ensures that the sampling overhead of the given QPD for the maximally entangled state is given by $\sum_i |c_i|$. 
    Therefore, to achieve the minimal sampling overhead, we must have $\hat{\gamma}^{\rho}(\Phi) \leq \sum_i |c_i|$. 
    Since we assumed $\gamma^{\rho}(\mathcal{I}^{\otimes n}) = \sum_i |c_i|$, this implies $\gamma^{\rho}(\mathcal{I}^{\otimes n}) \ge \hat{\gamma}^{\rho}(\Phi)$.
    Consequently, this confirms the lower bound, thereby completing the proof.
\end{proof}

\section{Sampling Overhead for Composite States}\label{sec:appendix_proof_cut_composite_states}
In this section, we demonstrate that the sampling overhead incurred by performing a single parallel wire cut using the composite entangled states is less than the overhead resulting from separate wire cuts for each entangled state. 
To lay the groundwork for our proof, we initially establish a lower bound on the robustness of entanglement for composite states, as detailed in~\Cref{lemma_lower_bound_robustness}.
Following this, we will proceed with the proof of \Cref{theorem_cut_composite_states}.
Subsequently, we will proof \Cref{theorem_composite_max_ent} and \Cref{theorem_composite_separable}.

\begin{lemma}\label{lemma_lower_bound_robustness}
     Let $\{\rho^{(i)}_{A_iB_i}\}_{i=0}^{m-1}$ be a set of $m$ density operators with each $\rho^{(i)}_{A_iB_i}\in D(A_i \otimes B_i)$ for the corresponding $2^{k_i}$-dimensional Hilbert spaces $A_i$ and $B_i$ such that $n= \sum_{i=0}^{m-1}k_i$.
     Then, for the composite state $\bigotimes_{i=0}^{m-1}\rho^{(i)}_{A_iB_i}$, the following lower bound for the robustness of entanglement between the $2^{n}$-dimensional Hilbert spaces $A=\bigotimes_{i=0}^{m-1}A_i$ and $B=\bigotimes_{i=0}^{m-1}B_i$ holds:
    \begin{align}
        R\left(\bigotimes_{i=0}^{m-1}\rho^{(i)}_{A_iB_i}\right) \ge \prod_{i=0}^{m-1}\left(R\left(\rho^{(i)}_{A_iB_i}\right)+1\right) -1.
    \end{align}
\end{lemma}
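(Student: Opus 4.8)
The plan is to prove the lower bound through the dual (entanglement-witness) formulation of the generalized robustness. First I would reformulate \Cref{eq:robustness} as
\[
R(\rho)+1 = \min\{\tr[\tau] \mid \tau \succeq \rho,\ \tau \in \mathrm{cone}(S(A,B))\},
\]
i.e.\ the minimal trace of an unnormalized separable operator dominating $\rho$; a short primal manipulation (writing $\tau = (1+\lambda)\sigma$ and $\omega = (\tau-\rho)/\lambda$) shows this equals the robustness. Taking the SDP dual, which has no duality gap since Slater's condition holds, yields the witness form
\[
R(\rho)+1 = \max\{\tr[W\rho] \mid W \succeq 0,\ \tr[W\sigma]\le 1\ \forall\,\sigma\in S(A,B)\}.
\]
This is precisely the tool for a \emph{lower} bound: any single dual-feasible $W$ already certifies $R(\rho)+1\ge\tr[W\rho]$.

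Next I would pick, for each $i$, an optimal witness $W_i \succeq 0$ with $\tr[W_i\sigma_i]\le 1$ on $S(A_i,B_i)$ and $\tr[W_i\,\rho^{(i)}_{A_iB_i}] = R(\rho^{(i)}_{A_iB_i})+1$; such witnesses exist by compactness of the feasible set, since the maximally mixed state is separable and bounds $\tr[W]$. Setting $W := \bigotimes_{i} W_i$ on $A\otimes B$ with $A=\bigotimes_i A_i$ and $B=\bigotimes_i B_i$, the trace factorizes, so $\tr[W \bigotimes_i \rho^{(i)}_{A_iB_i}] = \prod_i \bigl(R(\rho^{(i)}_{A_iB_i})+1\bigr)$. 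Hence, \emph{provided $W$ is dual-feasible for the coarse $A\!:\!B$ cut}, the dual immediately gives $R(\bigotimes_i\rho^{(i)}_{A_iB_i})+1\ge\prod_i\bigl(R(\rho^{(i)}_{A_iB_i})+1\bigr)$, which is exactly the claim.

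The main obstacle is this feasibility step: I must show $\tr[W\sigma]\le 1$ for every $\sigma\in S(A,B)$, even though the $A$-part of a product state may be entangled across the $A_i$ (and the $B$-part across the $B_i$). Since a linear functional is maximized over $S(A,B)$ at a pure product state, this reduces to the submultiplicativity of the maximal product overlap $h(X):=\max_{\text{product }|\phi\psi\rangle}\langle\phi\psi|X|\phi\psi\rangle$, namely $h\bigl(\bigotimes_i W_i\bigr)\le\prod_i h(W_i)\le 1$; by induction it suffices to treat two factors. To attack this, I would Schmidt-decompose an optimal product vector, $|\Phi\rangle=\sum_s\sqrt{p_s}\,|a_s\rangle_{A_1}|a'_s\rangle_{A_2}$ and $|\Psi\rangle=\sum_t\sqrt{q_t}\,|b_t\rangle_{B_1}|b'_t\rangle_{B_2}$, which rewrites $\langle\Phi\Psi|W_1\otimes W_2|\Phi\Psi\rangle = \langle\pi|(M^{(1)}\circ M^{(2)})|\pi\rangle$, where $\circ$ is the Hadamard product, $M^{(1)}_{(st),(s't')}=\langle a_sb_t|W_1|a_{s'}b_{t'}\rangle$ and $M^{(2)}$ is the analogous compression of $W_2$ (both positive semidefinite, so $M^{(1)}\circ M^{(2)}\succeq 0$ by the Schur product theorem), and $|\pi\rangle=\sqrt p\otimes\sqrt q$ is a product vector. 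The constraint $h(W_j)\le 1$ becomes the block-positivity statement that $\langle e|M^{(j)}|e\rangle\preceq I$ after tracing one factor against any unit vector $|e\rangle$; the hard part is to leverage exactly this block-positivity together with the product structure of $|\pi\rangle$ to bound the Hadamard-product expectation, thereby controlling the entanglement that the tensor reorganization introduces into the compressed matrices.

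Finally, for context I would note that the reverse inequality is elementary: if $\rho^{(i)}_{A_iB_i}\preceq s_i\sigma_i$ with $\sigma_i$ separable, then $\bigotimes_i\rho^{(i)}_{A_iB_i}\preceq\bigl(\prod_i s_i\bigr)\bigotimes_i\sigma_i$ with $\bigotimes_i\sigma_i$ separable, yielding $R(\bigotimes_i\rho^{(i)}_{A_iB_i})+1\le\prod_i\bigl(R(\rho^{(i)}_{A_iB_i})+1\bigr)$. Combined with the lemma this would show $1+R$ is in fact multiplicative across tensor factors, though only the stated lower bound is required for the sampling-overhead results that follow.
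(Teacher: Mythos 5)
Your dual (witness) formulation of the generalized robustness is correct, and you have correctly reduced the lemma to a single claim: that the product witness $W=\bigotimes_i W_i$ remains dual-feasible for the coarse $A\!:\!B$ cut, i.e.\ that $\max_{\sigma\in S(A,B)}\tr\bigl[(W_1\otimes W_2)\sigma\bigr]\le \prod_i\max_{\sigma_i\in S(A_i,B_i)}\tr[W_i\sigma_i]$. But that claim is the entire difficulty of the lemma, and your sketch does not establish it: the Schur product theorem only gives positivity of the compressed Hadamard product, not the required bound by $1$, and you concede as much when you write that ``the hard part is to leverage exactly this block-positivity \dots to bound the Hadamard-product expectation.'' This is not a routine loose end. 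The submultiplicativity of the separable support function under tensor products cannot hold for arbitrary dual-feasible $W_1,W_2$: combined with your own (correct) elementary converse it would force $\log(1+R)$ to be additive for \emph{all} pairs of bipartite states, whereas the paper itself (proof of \Cref{lemma_robustness_composite_max_ent}, citing Rubboli and Tomamichel) stresses that this log-robustness is generally non-additive and only known to be additive under special hypotheses. Even where the lemma is true, duality only guarantees that \emph{some} feasible witness on $A\otimes B$ attains the product value; it need not be $\bigotimes_i W_i$, so your construction can fail while the statement stands. The gap is therefore essential, not cosmetic.

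The paper's proof avoids this issue entirely by working on the primal side with the LOCC fidelity of distillation $f_{\locc}$. Since running LOCC protocols in parallel on the pairs $(A_i,B_i)$ is again an LOCC protocol for $(A,B)$, restricting the maximization to product maps $\bigotimes_i\Lambda_i$ immediately yields supermultiplicativity of $f_{\locc}$, and the identity $f_{\locc(A,B)}(\rho)=2^{-n}(R(\rho)+1)$ imported from Regula et al.\ converts this into the claimed lower bound on $R$. In other words, the closure of LOCC under parallel composition does for the paper exactly the work that your unproven feasibility step is supposed to do for you. To complete a witness-based argument you would have to exhibit specific optimal witnesses whose tensor products are provably feasible --- which in effect amounts to re-importing the distillation-fidelity identity in dual language. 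Your closing observation that $\rho^{(i)}\preceq s_i\sigma_i$ gives the reverse inequality is correct and worth retaining, but it does not help close the gap.
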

\begin{proof}
We begin by recalling the fidelity of distillation~\cite{Takagi2024}.
For an arbitrary state $\rho_{AB}\in D(A\otimes B)$, the fidelity of distillation is defined as
\begin{align}
    f_{\locc(A,B)}\left(\rho_{AB}\right) = \max_{\Lambda \in \locc(A,B)}\left\langle\Phi_{AB}\middle|\Lambda\left(\rho_{AB}\right)\middle|\Phi_{AB}\right\rangle
\end{align}
where $\Phi_{AB}$ is the maximally entangled $2n$-qubit state.
Between the fidelity of distillation and the generalized robustness of entanglement, the following relationship is established~\cite[Corollary 9]{Regula2020}:
\begin{align}
    f_{\locc(A,B)}\left(\rho_{AB}\right) = 2^{-n}\left(R\left(\rho_{AB}\right) +1\right) 
\end{align}
or equivalently
\begin{align}
    R\left(\rho_{AB}\right) = 2^{n}f_{\locc(A,B)}\left(\rho_{AB}\right) - 1.
\end{align}

Considering the composite $2n$-qubit state $\bigotimes_{i=0}^{m-1}\rho^{(i)}_{A_iB_i}$, we have:
{\allowdisplaybreaks
    \begin{align}
        R\left(\bigotimes_{i=0}^{m-1}\rho^{(i)}_{A_iB_i}\right) &= 2^{n}f_{\locc(A,B)}\left(\bigotimes_{i=0}^{m-1}\rho^{(i)}_{A_iB_i}\right) -1 \\
        &= 2^{n}\max_{\Lambda \in \locc(A,B)}\left\langle\Phi_{AB}\middle|\Lambda\left(\bigotimes_{i=0}^{m-1}\rho^{(i)}_{A_iB_i}\right)\middle|\Phi_{AB}\right\rangle -1 \\ \label{eq:ineq_composite_space_to_smaller_space}
        &\ge 2^{n}\max_{\substack{\bigotimes_{i=0}^{m-1}\Lambda_i \\ \Lambda_i \in \locc(A_i, B_i)}}\left\langle\Phi_{AB}\middle|\bigotimes_{i=0}^{m-1}\Lambda_i\left(\rho^{(i)}_{A_iB_i}\right)\middle|\Phi_{AB}\right\rangle -1\\ 
        &= 2^{n}\max_{\substack{\bigotimes_{i=0}^{m-1}\Lambda_i, \\ \Lambda_i \in \locc(A_i, B_i)}}\left(\prod_{i=0}^{m-1}\left\langle\Phi_{A_iB_i}\middle|\Lambda_i\left(\rho^{(i)}_{A_iB_i}\right)\middle|\Phi_{A_iB_i}\right\rangle\right) -1 \label{eq:expectation_product_state}\\ 
        &= 2^{n} \prod_{i=0}^{m-1} \left(\max_{\Lambda_i \in \locc(A_i,B_i)}\left\langle\Phi_{A_iB_i}\middle|\Lambda\left(\rho^{(i)}_{A_iB_i}\right)\middle|\Phi_{A_iB_i}\right\rangle\right) -1 \label{eq:exchange_max_and_prod} \\ 
        &= 2^{n} \prod_{i=0}^{m-1}f_{\locc(A_i,B_i)}\left(\rho^{(i)}_{A_iB_i}\right)  -1 \\ 
        &= 2^{n} \prod_{i=0}^{m-1}\left(2^{-k_i}\left(R\left(\rho^{(i)}_{A_iB_i}\right) +1\right)\right)  -1 \\ 
        &= 2^{n}2^{- \left(\sum_{i=0}^{m-1} k_i\right)} \prod_{i=0}^{m-1}\left(R\left(\rho^{(i)}_{A_iB_i}\right) +1\right) -1 \\ 
        &= \prod_{i=0}^{m-1}\left(R\left(\rho^{(i)}_{A_iB_i}\right)+1\right) -1
    \end{align}}
where Inequality~(\ref{eq:ineq_composite_space_to_smaller_space}) holds because $\left\{\bigotimes_{i=0}^{m-1} \Lambda_i \middle| \Lambda_i \in \locc(A_i,B_i)\right\} \subseteq \locc(A,B)$.
Furthermore, the validity of \Cref{eq:expectation_product_state} stems from the fact that $\Phi_{AB} = \bigotimes_{i=0}^{m-1} \Phi_{A_iB_i}$, which enables the fidelity of the composite state $\left\langle\Phi_{AB}\middle|\bigotimes_{i=0}^{m-1}\Lambda_i\left(\rho^{(i)}_{A_iB_i}\right)\middle|\Phi_{AB}\right\rangle$ to be expressed as the product of the fidelities of the individual states $\prod_{i=0}^{m-1}\left\langle\Phi_{A_iB_i}\middle|\Lambda_i\left(\rho^{(i)}_{A_iB_i}\right)\middle|\Phi_{A_iB_i}\right\rangle$.
Lastly, \Cref{eq:exchange_max_and_prod} holds true because the fidelities  $\left\langle\Phi_{A_iB_i}\middle|\Lambda\left(\rho^{(i)}_{A_iB_i}\right)\middle|\Phi_{A_iB_i}\right\rangle$ are non-negative.
This allows the independent maximization of each fidelity, unaffected by the others. 
\end{proof}

In the following, we provide a proof for \Cref{theorem_cut_composite_states}, which is restated for convenience.

\begingroup
\def\thetheorem{\ref{theorem_cut_composite_states}}
\begin{theorem}
    Let $\{\rho^{(i)}_{A_iB_i}\}_{i=0}^{m-1}$ be a set of $m \ge 2$ density operators with each $\rho^{(i)}_{A_iB_i}\in D(A_i \otimes B_i)$ for the corresponding $2^{k_i}$-dimensional Hilbert spaces $A_i$ and $B_i$ such that $n=\sum_{i=0}^{m-1} k_i$.
    Moreover, let each state $\rho^{(i)}_{A_iB_i}$ be an NME state, i.e., $0 \le R\left(\rho^{(i)}_{A_iB_i}\right) < 2^{k_i} -1$.
    Then, the following inequality holds for the composite state $\rho= \bigotimes_{i=0}^{m-1}\rho^{(i)}_{A_iB_i}$:
    \begin{align}
        \prod_{i=0}^{m-1}\gamma^{\rho^{(i)}_{A_iB_i}}(\mathcal{I}^{\otimes k_i}) > \gamma^{\rho}(\mathcal{I}^{\otimes n}). 
    \end{align}
\end{theorem}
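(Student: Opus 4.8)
The plan is to reduce the statement to the explicit overhead formula of \Cref{theorem_overhead} together with the composite-robustness bound of \Cref{lemma_lower_bound_robustness}, and thereby isolate a single elementary inequality in real variables. First I would abbreviate $R_i := R(\rho^{(i)}_{A_iB_i})$ and set $x_i := 2^{k_i}/(R_i+1)$, so that by \Cref{theorem_overhead} each individual overhead reads $\gamma^{\rho^{(i)}_{A_iB_i}}(\mathcal{I}^{\otimes k_i}) = 2^{k_i+1}/(R_i+1) - 1 = 2x_i - 1$. The NME hypothesis $R_i < 2^{k_i}-1$ is exactly $R_i+1 < 2^{k_i}$, which translates into the strict bound $x_i > 1$; this strictness is what ultimately forces the strict inequality in the theorem.

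Next I would bound the right-hand side from above. Since $\gamma^{\rho}(\mathcal{I}^{\otimes n}) = 2^{n+1}/(R(\rho)+1) - 1$ is decreasing in $R(\rho)$, the lower bound $R(\rho)+1 \ge \prod_i (R_i+1)$ supplied by \Cref{lemma_lower_bound_robustness} yields
\begin{align}
\gamma^{\rho}(\mathcal{I}^{\otimes n}) \le \frac{2^{n+1}}{\prod_{i=0}^{m-1}(R_i+1)} - 1 = 2\prod_{i=0}^{m-1} x_i - 1,
\end{align}
where I used $2^{n+1} = 2\prod_i 2^{k_i}$ because $n = \sum_i k_i$. Hence the theorem follows once I establish the purely algebraic inequality
\begin{align}
\prod_{i=0}^{m-1}(2x_i - 1) > 2\prod_{i=0}^{m-1} x_i - 1
\end{align}
for all $x_i > 1$ and $m \ge 2$, since its left-hand side is precisely $\prod_i \gamma^{\rho^{(i)}_{A_iB_i}}(\mathcal{I}^{\otimes k_i})$.

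I would prove this inequality by induction on $m$. The base case $m=2$ is the identity $(2x_0-1)(2x_1-1) - (2x_0x_1 - 1) = 2(x_0-1)(x_1-1)$, which is strictly positive for $x_0,x_1 > 1$. For the inductive step, writing $P := \prod_{i=0}^{m-2} x_i$ (note $P > 1$) and $Q := \prod_{i=0}^{m-2}(2x_i-1)$, the hypothesis gives $Q \ge 2P - 1 > 0$; multiplying by $2x_{m-1}-1 > 0$ and using the factorization $(2x-1)(2P-1) = (2xP-1) + 2(x-1)(P-1)$ shows the $m$-factor product strictly exceeds $2x_{m-1}P - 1 = 2\prod_{i=0}^{m-1}x_i - 1$. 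Chaining the two displays then gives $\prod_i \gamma^{\rho^{(i)}_{A_iB_i}}(\mathcal{I}^{\otimes k_i}) > 2\prod_i x_i - 1 \ge \gamma^{\rho}(\mathcal{I}^{\otimes n})$, which is the claim.

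The conceptual heavy lifting is all contained in \Cref{lemma_lower_bound_robustness}; once that superadditive bound on the robustness of a tensor product is in hand, the remainder is elementary. The one point requiring care — and the place where the argument would break if a component were maximally entangled — is tracking strictness: the term $2(x-1)(P-1)$ is positive only because every $x_i > 1$, i.e. because each component is \emph{strictly} non-maximally entangled, and the degenerate case $m=1$ gives equality, which explains the hypothesis $m \ge 2$.
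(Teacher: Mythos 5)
Your proof is correct and rests on the same two pillars as the paper's: the overhead formula of \Cref{theorem_overhead} and the superadditivity bound of \Cref{lemma_lower_bound_robustness}, combined via the factorization $(2a-1)(2b-1) = (2ab-1) + 2(a-1)(b-1)$ and an induction on $m$. The one substantive difference is organizational, and it buys something. The paper runs the induction over the quantum states themselves, invoking \Cref{lemma_lower_bound_robustness} at every inductive step and needing, at each step, that the partial tensor product $\bigotimes_{i=0}^{m-1}\rho^{(i)}_{A_iB_i}$ is itself an NME state so that the discarded term $2\bigl(\tfrac{2^{k_m}}{R(\rho^{(m)})+1}-1\bigr)\bigl(\tfrac{2^{n'}}{R(\bigotimes_i\rho^{(i)})+1}-1\bigr)$ is strictly positive -- a fact the paper asserts but does not justify (it follows from submultiplicativity of $1+R$, which is not stated there). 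Your version applies \Cref{lemma_lower_bound_robustness} exactly once, to pass from $\gamma^{\rho}(\mathcal{I}^{\otimes n})$ to $2\prod_i x_i - 1$, and then runs the induction on the purely scalar inequality $\prod_i(2x_i-1) > 2\prod_i x_i - 1$ for $x_i>1$; the strictness is then guaranteed directly by the hypothesis $x_i>1$ on the \emph{individual} states, so the question of whether partial tensor products remain NME never arises. That is a cleaner separation of the quantum-information content from the algebra, and your closing remarks correctly identify where strictness would fail (a maximally entangled component gives $x_i=1$) and why $m\ge 2$ is needed.
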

\addtocounter{theorem}{-1}
\endgroup

\begin{proof}
    We proceed by induction on $m$. 
    Therefore, we start by examining the base case when $m=2$ with $k_0 + k_1=n$.
    We apply \Cref{theorem_overhead} to the product on the left-hand side:
    {\allowdisplaybreaks
    \begin{align}
       \nonumber&\left(\frac{2^{k_0+1}}{R\left(\rho^{(0)}_{A_0B_0}\right)+1}-1\right)\left(\frac{2^{k_1+1}}{R\left(\rho^{(1)}_{A_1B_1}\right)+1}-1\right)\label{eq:overhead_composite_state_of_two_equality_start}\\
        &=\frac{2^{n+2}}{\left(R\left(\rho^{(0)}_{A_0B_0}\right)+1\right)\left(R\left(\rho^{(1)}_{A_1B_1}\right)+1\right)} - \frac{2^{k_0+1}}{R\left(\rho^{(0)}_{A_0B_0}\right)+1} - \frac{2^{k_1+1}}{R\left(\rho^{(1)}_{A_1B_1}\right)+1} + 1\\
        &=2\frac{2^{n+1}}{\left(R\left(\rho^{(0)}_{A_0B_0}\right)+1\right)\left(R\left(\rho^{(1)}_{A_1B_1}\right)+1\right)} -1 - \frac{2^{k_0+1}}{R\left(\rho^{(0)}_{A_0B_0}\right)+1} - \frac{2^{k_1+1}}{R\left(\rho^{(1)}_{A_1B_1}\right)+1} + 2\\
        \begin{split}
            &=\frac{2^{n+1}}{\left(R\left(\rho^{(0)}_{A_0B_0}\right)+1\right)\left(R\left(\rho^{(1)}_{A_1B_1}\right)+1\right)} -1\\
            &\phantom{=}+\frac{2^{n+1}}{\left(R\left(\rho^{(0)}_{A_0B_0}\right)+1\right)\left(R\left(\rho^{(1)}_{A_1B_1}\right)+1\right)} - \frac{2^{k_0+1}}{R\left(\rho^{(0)}_{A_0B_0}\right)+1} - \frac{2^{k_1+1}}{R\left(\rho^{(1)}_{A_1B_1}\right)+1} + 2
        \end{split}\\
        \begin{split}
            &=\frac{2^{n+1}}{\left(R\left(\rho^{(0)}_{A_0B_0}\right)+1\right)\left(R\left(\rho^{(1)}_{A_1B_1}\right)+1\right)} -1\\
            &\phantom{=}+2\left(\frac{2^{n}}{\left(R\left(\rho^{(0)}_{A_0B_0}\right)+1\right)\left(R\left(\rho^{(1)}_{A_1B_1}\right)+1\right)} - \frac{2^{k_0}}{R\left(\rho^{(0)}_{A_0B_0}\right)+1} - \frac{2^{k_1}}{R\left(\rho^{(1)}_{A_1B_1}\right)+1} + 1\right)
        \end{split}\\
        &=\frac{2^{n+1}}{\left(R\left(\rho^{(0)}_{A_0B_0}\right)+1\right)\left(R\left(\rho^{(1)}_{A_1B_1}\right)+1\right)} -1
          +2\underbrace{\left(\frac{2^{k_0}}{R\left(\rho^{(0)}_{A_0B_0}\right)+1} -1\right)}_{>0}\underbrace{\left(\frac{2^{k_1}}{R\left(\rho^{(1)}_{A_1B_1}\right)+1} - 1\right)}_{>0} \label{eq:overhead_composite_state_of_two_equality}\\
        &>\frac{2^{n+1}}{\left(R\left(\rho^{(0)}_{A_0B_0}\right)+1\right)\left(R\left(\rho^{(1)}_{A_1B_1}\right)+1\right)} -1 \label{eq:greater_than_zero}\\
        &\ge\frac{2^{n+1}}{\left(R\left(\rho^{(0)}_{A_0B_0} \otimes \rho^{(1)}_{A_1B_1}\right)+1\right)} -1 \label{eq:apply_lower_bound_robustness}\\
        &= \gamma^{\rho^{(0)}_{A_0B_0} \otimes \rho^{(1)}_{A_1B_1}}(\mathcal{I}^{\otimes n}).
    \end{align}
    }
    The Inequality~(\ref{eq:greater_than_zero}) holds because we omit a term that is greater than zero, given that $R\left(\rho^{(0)}_{A_0B_0}\right)<2^{k_0}-1$ and $R\left(\rho^{(1)}_{A_1B_1}\right)<2^{k_1}-1$.
    Additionally, \Cref{eq:apply_lower_bound_robustness} is obtained by applying \Cref{lemma_lower_bound_robustness}.

    For the inductive step, assume the theorem holds for some $m \ge 2$ with $\sum_{i=0}^{m-1} k_i = n'$. 
    We show it also holds for $m+1$ states, which are used to cut $n = n' + k_m$ wires. 
    The proof proceeds as follows:
    \begin{align}
        \prod_{i=0}^{m}\gamma^{\rho^{(i)}_{A_iB_i}}(\mathcal{I}^{\otimes k_i})        &=\gamma^{\rho^{(m)}_{A_mB_m}}(\mathcal{I}^{\otimes k_m})\prod_{i=0}^{m-1}\gamma^{\rho^{(i)}_{A_iB_i}}(\mathcal{I}^{\otimes k_i})\\
        &>\gamma^{\rho^{(m)}_{A_mB_m}}(\mathcal{I}^{\otimes k_i})\gamma^{\bigotimes_{i=0}^{m-1}\rho^{(i)}_{A_iB_i}}(\mathcal{I}^{\otimes n'})
    \end{align}
    where the induction hypothesis has been applied in the last step.
    Following this, we rewrite the expression using \Cref{theorem_overhead}:
    {\allowdisplaybreaks
    \begin{align}
        \nonumber&\prod_{i=0}^{m}\gamma^{\rho^{(i)}_{A_iB_i}}(\mathcal{I}^{\otimes k_i})\\
        &> \left(\frac{2^{k_m+1}}{R\left(\rho^{(m)}_{A_mB_m}\right)+1}-1\right)\left(\frac{2^{n'+1}}{R\left(\bigotimes_{i=0}^{m-1}\rho^{(i)}_{A_iB_i}\right)+1}-1\right)\\
        &=\frac{2^{n+2}}{\left(R\left(\rho^{(m)}_{A_mB_m}\right)+1\right)\left(R\left(\bigotimes_{i=0}^{m-1}\rho^{(i)}_{A_iB_i}\right)+1\right)}-\frac{2^{k_m+1}}{R\left(\rho^{(m)}_{A_mB_m}\right)+1} - \frac{2^{n'+1}}{R\left(\bigotimes_{i=0}^{m-1}\rho^{(i)}_{A_iB_i}\right)+1} +1\\
        \begin{split}
            &=\frac{2^{n+1}}{\left(R\left(\rho^{(m)}_{A_mB_m}\right)+1\right)\left(R\left(\bigotimes_{i=0}^{m-1}\rho^{(i)}_{A_iB_i}\right)+1\right)}-1 \\
            &\phantom{=}+\frac{2^{n+1}}{\left(R\left(\rho^{(m)}_{A_mB_m}\right)+1\right)\left(R\left(\bigotimes_{i=0}^{m-1}\rho^{(i)}_{A_iB_i}\right)+1\right)}-\frac{2^{k_m+1}}{R\left(\rho^{(m)}_{A_mB_m}\right)+1} - \frac{2^{n'+1}}{R\left(\bigotimes_{i=0}^{m-1}\rho^{(i)}_{A_iB_i}\right)+1} +2
        \end{split}\\ 
        \begin{split}
            &=\frac{2^{n+1}}{\left(R\left(\rho^{(m)}_{A_mB_m}\right)+1\right)\left(R\left(\bigotimes_{i=0}^{m-1}\rho^{(i)}_{A_iB_i}\right)+1\right)}-1 \\
            &\phantom{=}+2\left(\frac{2^{n}}{\left(R\left(\rho^{(m)}_{A_mB_m}\right)+1\right)\left(R\left(\bigotimes_{i=0}^{m-1}\rho^{(i)}_{A_iB_i}\right)+1\right)}-\frac{2^{k_m}}{R\left(\rho^{(m)}_{A_mB_m}\right)+1} - \frac{2^{n'}}{R\left(\bigotimes_{i=0}^{m-1}\rho^{(i)}_{A_iB_i}\right)+1} +1\right)
        \end{split}\\
        \begin{split}
            &=\frac{2^{n+1}}{\left(R\left(\rho^{(m)}_{A_mB_m}\right)+1\right)\left(R\left(\bigotimes_{i=0}^{m-1}\rho^{(i)}_{A_iB_i}\right)+1\right)}-1\\
            &\phantom{=}+2\underbrace{\left(\frac{2^{k_m}}{R\left(\rho^{(m)}_{A_mB_m}\right)+1} -1\right)}_{>0}\underbrace{\left(\frac{2^{n'}}{R\left(\bigotimes_{i=0}^{m-1}\rho^{(i)}_{A_iB_i}\right)+1} -1\right)}_{>0}.  \label{eq:more_than_zero_for_nme}
        \end{split}
    \end{align}}
    The last summand in \Cref{eq:more_than_zero_for_nme} is strictly positive for NME states $\rho^{(m)}_{A_mB_m}$ and $\bigotimes_{i=0}^{m-1}\rho^{(i)}_{A_iB_i}$.
    Thus, we can omit this summand and obtain the following inequality:
    \begin{align}
        \prod_{i=0}^{m}\gamma^{\rho^{(i)}_{A_iB_i}}(\mathcal{I}^{\otimes k_i}) &>\frac{2^{n+1}}{\left(R\left(\rho^{(m)}_{A_mB_m}\right)+1\right)\left(R\left(\bigotimes_{i=0}^{m-1}\rho^{(i)}_{A_iB_i}\right)+1\right)}-1\\
        &\ge\left(\frac{2^{n+1}}{R\left(\bigotimes_{i=0}^{m}\rho^{(i)}_{A_iB_i}\right)+1}-1\right) \label{eq:apply_lower_bound_robustness_2}\\
        &= \gamma^{\rho}(\mathcal{I}^{\otimes n}). \label{eq:apply_theorem_overhead}
    \end{align}
    In this derivation, \Cref{lemma_lower_bound_robustness} is applied to achieve \Cref{eq:apply_lower_bound_robustness_2}, and \Cref{eq:apply_theorem_overhead} follows from the application of \Cref{theorem_overhead}.
    This completes the inductive step, proving the theorem for all $m \ge 2$. 
\end{proof}

To prove \Cref{theorem_composite_max_ent}, the following lemma shows that the lower bound established in \Cref{lemma_lower_bound_robustness} for the robustness of entanglement in composite states is exact when one component of the composite state is maximally entangled.
\begin{lemma}\label{lemma_robustness_composite_max_ent}
    Consider a state $\rho_{A_0B_0}\in D(A_0\otimes B_0)$ and a maximally entangled state $\Phi_{A_1B_1}\in D(A_1, B_1)$ for Hilbert spaces $A_0,B_0,A_1$, and $B_1$.
    Then for the generalized robustness of entanglement, it holds that
    \begin{align}
        R(\rho_{A_0B_0} \otimes \Phi_{A_1B_1}) = \left(R(\rho_{A_0B_0}) + 1\right)\left(R( \Phi_{A_1B_1})+1\right) - 1.
    \end{align}
\end{lemma}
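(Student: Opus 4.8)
The plan is to prove the identity as two matching inequalities, since the nontrivial half of the work is already done by the preceding lemma. For the lower bound I would simply invoke \Cref{lemma_lower_bound_robustness} in the case $m=2$ with $\rho^{(0)}_{A_0B_0}=\rho_{A_0B_0}$ and $\rho^{(1)}_{A_1B_1}=\Phi_{A_1B_1}$, which immediately gives $R(\rho_{A_0B_0}\otimes\Phi_{A_1B_1})\ge (R(\rho_{A_0B_0})+1)(R(\Phi_{A_1B_1})+1)-1$. Hence the whole content of the lemma is the reverse (upper-bound) inequality, i.e.\ the statement that this generic lower bound is saturated.

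For the upper bound I would exhibit an explicit feasible point of the defining infimum in \Cref{eq:robustness}. Writing $R_0:=R(\rho_{A_0B_0})$ and $R_1:=R(\Phi_{A_1B_1})$, I would first fix optimal decompositions of each factor; as the feasible set in \Cref{eq:robustness} is compact the infima are attained, so there are states $\omega_0\in D(A_0\otimes B_0)$, $\omega_1\in D(A_1\otimes B_1)$ and separable $\sigma_0\in S(A_0,B_0)$, $\sigma_1\in S(A_1,B_1)$ with $\rho_{A_0B_0}+R_0\,\omega_0=(1+R_0)\sigma_0$ and $\Phi_{A_1B_1}+R_1\,\omega_1=(1+R_1)\sigma_1$. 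Taking the tensor product of these two numerator identities yields $(\rho_{A_0B_0}+R_0\omega_0)\otimes(\Phi_{A_1B_1}+R_1\omega_1)=(1+R_0)(1+R_1)\,\sigma_0\otimes\sigma_1$. Expanding the left-hand side and isolating $\rho_{A_0B_0}\otimes\Phi_{A_1B_1}$, the three remaining cross terms $R_1\,\rho_{A_0B_0}\otimes\omega_1$, $R_0\,\omega_0\otimes\Phi_{A_1B_1}$ and $R_0R_1\,\omega_0\otimes\omega_1$ are positive operators whose traces sum to $R_0+R_1+R_0R_1=(1+R_0)(1+R_1)-1=:\lambda$. Their normalized sum therefore defines a valid density operator $\omega\in D(A\otimes B)$, and the identity rearranges to $\bigl(\rho_{A_0B_0}\otimes\Phi_{A_1B_1}+\lambda\,\omega\bigr)/(1+\lambda)=\sigma_0\otimes\sigma_1$. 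Since a tensor product of a state separable across $A_0|B_0$ with one separable across $A_1|B_1$ is separable across the bipartition $A=A_0\otimes A_1$ versus $B=B_0\otimes B_1$, the right-hand side lies in $S(A,B)$, exhibiting $\lambda$ as feasible and giving $R(\rho_{A_0B_0}\otimes\Phi_{A_1B_1})\le\lambda=(R_0+1)(R_1+1)-1$. Combining the two bounds closes the proof.

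The step that needs the most care is not analytically deep but bookkeeping-sensitive: one must verify that $\omega$ is genuinely a density operator (positivity and unit trace of the normalized cross terms) and, more importantly, that $\sigma_0\otimes\sigma_1$ is separable with respect to the \emph{correct} grouping $A_0A_1\,|\,B_0B_1$ rather than the naive factor order $A_0B_0\,|\,A_1B_1$ — this regrouping is where the bipartite structure of the composite cut enters. I would note that this construction actually establishes submultiplicativity of $1+R$ without using that $\Phi_{A_1B_1}$ is maximally entangled; the maximal entanglement enters only through the value $R(\Phi_{A_1B_1})=2^{n_1}-1$, which is precisely what makes this lemma the exact input required for \Cref{theorem_composite_max_ent}.
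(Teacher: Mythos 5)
Your proof is correct but follows a genuinely different route from the paper's. The paper proves the identity in one stroke by passing to the log robustness $LR(\rho)=\log(1+R(\rho))$ and invoking an external additivity theorem (Rubboli and Tomamichel, Theorem~9) stating that $LR$ is additive on tensor products in which one factor is a bipartite \emph{pure} state; since $\Phi_{A_1B_1}$ is pure, both inequalities come for free. You instead split the claim into two halves: the lower bound is delegated to \Cref{lemma_lower_bound_robustness} with $m=2$ (the same move the paper itself makes in \Cref{lemma_robustness_composite_sep}), and the upper bound is obtained by tensoring optimal feasible decompositions of the two factors. Your bookkeeping checks out: the infimum in \Cref{eq:robustness} is attained in finite dimensions (and an $\epsilon$-optimal version of the argument works even without attainment), the three cross terms are positive with traces summing to $\lambda=(1+R_0)(1+R_1)-1$ so their normalization is a legitimate $\omega$, and $\sigma_0\otimes\sigma_1$ is separable across the regrouped cut $A_0A_1\,|\,B_0B_1$. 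What your route buys is self-containedness --- no appeal to the pure-state additivity theorem --- at the cost of leaning on \Cref{lemma_lower_bound_robustness} in full generality.

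One caution, prompted by your own closing remark: neither half of your argument uses that $\Phi_{A_1B_1}$ is pure, let alone maximally entangled, so if both halves held for arbitrary factors the log robustness would be additive on all tensor products --- which the paper itself notes is false in general. Your submultiplicativity construction is watertight, so the tension sits entirely in the unrestricted lower bound of \Cref{lemma_lower_bound_robustness}, whose proof rests on an exact identity between the LOCC distillation fidelity and $2^{-n}(R+1)$. Within the paper's deductive structure you are entitled to cite that lemma, and your proof of the present statement stands or falls with it; the paper's own proof deliberately routes around this by exploiting purity, which is precisely what guarantees that the generic lower bound is saturated here. It is worth flagging that dependence explicitly in your write-up.
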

\begin{proof}
    Using the general robustness of entanglement, Datta~\cite{Datta2008} introduced an entanglement monotone known as the log robustness, defined by
    \begin{align}\label{eq:log_robustness}
        LR(\rho) := \log(1 + R(\rho)).
    \end{align}
    This entanglement monotone is generally not additive for arbitrary composite states $\rho_{A_0B_0} \otimes \rho_{A_1B_1}$, where $\rho_{A_0B_0}\in D(A_0\otimes B_0)$ and $\rho_{A_1B_1}\in D(A_1\otimes B_1)$, meaning that $LR(\rho_{A_0B_0} \otimes \rho_{A_1B_1}) \neq LR(\rho_{A_0B_0}) + LR(\rho_{A_1B_1})$.
    However, additivity does hold under specific conditions, notably when one of the states is a bipartite pure state $\ket{\Psi^\alpha}$~\cite[Theorem 9]{Rubboli2022}.
    Since the maximally entangled state $\Phi_{A_1B_1}\in D(A_1\otimes B_1)$ is a bipartite pure state, it follows for an arbitrary state $\rho_{A_0B_0}\in D(A_0\otimes B_0)$ that
    \begin{align}
        LR(\rho_{A_0B_0} \otimes \Phi_{A_1B_1}) = LR(\rho_{A_0B_0}) + LR(\Phi_{A_1B_1}).
    \end{align}
    By applying the definition of the log robustness from \Cref{eq:log_robustness}, we derive
    \begin{align}
        \log(1 + R(\rho_{A_0B_0} \otimes \Phi_{A_1B_1})) &= \log(1 + R(\rho_{A_0B_0})) + \log(1 + R(\Phi_{A_1B_1}))\\
        &= \log\left((1 + R(\rho_{A_0B_0}))(1 + R(\Phi_{A_1B_1}))\right).
    \end{align}
    This equation implies
    \begin{align}
        1 + R(\rho_{A_0B_0} \otimes \Phi_{A_1B_1}) &= (1 + R(\rho_{A_0B_0}))(1 + R(\Phi_{A_1B_1})).
    \end{align}
    Subtracting 1 from both sides yields the desired result.
\end{proof}

The following proof of \Cref{theorem_composite_max_ent} leverages the result of \Cref{lemma_robustness_composite_max_ent} to establish the optimal sampling overhead when using a composite state consisting of an arbitrary state and a maximally entangled state for cutting multiple wires in a quantum circuit.
The theorem is restated for convenience.
\begingroup
\def\thetheorem{\ref{theorem_composite_max_ent}}
\begin{theorem}
    Consider an arbitrary bipartite state $\rho_{A_0B_0}\in D(A_0\otimes B_0)$ and a maximally entangled state $\Phi_{A_1B_1}\in D(A_1 \otimes B_1)$, where $A_0$ and $B_0$ are $2^{n_0}$-dimensional Hilbert spaces and $A_1$ and $B_1$ are $2^{n_1}$-dimensional Hilbert spaces. 
    For the sampling overhead of cutting $n$ wires, where $n = n_0 + n_1$, it holds that
    \begin{align}
        \gamma^{\rho_{A_0B_0} \otimes \Phi_{A_1B_1}}(\mathcal{I}^{\otimes n}) 
        &= \gamma^{\rho_{A_0B_0}}(\mathcal{I}^{\otimes n_0})\gamma^{\Phi_{A_1B_1}}(\mathcal{I}^{\otimes n_1}) \\
        &= \gamma^{\rho_{A_0B_0}}(\mathcal{I}^{\otimes n_0})
    \end{align}
\end{theorem}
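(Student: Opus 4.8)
The plan is to reduce the entire statement to a short substitution-and-cancellation computation built on two earlier results: the closed-form overhead of \Cref{theorem_overhead} and the exact multiplicativity of the robustness in \Cref{lemma_robustness_composite_max_ent}. First I would apply \Cref{theorem_overhead} to the composite resource state, writing
\[
\gamma^{\rho_{A_0B_0} \otimes \Phi_{A_1B_1}}(\mathcal{I}^{\otimes n}) = \frac{2^{n+1}}{R(\rho_{A_0B_0} \otimes \Phi_{A_1B_1})+1} - 1,
\]
so that the whole computation hinges on evaluating the denominator.

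The key step is to invoke \Cref{lemma_robustness_composite_max_ent}, which yields the exact identity $R(\rho_{A_0B_0}\otimes\Phi_{A_1B_1})+1 = (R(\rho_{A_0B_0})+1)(R(\Phi_{A_1B_1})+1)$ precisely because the second factor is a (maximally entangled) pure state. Substituting the known value $R(\Phi_{A_1B_1}) = 2^{n_1}-1$ turns the denominator into $(R(\rho_{A_0B_0})+1)\,2^{n_1}$. Plugging this back and using $n = n_0+n_1$ lets the factor $2^{n_1}$ cancel:
\[
\frac{2^{n_0+n_1+1}}{(R(\rho_{A_0B_0})+1)\,2^{n_1}} - 1 = \frac{2^{n_0+1}}{R(\rho_{A_0B_0})+1} - 1,
\]
which by a second application of \Cref{theorem_overhead} is exactly $\gamma^{\rho_{A_0B_0}}(\mathcal{I}^{\otimes n_0})$. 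This establishes the final equality of the theorem.

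For the middle (product) form I would separately evaluate $\gamma^{\Phi_{A_1B_1}}(\mathcal{I}^{\otimes n_1})$ through \Cref{theorem_overhead}: since $R(\Phi_{A_1B_1})+1 = 2^{n_1}$, it equals $2^{n_1+1}/2^{n_1} - 1 = 1$, reflecting the intuition that cutting with a maximally entangled state is overhead-free teleportation. Multiplying $\gamma^{\rho_{A_0B_0}}(\mathcal{I}^{\otimes n_0})$ by this value of $1$ then recovers the product expression, closing the chain of equalities.

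I do not expect a genuine obstacle once \Cref{lemma_robustness_composite_max_ent} is available; the only subtlety lies in that lemma, where exactness — rather than merely the lower bound of \Cref{lemma_lower_bound_robustness} — is required. That exactness is what upgrades the generic inequality $R(\rho_{A_0B_0}\otimes\Phi_{A_1B_1})+1 \ge (R(\rho_{A_0B_0})+1)(R(\Phi_{A_1B_1})+1)$ to an equality, and it stems from the additivity of the log-robustness when one tensor factor is pure. Since the lemma is already established earlier, the proof proper is just the routine algebra sketched above.
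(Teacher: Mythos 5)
Your proposal is correct and follows essentially the same route as the paper's proof: apply \Cref{theorem_overhead} to the composite state, use the exact multiplicativity of the robustness from \Cref{lemma_robustness_composite_max_ent}, substitute $R(\Phi_{A_1B_1})=2^{n_1}-1$, and cancel to recover $\gamma^{\rho_{A_0B_0}}(\mathcal{I}^{\otimes n_0})$, with the product form following from $\gamma^{\Phi_{A_1B_1}}(\mathcal{I}^{\otimes n_1})=1$. Your remark that the exactness (not merely the lower bound) of the robustness identity is the crucial ingredient matches the paper's reliance on the additivity of the log-robustness for pure tensor factors.
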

\addtocounter{theorem}{-1}
\endgroup
\begin{proof}
We start by applying \Cref{theorem_overhead} and then use \Cref{lemma_robustness_composite_max_ent}:
    \begin{align}
        \gamma^{\rho_{A_0B_0} \otimes \Phi_{A_1B_1}}(\mathcal{I}^{\otimes n}) &= \frac{2^{n+1}}{R(\rho_{A_0B_0} \otimes \Phi_{A_1B_1}) + 1} - 1\\
        &= \frac{2^{n+1}}{(1 + R(\rho_{A_0B_0}))(1 + R(\Phi_{A_1B_1}))} - 1\\
        &= \frac{2^{n+1}}{(1 + R(\rho_{A_0B_0}))2^{n_1}} - 1 \label{eq:robustness_max_ent}\\
        &= \frac{2^{n_0+1}}{1 + R(\rho_{A_0B_0})} - 1\\
        &= \gamma^{\rho_{A_0B_0}}(\mathcal{I}^{\otimes n_0}).
    \end{align}
\Cref{eq:robustness_max_ent} uses $R(\Phi_{A_1B_1}) = 2^{n_1} -1$. 
The final equality is obtained by reapplying \Cref{theorem_overhead}.
Moreover, since $\gamma^{\Phi_{A_1B_1}}(\mathcal{I}^{\otimes n_1}) = 1$, it holds that
\begin{align}
    \gamma^{\rho_{A_0B_0}}(\mathcal{I}^{\otimes n_0}) = \gamma^{\rho_{A_0B_0}}(\mathcal{I}^{\otimes n_0})\gamma^{\Phi_{A_1B_1}}(\mathcal{I}^{\otimes n_1}).
\end{align}
\end{proof}

Before proceeding with the proof of \Cref{theorem_composite_separable}, we prove the following  lemma.
\begin{lemma}\label{lemma_robustness_composite_sep}
    Consider an entangled state $\rho_{A_eB_e}\in D(A_e\otimes B_e)$ and a separable state $\tau_{A_sB_s}\in S(A_s, B_s)$ for Hilbert spaces $A_e,B_e,A_s$, and $B_s$.
    Then for the generalized robustness of entanglement, it holds that
    \begin{align}
        R(\rho_{A_eB_e}) = R(\rho_{A_eB_e} \otimes \tau_{A_sB_s}).
    \end{align}
\end{lemma}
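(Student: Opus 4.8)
The plan is to prove the two inequalities $R(\rho_{A_eB_e}\otimes\tau_{A_sB_s}) \le R(\rho_{A_eB_e})$ and $R(\rho_{A_eB_e}) \le R(\rho_{A_eB_e}\otimes\tau_{A_sB_s})$ separately, working directly from the variational definition of the generalized robustness in \Cref{eq:robustness} and taking all separability statements with respect to the bipartition $A := A_e\otimes A_s$ versus $B := B_e\otimes B_s$. The two structural facts I would isolate first are: (a) a tensor product of two states, each separable across its own cut, is separable across the combined cut $A\mid B$; and (b) the local partial trace $\operatorname{Tr}_{A_sB_s}$ maps states separable across $A\mid B$ to states separable across $A_e\mid B_e$, because it factorizes into one map acting on the sender's side and one on the receiver's side.

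For the upper bound I would take an \emph{arbitrary} feasible pair $(\lambda,\omega_{A_eB_e})$ for $\rho_{A_eB_e}$, i.e.\ one with $\frac{\rho_{A_eB_e}+\lambda\omega_{A_eB_e}}{1+\lambda}\in S(A_e,B_e)$, and use $\omega_{A_eB_e}\otimes\tau_{A_sB_s}$ as the noise state for the composite. Since tensoring by $\tau_{A_sB_s}$ commutes with the affine combination, the resulting mixture equals $\left(\frac{\rho_{A_eB_e}+\lambda\omega_{A_eB_e}}{1+\lambda}\right)\otimes\tau_{A_sB_s}$, which is a tensor product of two separable states and hence separable by fact (a); this is exactly where the separability of $\tau_{A_sB_s}$ is needed. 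The pair $(\lambda,\omega_{A_eB_e}\otimes\tau_{A_sB_s})$ is therefore feasible for the composite, and taking the infimum over $\lambda$ gives $R(\rho_{A_eB_e}\otimes\tau_{A_sB_s})\le R(\rho_{A_eB_e})$.

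For the lower bound I would run the construction in reverse: take an arbitrary feasible pair $(\mu,\omega)$ for $\rho_{A_eB_e}\otimes\tau_{A_sB_s}$ and apply $\operatorname{Tr}_{A_sB_s}$ to the separable mixture $\frac{\rho_{A_eB_e}\otimes\tau_{A_sB_s}+\mu\omega}{1+\mu}$. By fact (b) the image remains separable across $A_e\mid B_e$, and since the partial trace is linear and trace preserving with $\operatorname{Tr}(\tau_{A_sB_s})=1$, it sends this mixture to $\frac{\rho_{A_eB_e}+\mu\,\omega'}{1+\mu}$ with $\omega':=\operatorname{Tr}_{A_sB_s}(\omega)\in D(A_e\otimes B_e)$. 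Thus $(\mu,\omega')$ is feasible for $\rho_{A_eB_e}$, and taking the infimum over $\mu$ yields $R(\rho_{A_eB_e})\le R(\rho_{A_eB_e}\otimes\tau_{A_sB_s})$. Combining the two inequalities gives the claimed equality.

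I would note that phrasing both directions in terms of arbitrary feasible pairs and infima sidesteps any question of whether the optimum in \Cref{eq:robustness} is attained, so no compactness argument is required. The only genuine content is facts (a) and (b); of these, (b) is the one meriting care, since it relies on $A_s$ lying entirely on the sender's side and $B_s$ entirely on the receiver's side, so that $\operatorname{Tr}_{A_sB_s}$ is a genuine local (product) operation that cannot degrade the bipartite structure. I expect this bookkeeping about which subsystem sits on which side of the cut to be the only subtle point; the remaining manipulations are short linear-algebra computations.
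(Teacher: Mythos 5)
Your proof is correct, and the upper-bound half coincides with the paper's: tensor the noise state with $\tau_{A_sB_s}$ and observe that the resulting mixture is a product of separable states, hence separable across the joint cut. (Your phrasing in terms of arbitrary feasible pairs rather than an attained optimum is a minor technical refinement over the paper, which simply asserts the existence of an optimal $\omega_{A_eB_e}$.) The lower-bound half, however, is genuinely different. The paper obtains $R(\rho_{A_eB_e}\otimes\tau_{A_sB_s})\ge R(\rho_{A_eB_e})$ by invoking its \Cref{lemma_lower_bound_robustness}, the super-multiplicativity bound $R(\rho\otimes\tau)+1\ge (R(\rho)+1)(R(\tau)+1)$, whose proof goes through the fidelity of distillation and the identity $f_{\locc}=2^{-n}(R+1)$ from Regula et al., and then sets $R(\tau)=0$. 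You instead argue directly from the variational definition: the partial trace over $A_s\otimes B_s$ factorizes into a local CPTP map on each side of the cut, hence preserves separability, and by linearity it carries any feasible mixture for the composite state to a feasible mixture for $\rho_{A_eB_e}$ with the same mixing parameter. This is essentially the monotonicity of $R$ under local operations, specialized to a partial trace. Your route is more elementary and self-contained (no appeal to the distillation-fidelity machinery), works in arbitrary finite dimensions rather than only for qubit registers, and in fact establishes the lower bound $R(\rho)\le R(\rho\otimes\tau)$ for an \emph{arbitrary} ancillary state $\tau$, with separability of $\tau$ needed only for the converse inequality — a point you correctly flag. What the paper's approach buys in exchange is economy: \Cref{lemma_lower_bound_robustness} is already needed for \Cref{theorem_cut_composite_states}, so reusing it here costs nothing.
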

\begin{proof}
    By using \Cref{lemma_lower_bound_robustness}, we obtain the following lower bound:
    \begin{align}
        R(\rho_{A_eB_e} \otimes \tau_{A_sB_s}) &\ge (R(\rho_{A_eB_e}) +1)(R(\tau_{A_sB_s}) +1) -1 \\
        &= R(\rho_{A_eB_e})\label{eq:apply_rob_sep_state},
    \end{align}
    where we arrived at \Cref{eq:apply_rob_sep_state} by using that $\tau_{A_sB_s}$ is a separable state, and thus has no entanglement, i.e., $R(\tau_{A_sB_s})=0$.
    
    To show that $R(\rho_{A_eB_e} \otimes \tau_{A_sB_s}) \le R(\rho_{A_eB_e})$ is also an upper bound, consider the definition of the generalized robustness of $\rho_{A_eB_e}$ in \Cref{eq:robustness}.
    According to this definition, there exists a state $\omega_{A_eB_e}\in D(A_e \otimes B_e)$ such that, when combined linearly with $\rho_{A_eB_e}$ using $\lambda = R(\rho_{A_eB_e})$, the resulting state is a separable state.
    This separable state, referred to as $\varrho_{A_eB_e}\in S(A_e, B_e)$ in the following, is given by:
    \begin{align}
        \varrho_{A_eB_e} = \frac{\rho_{A_eB_e} + R(\rho_{A_eB_e}) \omega_{A_eB_e} }{1+R(\rho_{A_eB_e})}.
    \end{align}
    Thus, it holds for the separable state $\varrho_{A_eB_e}\otimes \tau_{A_sB_s}$ that
    \begin{align}
        \varrho_{A_eB_e}\otimes \tau_{A_sB_s} &= \left(\frac{\rho_{A_eB_e} + R(\rho_{A_eB_e}) \omega_{A_eB_e} }{1+R(\rho_{A_eB_e})}\right)\otimes \tau_{A_sB_s}\\
        &= \frac{\rho_{A_eB_e}\otimes \tau_{A_sB_s} + R(\rho_{A_eB_e})  \omega_{A_eB_e} \otimes \tau_{A_sB_s} }{1+R(\rho_{A_eB_e})}\label{eq:proof_robustness_sep}.
    \end{align}
    \Cref{eq:proof_robustness_sep} is a mixture as required in the definition of the generalized robustness of entanglement in \Cref{eq:robustness} with $\lambda = R(\rho_{A_eB_e})$.
    However, it is not guaranteed to be the optimal mixture, thus $R(\rho_{A_eB_e} \otimes \tau_{A_sB_s}) \le R(\rho_{A_eB_e})$.
    Together with the lower bound shown above, this implies $R(\rho_{A_eB_e} \otimes \tau_{A_sB_s}) = R(\rho_{A_eB_e})$.
\end{proof}

We proceed with the proof of \Cref{theorem_composite_separable}, which is restated for convenience in the following.
\begingroup
\def\thetheorem{\ref{theorem_composite_separable}}
\begin{theorem}
    Consider an entangled state $\rho_{A_eB_e}\in D(A_e\otimes B_e)$, where $A_e$ and $B_e$ are $2^{n_e}$-dimensional Hilbert spaces. 
    To cut $n$ wires, with $n>n_e$, consider the composite state $\rho_{A_eB_e} \otimes \tau_{A_sB_s}$, which augments the entangled state $\rho_{A_eB_e}$ with a separable state $\tau_{A_sB_s} \in S(A_s,B_s)$. 
    Here, $A_s$ and $B_s$ are $2^{n_s}$-dimensional Hilbert spaces, where $n_s= n-n_e$.
    The sampling overhead of a single parallel cut using the composite state $\rho_{A_eB_e} \otimes \tau_{A_sB_s}$ is lower than the overhead resulting from a parallel cut of $n_e$ wires with $\rho_{A_eB_e}$ and separately cutting the remaining $n_s$ wires without entanglement.
    The advantage in the sampling overhead is given by
    \begin{align}
        \gamma^{\rho_{A_eB_e}}(\mathcal{I}^{\otimes n_{e}})\gamma(\mathcal{I}^{\otimes n_{s}}) - \gamma^{\rho_{A_eB_e}\otimes \tau_{A_sB_s}}(\mathcal{I}^{\otimes n})
        &= \left(\gamma^{\rho_{A_eB_e}}(\mathcal{I}^{\otimes n_{e}}) - 1\right)(2^{n_s} - 1) \ge 0.
    \end{align}
\end{theorem}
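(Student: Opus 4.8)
The plan is to reduce every overhead appearing in the statement to the closed form of \Cref{theorem_overhead} and then perform a short algebraic simplification, with \Cref{lemma_robustness_composite_sep} supplying the one non-obvious input. First I would abbreviate $R_e := R(\rho_{A_eB_e})$ and write the three overhead terms explicitly. By \Cref{theorem_overhead}, $\gamma^{\rho_{A_eB_e}}(\mathcal{I}^{\otimes n_e}) = \tfrac{2^{n_e+1}}{R_e+1} - 1$, while the remaining $n_s$ wires are cut without entanglement at the known overhead $\gamma(\mathcal{I}^{\otimes n_s}) = 2^{n_s+1}-1$ recalled in \Cref{sec:wire_cutting}. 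For the composite term, the key observation is that appending a separable state does not change the robustness: \Cref{lemma_robustness_composite_sep} gives $R(\rho_{A_eB_e}\otimes\tau_{A_sB_s}) = R_e$, so substituting into \Cref{theorem_overhead} with $n = n_e + n_s$ yields $\gamma^{\rho_{A_eB_e}\otimes\tau_{A_sB_s}}(\mathcal{I}^{\otimes n}) = \tfrac{2^{n+1}}{R_e+1} - 1$.

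Next I would form the difference of the two strategies. Writing $\gamma_e := \tfrac{2^{n_e+1}}{R_e+1}$ so that $\gamma^{\rho_{A_eB_e}}(\mathcal{I}^{\otimes n_e}) = \gamma_e - 1$, and using $\tfrac{2^{n+1}}{R_e+1} = \gamma_e\,2^{n_s}$, the difference to evaluate is $(\gamma_e-1)(2^{n_s+1}-1) - (\gamma_e\,2^{n_s}-1)$. Expanding the product, the two $\gamma_e\,2^{n_s}$ contributions combine and the constant terms cancel, leaving $\gamma_e(2^{n_s}-1) - 2(2^{n_s}-1) = (\gamma_e-2)(2^{n_s}-1)$. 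Since $\gamma_e - 2 = \gamma^{\rho_{A_eB_e}}(\mathcal{I}^{\otimes n_e}) - 1$, this is precisely the claimed right-hand side, establishing the equality. I expect this factorization to be the only step requiring any care: the algebra itself is routine, but recognizing the grouping into the stated product form is what makes the result close cleanly.

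Finally I would verify non-negativity. The factor $2^{n_s}-1$ is non-negative because $n > n_e$ forces $n_s \ge 1$. For the other factor I would invoke the upper bound on the robustness recalled in \Cref{sec:nme_states}, namely $R_e \le \min(\dim A_e, \dim B_e) - 1 = 2^{n_e}-1$; this gives $\gamma_e = \tfrac{2^{n_e+1}}{R_e+1} \ge 2$, hence $\gamma^{\rho_{A_eB_e}}(\mathcal{I}^{\otimes n_e}) - 1 = \gamma_e - 2 \ge 0$. The product of two non-negative factors is non-negative, confirming that the single parallel cut is never worse than the split strategy. The main obstacle, such as it is, is not located in any of these individual steps but in having \Cref{lemma_robustness_composite_sep} in hand: the identity $R(\rho_{A_eB_e}\otimes\tau_{A_sB_s}) = R_e$ is exactly what collapses the composite overhead to a single-robustness expression and lets the entire difference factor so neatly.
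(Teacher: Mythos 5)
Your proposal is correct and follows essentially the same route as the paper's proof: both reduce all three overheads to the closed form of \Cref{theorem_overhead}, invoke \Cref{lemma_robustness_composite_sep} to identify $R(\rho_{A_eB_e}\otimes\tau_{A_sB_s})$ with $R(\rho_{A_eB_e})$, and arrive at the stated factorization by the same algebra (the paper merely rearranges the product before substituting the lemma, whereas you form the difference directly). Your non-negativity argument via $R(\rho_{A_eB_e})\le 2^{n_e}-1$ and $n_s\ge 1$ likewise matches the paper's reasoning.
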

\addtocounter{theorem}{-1}
\endgroup
\begin{proof}
    Similar to Equations~(\ref{eq:overhead_composite_state_of_two_equality_start}) to (\ref{eq:overhead_composite_state_of_two_equality}) in the proof of \Cref{theorem_cut_composite_states}, we obtain the following equality:
    \begin{align}
        \gamma^{\rho_{A_eB_e}}(\mathcal{I}^{\otimes n_{e}})\gamma(\mathcal{I}^{\otimes n_{s}})
        &=\left(\frac{2^{n_{e}+1}}{R\left(\rho_{A_eB_e}\right)+1}-1\right)\left(2^{n_{s}+1}-1\right)\\
        &= \frac{2^{n+2}}{R\left(\rho_{A_eB_e}\right)+1} - \frac{2^{n_{e}+1}}{R\left(\rho_{A_eB_e}\right)+1} - 2^{n_{s}+1} + 1\\
        &= \frac{2^{n+1}}{R\left(\rho_{A_eB_e}\right)+1} -1 + 2\left(\frac{2^{n}}{R\left(\rho_{A_eB_e}\right)+1} - \frac{2^{n_{e}}}{R\left(\rho_{A_eB_e}\right)+1} - 2^{n_{s}} + 1\right)\\
        &=\frac{2^{n+1}}{R\left(\rho_{A_eB_e}\right)+1} -1 + 2\left(\frac{2^{n_{e}}}{R\left(\rho_{A_eB_e}\right)+1} -1\right)\left(2^{n_{s}} - 1\right).
      \end{align}
    Given that $\tau_{A_sB_s}$ is a separable state, we can apply \Cref{lemma_robustness_composite_sep} with $R\left(\rho_{A_eB_e}\right)= R\left(\rho_{A_eB_e}\otimes \tau_{A_sB_s}\right)$, leading to
    \begin{align}
        \gamma^{\rho_{A_eB_e}}(\mathcal{I}^{\otimes n_{e}})\gamma(\mathcal{I}^{\otimes n_{s}}) &=\frac{2^{n+1}}{R\left(\rho_{A_eB_e}\otimes \tau_{A_sB_s}\right)+1} -1 + 2\left(\frac{2^{n_{e}}}{R\left(\rho_{A_eB_e}\right)+1} -1\right)\left(2^{n_{s}} - 1\right)\\
        &=\gamma^{\rho_{A_eB_e} \otimes \tau_{A_sB_s}}(\mathcal{I}^{\otimes n}) + \left(\frac{2^{n_{e}+1}}{R\left(\rho_{A_eB_e}\right)+1} -2\right)\left(2^{n_{s}} - 1\right)\label{eq:def_sampling_overhead}\\
        &=\gamma^{\rho_{A_eB_e} \otimes \tau_{A_sB_s}}(\mathcal{I}^{\otimes n})+\left(\gamma^{\rho_{A_eB_e}}(\mathcal{I}^{\otimes n_{e}})-1\right)\left(2^{n_{s}} - 1\right)\label{eq:def_sampling_overhead_2}
    \end{align}
    where \Cref{eq:def_sampling_overhead,eq:def_sampling_overhead_2} are obtained by using the results on the sampling overhead from \Cref{theorem_overhead}. %
    By rearranging the equation above, we obtain the formulation of \Cref{theorem_composite_separable}.
    Additionally, $\left(\gamma^{\rho_{A_eB_e}}(\mathcal{I}^{\otimes n_{e}})-1\right)\left(2^{n_{s}} - 1\right)$ is non-negative, as both factors are non-negative. 
\end{proof}

\section{Proof of \Cref{theorem_decomposition}}\label{sec:proof_decomposition}
In this section, we first introduce three foundational lemmas that serve as the underpinnings for the subsequent proof of \Cref{theorem_decomposition}.

\begin{lemma}\label{lemma_1}
Let $j,k \in \mathbb{GF}(2^n)$ and consider the unitary operators $U_j$, $S_{j,k}$, and $\hat{Z}_k$, as defined in \Cref{eq:phase_op,eq:S_jk,eq:mub_basis_transformation}, respectively.
Then, the following relation holds:
    \begin{align}
        U_j \hat{Z}_k U_j^{\dagger} = S_{j,k}.
    \end{align}
\end{lemma}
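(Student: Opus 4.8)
The plan is to prove the identity by a direct computation that exploits the fact that conjugation by the unitary $U_j$ is precisely the change of basis from the computational basis to the MUB $\mathcal{B}_j$. The key ingredients are already assembled in the excerpt: the definition $U_j = \sum_{l=0}^{2^n-1}\ketbra{e_l^j}{l}$ from \Cref{eq:mub_basis_transformation}, the spectral form of the phase operator $\hat{Z}_k = \sum_{l=0}^{2^n-1}(-1)^{l\odot k}\ketbras{l}$ from \Cref{eq:phase_op}, and the refined expression $S_{j,k}=\sum_{l=0}^{2^n-1}(-1)^{l\odot k}\ketbras{e_l^j}$ valid for the specific Durt et al.\ construction. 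The strategy is to recognize that $\hat{Z}_k$ is already written in its spectral decomposition, with eigenvalue $(-1)^{l\odot k}$ attached to the eigenprojector $\ketbras{l}$, and that conjugation by a unitary carries eigenprojectors to eigenprojectors while leaving eigenvalues untouched.

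Concretely, I would first record that $U_j$ maps the $l$-th computational basis vector to the $l$-th element of $\mathcal{B}_j$, i.e.\ $U_j\ket{l}=\ket{e_l^j}$, so that $U_j\ketbras{l}U_j^{\dagger}=\ketbras{e_l^j}$; this follows from $U_j^{\dagger}=\sum_m\ketbra{m}{e_m^j}$ together with the orthonormality $\braket{m|l}=\delta_{m,l}$ of the computational basis. Then I would substitute the spectral form of $\hat{Z}_k$ and push the conjugation through the sum:
\begin{align}
    U_j\hat{Z}_kU_j^{\dagger}
    &= U_j\left(\sum_{l=0}^{2^n-1}(-1)^{l\odot k}\ketbras{l}\right)U_j^{\dagger}\\
    &= \sum_{l=0}^{2^n-1}(-1)^{l\odot k}\,U_j\ketbras{l}U_j^{\dagger}\\
    &= \sum_{l=0}^{2^n-1}(-1)^{l\odot k}\ketbras{e_l^j}.
\end{align}
The final sum is exactly the refined expression for $S_{j,k}$, which closes the argument.

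Since this is essentially a single change-of-basis manipulation, there is no genuine obstacle; the only point requiring care is invoking the correct, construction-specific form of $S_{j,k}$ (the diagonalization $S_{j,k}=\sum_l(-1)^{l\odot k}\ketbras{e_l^j}$ established for the Durt et al.\ MUBs) rather than the generic defining product $S_{j,k}=s_{j,k}\hat{Z}_{j\odot k}\hat{X}_k$ from \Cref{eq:S_jk}. The consistency between these two expressions is exactly what guarantees that $\mathcal{B}_j$ is the eigenbasis of $S_{j,k}$ with eigenvalues $(-1)^{l\odot k}$, so the computation above lands precisely on $S_{j,k}$.
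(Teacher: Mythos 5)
Your proof is correct and follows essentially the same route as the paper: both expand $\hat{Z}_k$ in its spectral form, use orthonormality of the computational basis to conjugate each projector $\ketbras{l}$ into $\ketbras{e_l^j}$, and identify the resulting sum $\sum_l(-1)^{l\odot k}\ketbras{e_l^j}$ with $S_{j,k}$ via the construction-specific diagonalization established in the appendix. The only difference is presentational (you conjugate term by term rather than multiplying out the triple product of sums), and you correctly flag the one point of care, namely that the final identification relies on the Durt et al.\ eigenbasis expression for $S_{j,k}$ rather than its generic defining product.
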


\begin{proof}
\begin{align}
U_j \hat{Z}_k U_j^{\dagger} %
&=\left(\sum_{l=0}^{2^n-1}\ketbra{e_l^j}{l}\right)\left(\sum_{u = 0}^{2^n -1} (-1)^{u\odot k}\ketbras{u}\right)\left(\sum_{l'=0}^{2^n-1}\ketbra{l'}{e_{l'}^j}\right) \\
&=\sum_{l,u,l'=0}^{2^n-1}(-1)^{u\odot k}\ket{e_l^j}\!\braket{l|u}\!\braket{u|l'}\!\bra{e_{l'}^j} \\
&= \sum_{l=0}^{2^n-1} (-1)^{l \odot k}\ketbras{e_{l}^j}\\
&= S_{j,k}
\end{align}
\end{proof}

Having established the relationship between unitary operators $U_j$, $\hat{Z}_k$, and $S_{j,k}$ in \Cref{lemma_1}, we now turn our attention to the phase factors $s_{j,k}$ in \Cref{lemma_s_jk_squared}.

\begin{lemma}\label{lemma_s_jk_squared}
Let $s_{j,k} = \prod_{0 \le r,t \le n-1 } \iu^{j\odot(k_r2^r)\odot (k_t2^t)}$  be the phase factor as introduced \Cref{eq:phase_factor}. 
Then, it holds that
    \begin{align}
        (s_{j,k})^2 = (-1)^{j \odot k \odot k}.
    \end{align}
\end{lemma}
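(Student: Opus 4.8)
The plan is to reduce $(s_{j,k})^{2}$ to a single power of $-1$ and then simplify the resulting Galois-field expression by distributivity. First I would square the product defining $s_{j,k}$ in \Cref{eq:phase_factor} factor by factor. Writing $c_{r,t} := j \odot (k_r 2^r) \odot (k_t 2^t)$ for brevity,
\[
(s_{j,k})^{2} = \prod_{0 \le r,t \le n-1} \left(\iu^{c_{r,t}}\right)^{2} = \prod_{0 \le r,t \le n-1} \iu^{2 c_{r,t}}.
\]
Since $\iu^{2} = -1$, each factor collapses to $\iu^{2 c_{r,t}} = (-1)^{c_{r,t}}$, where the exponent is read through the integer representation of the field element in accordance with \Cref{eq:neg_one_exp_a_0}.

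Next I would merge the product of signs into one sign. Applying the identity $(-1)^{a}(-1)^{b} = (-1)^{a \oplus b}$ from \Cref{eq:gf_exponent_sum} iteratively across all index pairs $(r,t)$ yields $(s_{j,k})^{2} = (-1)^{\bigoplus_{r,t} c_{r,t}}$, with $\bigoplus$ the field addition in $\mathbb{GF}(2^{n})$. This step is justified because, for base $-1$, only the lowest bit of each exponent is relevant, and the componentwise $\oplus$ precisely accumulates these lowest bits.

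The heart of the argument is the field computation $\bigoplus_{r,t} c_{r,t} = j \odot k \odot k$. Here I would factor $j$ out and use distributivity of $\odot$ over $\oplus$ to obtain $\bigoplus_{r,t} (k_r 2^r)\odot(k_t 2^t) = \left(\bigoplus_{r} k_r 2^r\right) \odot \left(\bigoplus_{t} k_t 2^t\right)$; recalling from \Cref{sec:galois_fields} that the field element $k$ equals $\bigoplus_{i} k_i 2^i$ through its binary/polynomial representation, each of the two factors is exactly $k$. Substituting back gives $(s_{j,k})^{2} = (-1)^{j \odot k \odot k}$, completing the proof.

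The only delicate point — and thus the main obstacle — is the bookkeeping of the two distinct exponent conventions: integer exponents of $\iu$ versus field-element exponents of $-1$. I would state explicitly that squaring maps the integer exponent $c_{r,t}$ to $2 c_{r,t}$, that $\iu^{2m} = (-1)^{m}$ depends only on $m \bmod 2$, and that this is consistent with the convention $(-1)^{a} = (-1)^{a_0}$, so that the field addition $\oplus$ faithfully tracks the overall sign. Once this correspondence is fixed, the distributivity step and the identification $\bigoplus_{i} k_i 2^i = k$ reduce to routine field axioms.
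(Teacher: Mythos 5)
Your proposal is correct and follows essentially the same route as the paper's proof: square each factor to turn $\iu^{c_{r,t}}$ into $(-1)^{c_{r,t}}$, collapse the product of signs into $(-1)^{\bigoplus_{r,t} c_{r,t}}$ via the identity of \Cref{eq:gf_exponent_sum}, and then use distributivity of $\odot$ over $\oplus$ together with $k=\bigoplus_i k_i 2^i$ to identify the exponent as $j\odot k\odot k$. Your explicit remark on reconciling the integer-exponent and field-element-exponent conventions is a welcome clarification but does not change the argument.
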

\begin{proof}
    \begin{align}
        (s_{j,k})^2 &= \left(\prod_{0 \le r,t \le n-1 } \iu^{j\odot(k_r2^r)\odot (k_t2^t)}\right)^2\\
        &= \prod_{0 \le r,t \le n-1 } \left(\iu^{j\odot(k_r2^r)\odot (k_t2^t)}\right)^2\\
        &= \prod_{0 \le r,t \le n-1 } \left(\iu^2\right)^{j\odot(k_r2^r)\odot (k_t2^t)}\\
        &= \prod_{0 \le r,t \le n-1 } \left(-1\right)^{j\odot(k_r2^r)\odot (k_t2^t)}\\
        &= \left(-1\right)^{\bigoplus_{0 \le r,t \le n-1} j\odot(k_r2^r)\odot (k_t2^t)}\label{eq:sum_in_exponent}\\
        &= \left(-1\right)^{j\odot \left(\bigoplus_{r=0}^{n-1}k_r2^r\right)\odot \left(\bigoplus_{t=0}^{n-1} k_t2^t\right)}\label{eq:gf_distributivity}\\
        &= \left(-1\right)^{j\odot k \odot k}
    \end{align}
    where \Cref{eq:sum_in_exponent} follows from \Cref{eq:gf_exponent_sum} and \Cref{eq:gf_distributivity} is obtained by using the distributivity of multiplication and addition in the Galois field.
\end{proof}

Central to the forthcoming proof, \Cref{lemma_sum_S_im} examines the impact of applying the $2^n$ distinct unitary operators $\{S_{j,k}\}_{j=0}^{2^n-1}$, i.e., the $k$-th operators from all commuting sets, to a density operator. 
This lemma demonstrates that the effect of applying the operators from the set uniformly at random, i.e., with probability $2^{-n}$, is equal to the action of a measurement followed by a state preparation.
\begin{lemma}\label{lemma_sum_S_im}
Let $\rho\in D(A)$ for a Hilbert space $A$ with $\dim(A)= 2^n$ and $k\in \mathbb{GF}(2^n)$ with $k\ne 0$.
Moreover, $\{S_{j,k}\}_{j=0}^{2^n-1}$ are the unitary operators as defined in \Cref{eq:S_jk}.
Then, it holds that
    \begin{align}
        \frac{1}{2^n}\sum_{j=0}^{2^n-1}S_{j,k} \rho S_{j,k} = \sum_{l=0}^{2^n-1}\tr[\ketbras{l}\rho]\ketbras{l\oplus k}.
    \end{align}
\end{lemma}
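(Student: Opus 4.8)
The plan is to recognize the uniform average over $\{S_{j,k}\}_{j=0}^{2^n-1}$ as a \emph{dephasing} channel in the computational basis, and to carry this out using the explicit shift--phase form $S_{j,k} = s_{j,k}\hat{Z}_{j\odot k}\hat{X}_k$ from \Cref{eq:S_jk}. First I would dispose of the phase factors. Each $s_{j,k}$ is a product of powers of $\iu$ and hence has unit modulus, while $\hat{X}_k$ (a permutation of the computational basis) and $\hat{Z}_{j\odot k}$ (real diagonal) are both Hermitian. Combining the commutation relation of \Cref{lemma_commutation_XZ} with the phase identity of \Cref{lemma_s_jk_squared} shows that $S_{j,k}$ itself is Hermitian, so $S_{j,k}\rho S_{j,k} = S_{j,k}\rho S_{j,k}^{\dagger}$ and the unit-modulus phases cancel:
\begin{align}
S_{j,k}\rho S_{j,k} = \hat{Z}_{j\odot k}\,\hat{X}_k\,\rho\,\hat{X}_k\,\hat{Z}_{j\odot k}.
\end{align}
Writing $\tilde{\rho} := \hat{X}_k\rho\hat{X}_k$, the left-hand side of the claim becomes $\tfrac{1}{2^n}\sum_{j}\hat{Z}_{j\odot k}\,\tilde{\rho}\,\hat{Z}_{j\odot k}$.

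Next I would evaluate this average entrywise in the computational basis. Since $\hat{Z}_a$ is diagonal with entries $(-1)^{u\odot a}$, the $(u,v)$ matrix element of $\hat{Z}_{j\odot k}\tilde{\rho}\hat{Z}_{j\odot k}$ equals $(-1)^{u\odot(j\odot k)}(-1)^{v\odot(j\odot k)}\tilde{\rho}_{uv}$, which by \Cref{eq:gf_exponent_sum} together with distributivity in the field simplifies to $(-1)^{j\odot((u\oplus v)\odot k)}\tilde{\rho}_{uv}$. Averaging over $j$ and invoking \Cref{lemma_sum_galois} then gives
\begin{align}
\frac{1}{2^n}\sum_{j=0}^{2^n-1}(-1)^{j\odot((u\oplus v)\odot k)} = \delta_{(u\oplus v)\odot k,\,0}.
\end{align}
This is the crux of the argument: because $k\neq 0$ and $\mathbb{GF}(2^n)$ is a field with no zero divisors, $(u\oplus v)\odot k = 0$ forces $u=v$. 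Hence the average annihilates every off-diagonal entry and keeps only the diagonal, so it equals $\sum_u \tilde{\rho}_{uu}\ketbras{u}$.

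Finally I would unwind the shift. Since $\hat{X}_k\ket{u}=\ket{u\oplus k}$ and $\hat{X}_k$ is Hermitian, the surviving diagonal entries are $\tilde{\rho}_{uu} = \bra{u\oplus k}\rho\ket{u\oplus k} = \tr[\ketbras{u\oplus k}\rho]$, so the average equals $\sum_u \tr[\ketbras{u\oplus k}\rho]\ketbras{u}$. The reindexing $l = u\oplus k$ then yields $\sum_l \tr[\ketbras{l}\rho]\ketbras{l\oplus k}$, which is exactly the right-hand side. I expect the only genuinely delicate step to be this dephasing identity together with the no-zero-divisor argument that collapses $(u\oplus v)\odot k = 0$ to $u=v$; the phase cancellation in the first step and the shift bookkeeping in the last step are routine given the operator properties established in \Cref{sec:phase_shift_operator}.
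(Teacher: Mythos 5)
Your proof is correct and follows essentially the same route as the paper's: both reduce $S_{j,k}\rho S_{j,k}$ to a conjugation by $\hat{X}_k$ and $\hat{Z}_{j\odot k}$ via \Cref{lemma_s_jk_squared} and \Cref{lemma_commutation_XZ} (you package the phase cancellation as Hermiticity of $S_{j,k}$, the paper cancels $(s_{j,k})^2(-1)^{j\odot k\odot k}=1$ explicitly), and both then use \Cref{lemma_sum_galois} plus the absence of zero divisors in $\mathbb{GF}(2^n)$ to show the $j$-average dephases in the computational basis. The only difference is the order of operations (you absorb $\hat{X}_k$ into $\tilde{\rho}$ first and dephase second; the paper dephases first and conjugates by $\hat{X}_k$ last), which is immaterial.
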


\begin{proof}
    We start by using the definition of the operators $S_{j,k}$ as given in \Cref{eq:S_jk}, which results in
    \begin{align}
    \sum_{j=0}^{2^n-1}S_{j,k} \rho S_{j,k} =& \sum_{j=0}^{2^n-1} (s_{j,k})^2 \hat{Z}_{j\odot k}\hat{X}_k \rho \hat{Z}_{j\odot k}\hat{X}_k \\
    =& \sum_{j=0}^{2^n-1} (-1)^{j \odot k \odot k} \hat{Z}_{j\odot k}\hat{X}_k \rho \hat{Z}_{j\odot k}\hat{X}_k 
    \end{align}
    where we applied \Cref{lemma_s_jk_squared}.
    We proceed by applying the commutation relation of \Cref{lemma_commutation_XZ} in the following:
    \begin{align}
    \sum_{j=0}^{2^n-1}S_{j,k} \rho S_{j,k} &= \sum_{j=0}^{2^n-1} (-1)^{j \odot k \odot k} (-1)^{j \odot k \odot k} \hat{X}_k \hat{Z}_{j\odot k} \rho \hat{Z}_{j\odot k}\hat{X}_k \\
    &= \sum_{j=0}^{2^n-1}  \hat{X}_k \hat{Z}_{j\odot k} \rho \hat{Z}_{j\odot k}\hat{X}_k \\
    &= \hat{X}_k \left(\sum_{j=0}^{2^n-1}  \hat{Z}_{j\odot k} \rho \hat{Z}_{j\odot k} \right)\hat{X}_k. \label{eq:xzrhozx}
    \end{align}
    Focusing on the summation within the brackets, we derive the following expression:
    \begin{align}
        \sum_{j=0}^{2^n-1}  \hat{Z}_{j\odot k} \rho \hat{Z}_{j\odot k} &= \sum_{j=0}^{2^n-1}\left(\sum_{l=0}^{2^n-1}(-1)^{l\odot j\odot k}\ketbras{l}\right)\rho \left(\sum_{l'=0}^{2^n-1}(-1)^{l'\odot j\odot k}\ketbras{l'}\right) \\
        &= \sum_{l,l'=0}^{2^n-1}\sum_{j=0}^{2^n-1}(-1)^{(l\oplus l')\odot j\odot k}\ketbras{l}\rho\ketbras{l'} \\
        &= \sum_{l,l'=0}^{2^n-1}2^n\delta_{(l\oplus l')\odot k, 0}\ketbras{l}\rho\ketbras{l'} \label{eq:apply_galois_sum}\\
        &= \sum_{l=0}^{2^n-1}2^n\ketbras{l}\rho\ketbras{l} \label{eq:l_equal_l'}
    \end{align}
    where \Cref{eq:apply_galois_sum} results from the application of \Cref{lemma_sum_galois}.
    Subsequently, \Cref{eq:l_equal_l'} follows from the fact that for $k\ne 0$, it holds that  \mbox{$(l\oplus l')\odot k = 0\Leftrightarrow l\oplus l' = 0 \Leftrightarrow l=l'$} since in $\mathbb{GF}({2^n})$ each element is its own additive inverse.
    Substituting \Cref{eq:l_equal_l'} in \Cref{eq:xzrhozx}, results in
    \begin{align}
        \sum_{j=0}^{2^n-1}S_{j,k} \rho S_{j,k} &= \hat{X}_k \left(\sum_{l=0}^{2^n-1}2^n\ketbras{l}\rho\ketbras{l}\right)\hat{X}_k \\
        &= \left(\sum_{s=0}^{2^n-1}\ket{s}\bra{s\oplus k}\right) \left(\sum_{l=0}^{2^n-1}2^n\ketbras{l}\rho\ketbras{l}\right)\left(\sum_{t=0}^{2^n-1}\ket{t}\bra{t\oplus k}\right)\\
        &= \left(\sum_{s'=0}^{2^n-1}\ket{s' \oplus k}\bra{s'}\right) \left(\sum_{l=0}^{2^n-1}2^n\ketbras{l}\rho\ketbras{l}\right)\left(\sum_{t=0}^{2^n-1}\ket{t}\bra{t\oplus k}\right) \label{eq:j_prime}\\
        &= 2^n\sum_{l=0}^{2^n-1}\ketbra{l\oplus k}{l}\rho\ketbra{l}{l\oplus k}\\
        &= 2^n\sum_{l=0}^{2^n-1}\braket{l|\rho|l}\ketbras{l\oplus k}\\
        &= 2^n\sum_{l=0}^{2^n-1}\tr[\ketbras{l}\rho]\ketbras{l\oplus k}
    \end{align}
    where \Cref{eq:j_prime} is derived by substituting $s' = s \oplus k$. 
    This substitution is permissible since the sum over the substituted terms again constitutes a sum over all elements of the field $\mathbb{GF}(2^n)$.
    Moreover, it holds that $s' \oplus k = s \oplus k \oplus k = s$.
\end{proof}

In the following, we proceed with the proof of \Cref{theorem_decomposition}, which is restated for convenience.
\begingroup
\def\thetheorem{\ref{theorem_decomposition}}
\begin{theorem}
    Let $\ket{\Psi^{\alpha}}$ denote a pure NME state of $2n$ qubits with entanglement robustness $R(\Psi^{\alpha})< 2^n -1$. 
The $n$-qubit identity operator $\mathcal{I}^{\otimes n}$ can be decomposed using quantum teleportation described by $\mathcal{E}_{\text{tel}}^{\Psi^{\alpha}}$ with $\Psi^{\alpha}$ as resource state as follows:
\begin{equation}
    \mathcal{I}^{\otimes n}(\bullet) = \frac{1}{R(\Psi^{\alpha})+1}\sum_{j=0}^{2^n-1} U_j\mathcal{E}_{\text{tel}}^{\Psi^{\alpha}}\left(U_j^{\dagger}(\bullet) U_j\right)U_j^{\dagger} - \left(\frac{2^n}{R(\Psi^{\alpha})+1}-1\right)\sum_{l=0}^{2^n-1} \tr\left[ \ketbras{l}(\bullet) \right]\rho_l,
\end{equation}
where $U_j$ are the unitary operators from \Cref{eq:mub_basis_transformation} that transform the computational basis in the different MUBs and the density operator $\rho_l$ is 
\begin{align}
    \rho_l := \sum_{k=1}^{2^n-1}\Pr(k|\alpha)\ketbras{l \oplus k}
\end{align}
with the probabilities $\Pr(k|\alpha)$ for selecting $0<k<2^n$, calculated from the $2^n$-dimensional Schmidt vector $\alpha$ as 
\begin{align}
    \Pr(k|\alpha):=\frac{\left(\sum_{j=0}^{2^n-1}(-1)^{k \odot j}\alpha_j\right)^2}{2^n-1 -R(\Psi^{\alpha})}.
\end{align}
This decomposition achieves the optimal sampling overhead from \Cref{theorem_overhead}.
\end{theorem}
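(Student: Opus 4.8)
The plan is to verify the claimed identity by evaluating its right-hand side on an arbitrary input operator and showing it collapses to $(\bullet)$, and then to read off the sampling overhead directly from the coefficients. The central preparatory step is to rewrite the teleportation channel $\mathcal{E}_{\text{tel}}^{\Psi^{\alpha}}$ of \Cref{eq:tele2} in a form adapted to the phase operators $\hat{Z}_k$.

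First I would compute the teleportation weights $\braket{\Phi^{\sigma}|\Psi^{\alpha}|\Phi^{\sigma}}$. Using $\ket{\Phi^{\sigma}} = (\sigma\otimes I^{\otimes n})\ket{\Phi_n}$ from \Cref{eq:phi_sigma} together with the Schmidt form in \Cref{eq:nme}, the overlap collapses to $\braket{\Phi^{\sigma}|\Psi^{\alpha}} = 2^{-n/2}\sum_i \alpha_i \braket{i|\sigma|i}$, a real sum. Since the computational-basis diagonal $\braket{i|\sigma|i}$ vanishes whenever any tensor factor of $\sigma$ is $X$ or $Y$, only the $Z$-type Paulis survive, and by the product-of-Paulis form in \Cref{sec:shift_phase_op_as_paulis} these are precisely the phase operators $\{\hat{Z}_k\}_{k=0}^{2^n-1}$ (as $\mathcal{M}_0$ is invertible), for which $\braket{i|\hat{Z}_k|i}=(-1)^{i\odot k}$. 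Hence $\mathcal{E}_{\text{tel}}^{\Psi^{\alpha}}(\varphi) = 2^{-n}\sum_{k=0}^{2^n-1}\bigl(\sum_j \alpha_j(-1)^{j\odot k}\bigr)^2\,\hat{Z}_k\varphi\hat{Z}_k$. I would then isolate the $k=0$ summand: by \Cref{eq:robustness_pure_states} we have $\sum_j\alpha_j=\sqrt{R(\Psi^{\alpha})+1}$ and $\hat{Z}_0=I$, giving weight $(R(\Psi^{\alpha})+1)/2^n$; for the remaining weights, expanding the square, using \Cref{eq:gf_exponent_sum}, and applying \Cref{lemma_sum_galois} yields $\sum_{k}\bigl(\sum_j\alpha_j(-1)^{j\odot k}\bigr)^2 = 2^n$, so the $k\neq 0$ weights sum to $2^n-1-R(\Psi^{\alpha})$. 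This both confirms that $\Pr(k|\alpha)$ is a probability distribution and lets me write $\mathcal{E}_{\text{tel}}^{\Psi^{\alpha}}(\varphi)=\frac{R(\Psi^{\alpha})+1}{2^n}\varphi + \frac{2^n-1-R(\Psi^{\alpha})}{2^n}\sum_{k=1}^{2^n-1}\Pr(k|\alpha)\hat{Z}_k\varphi\hat{Z}_k$.

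Next I would substitute $\varphi = U_j^{\dagger}(\bullet)U_j$ and conjugate by $U_j$; since $\hat{Z}_k=\hat{Z}_k^{\dagger}$, \Cref{lemma_1} gives $U_j\hat{Z}_k U_j^{\dagger}=S_{j,k}$ on both sides. Summing over $j$, the identity part contributes $(R(\Psi^{\alpha})+1)(\bullet)$, while the remainder becomes $\frac{2^n-1-R(\Psi^{\alpha})}{2^n}\sum_{k=1}^{2^n-1}\Pr(k|\alpha)\sum_{j}S_{j,k}(\bullet)S_{j,k}$. Applying \Cref{lemma_sum_S_im} turns each inner $j$-sum into $2^n\sum_l\tr[\ketbras{l}(\bullet)]\ketbras{l\oplus k}$, and recognizing $\rho_l=\sum_{k=1}^{2^n-1}\Pr(k|\alpha)\ketbras{l\oplus k}$ collapses this to $(2^n-1-R(\Psi^{\alpha}))\sum_l\tr[\ketbras{l}(\bullet)]\rho_l$. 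Dividing by $R(\Psi^{\alpha})+1$ and subtracting $\bigl(\frac{2^n}{R(\Psi^{\alpha})+1}-1\bigr)\sum_l\tr[\ketbras{l}(\bullet)]\rho_l$, the measure-and-prepare coefficient is $\frac{2^n-1-R(\Psi^{\alpha})}{R(\Psi^{\alpha})+1}-\frac{2^n}{R(\Psi^{\alpha})+1}+1=0$, leaving exactly $(\bullet)=\mathcal{I}^{\otimes n}(\bullet)$. For the overhead I would then note that each of the $2^n$ maps $U_j\mathcal{E}_{\text{tel}}^{\Psi^{\alpha}}(U_j^{\dagger}\cdot U_j)U_j^{\dagger}$ and the single measure-and-prepare map are trace-preserving LOCC channels, so $\sum_i|c_i| = 2^n\cdot\frac{1}{R(\Psi^{\alpha})+1}+\bigl(\frac{2^n}{R(\Psi^{\alpha})+1}-1\bigr)=\frac{2^{n+1}}{R(\Psi^{\alpha})+1}-1$, matching $\gamma^{\Psi^{\alpha}}(\mathcal{I}^{\otimes n})$ from \Cref{theorem_overhead}.

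The main obstacle I anticipate is the first step: pinning down the teleportation weights with full Galois-field rigor — in particular, verifying that the $Z$-type Paulis are exactly $\{\hat{Z}_k\}$ with diagonal $(-1)^{i\odot k}$, and that expanding $\bigl(\sum_j\alpha_j(-1)^{j\odot k}\bigr)^2$ and summing over $k$ reduces cleanly via \Cref{lemma_sum_galois}. Once the channel is in its $\hat{Z}_k$ form, the conjugation-and-twirl manipulation and the final coefficient cancellation are routine applications of \Cref{lemma_1,lemma_sum_S_im}.
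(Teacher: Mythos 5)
Your proposal is correct and follows essentially the same route as the paper's proof: compute the teleportation weights $\braket{\Phi^{\sigma}|\Psi^{\alpha}|\Phi^{\sigma}}$, observe that only the $\hat{Z}_k$-type Paulis survive, twirl with the MUB unitaries via \Cref{lemma_1} and \Cref{lemma_sum_S_im}, and read off the overhead from the coefficients. The only minor deviation is that you normalize the $k\neq 0$ weights by directly evaluating $\sum_{k}\bigl(\sum_j\alpha_j(-1)^{j\odot k}\bigr)^2=2^n$ with \Cref{lemma_sum_galois}, whereas the paper obtains the same normalization from the completeness of the generalized Bell basis; both are valid and equivalent.
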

\addtocounter{theorem}{-1}
\endgroup

\begin{proof}
As introduced in \Cref{eq:tele2}, the result of the teleportation of an $n$-qubit state with a pure $2n$-qubit NME state $\ket{\Psi^{\alpha}}$ depends on the overlap with the generalized Bell basis states $\ket{\Phi^\sigma}$ with $\sigma = \bigotimes_{i=0}^{n-1}\sigma_i$ and $\sigma_i \in \{I, X, Y, Z\}$.
By applying the definitions of $\ket{\Phi_n}$, $\ket{\Psi^{\alpha}}$, and $\ket{\Phi^{\sigma}}$ from \Cref{eq:max_ent_state,eq:phi_sigma,eq:nme}, respectively, we obtain
\begin{align}
    \braket{\Phi^\sigma|\Psi^{\alpha}|\Phi^\sigma} &= |\!\braket{\Phi^\sigma|\Psi^{\alpha}}\!|^2 \\
    &= |\!\braket{\Phi_n|(\sigma \otimes I^{\otimes n})|\Psi^{\alpha}}\!|^2 \label{eq:def_Phi_sigma}\\
    &= \left|\sum_{j,k=0}^{2^n-1} \frac{1}{\sqrt{2^n}}\bra{j}\otimes\bra{j}(\sigma \otimes I^{\otimes n})\;\alpha_j \ket{k}\otimes\ket{k}\right|^2 \label{eq:def_Phi_n_alpha}\\
    &= \left|\sum_{j,k=0}^{2^n-1} \frac{\alpha_j}{\sqrt{2^n}}\braket{j|\sigma|k} \braket{j|k}\right|^2 \label{eq:tensorproduct_factorization}\\
    &= \left|\sum_{j=0}^{2^n-1}\frac{\alpha_j}{\sqrt{2^n}} \braket{j|\sigma|j}\right|^2 \\
    &= \frac{1}{2^n}\left|\sum_{j=0}^{2^n-1}\alpha_j \braket{j|\sigma|j}\right|^2 \\
    &= \frac{1}{2^n}\left(\sum_{j=0}^{2^n-1}\alpha_j \braket{j|\sigma|j}\right)^2 \label{eq:real_sum}.
\end{align}
In \Cref{eq:tensorproduct_factorization}, we utilize that $\ket{j}\otimes \ket{j}$ and $(\sigma \otimes I^{\otimes n})\ket{k}\otimes\ket{k}$ are product states. 
The inner product of product states can be factorized, resulting in $\bra{j}\otimes\bra{j}(\sigma \otimes I^{\otimes n})\ket{k}\otimes\ket{k} = \braket{j|\sigma|k} \braket{j|I^{\otimes n}|k}$. 
Additionally, \Cref{eq:real_sum} is valid because the sum $\sum_{j=0}^{2^n-1}\alpha_j \braket{j|\sigma|j}$ results in a real number since the Schmidt coefficients $\alpha_j$ and the diagonal elements of the Hermitian matrix $\sigma$, denoted by $\braket{j|\sigma|j}$, are both real.

We evaluate the inner sum of \Cref{eq:real_sum} further by considering the binary representation of $j$, expressed as $j=\sum_{i=0}^{n-1} j_i 2^i$ with $j_i \in \{0,1\}$. 
This leads to the following product:
\begin{align}\label{eq:sigma_product}
    \braket{j|\sigma|j} &= \prod_{i=0}^{n-1}\braket{j_i|\sigma_i|j_i}.
\end{align}
For operators $\sigma_i \in \{X, Y\}$, we observe that $\braket{j_i|\sigma_i|j_i}=0$, as all diagonal elements of Pauli $X$ and $Y$ matrices are zero.
This implies with \Cref{eq:sigma_product} that for any operator $\sigma \in \{I, X, Y,Z\}^{\otimes n} \setminus  \{I, Z\}^{\otimes n}$,  the expression $\braket{i|\sigma|i}$ is zero. 
Therefore, the overlap in this case is 
\begin{align}\label{eq:XY_zero}
    \braket{\Phi^\sigma|\Psi^{\alpha}|\Phi^\sigma}= 0 \quad\text{for }\sigma \in \{I, X, Y,Z\}^{\otimes n} \setminus  \{I, Z\}^{\otimes n}.
\end{align}

Conversely, for $\sigma \in \{I, Z\}^{\otimes n}$, there exists a $k \in \mathbb{GF}(2^n)$ for the phase operator $\hat{Z}_k$ such that $\hat{Z}_k = \sigma$ as shown in \Cref{eq:phase_op_pauli_prod}.
Thus, by substituting the definition of $\hat{Z}_k$ in \Cref{eq:real_sum}, we obtain
\begin{align}
    \braket{\Phi^{\hat{Z}_k}|\Psi^{\alpha}|\Phi^{\hat{Z}_k}} &= \frac{1}{2^n}\left(\sum_{i=0}^{2^n-1}\alpha_i \bra{i}\hat{Z}_k \ket{i}\right)^2\\
    &= \frac{1}{2^n}\left(\sum_{i=0}^{2^n-1}\alpha_i \bra{i}\left(\sum_{l=0}^{2^n-1} (-1)^{k \odot l}\ketbras{l}\right)\ket{i}\right)^2 \\
    &= \frac{1}{2^n}\left(\sum_{i,l=0}^{2^n-1}(-1)^{k \odot l}\alpha_i \braket{i|l}\!\braket{l|i}\right)^2 \\
    &= \frac{1}{2^n}\left(\sum_{i=0}^{2^n-1}(-1)^{k \odot i}\alpha_i\right)^2 \label{eq:Zm_schmidt_coeff}.
\end{align}
In particular, for $\hat{Z}_0 = I^{\otimes n}$, using the formula for the robustness of entanglement for pure states in \Cref{eq:robustness_pure_states} shows that
\begin{align}
    \braket{\Phi^{\hat{Z}_0}|\Psi^{\alpha}|\Phi^{\hat{Z}_0}} &=  \frac{1}{2^n}\left(\sum_{i=0}^{2^n-1}\alpha_i\right)^2 \\
    &= \frac{R(\Psi^{\alpha}) + 1}{2^n}.\label{eq:Zm_robustness}
\end{align}

Thus, the resulting state of the teleportation of an $n$-qubit state with a pure $2n$-qubit NME resource state $\Psi^{\alpha}$ is given by
\begin{align}
    \mathcal{E}_{\text{tel}}^{\Psi^{\alpha}}(\varphi)
    &=\sum_{\sigma \in \{I, X, Y, Z\}^{\otimes n}}\braket{\Phi^{\sigma}|\Psi^{\alpha}|\Phi^{\sigma}}\sigma\varphi\sigma \\
    &=\sum_{\sigma \in \{I, Z\}^{\otimes n}}\braket{\Phi^{\sigma}|\Psi^{\alpha}|\Phi^{\sigma}}\sigma\varphi\sigma \\
    &=\sum_{k=0}^{2^n-1}\braket{\Phi^{\hat{Z}_k}|\Psi^{\alpha}|\Phi^{\hat{Z}_k}}\hat{Z}_k\varphi\hat{Z}_k
\end{align}
where we first applied \Cref{eq:XY_zero} to the description of the teleportation result in \Cref{eq:tele2} and then used the fact that for each $\sigma \in \{I, Z\}^{\otimes n}$ exists a unique $k\in\mathbb{GF}(2^n)$ such that $\sigma = \hat{Z}_k$.

The derived teleportation outcome for an arbitrary state $\varphi$, utilizing the resource state $\Psi^{\alpha}$, allows us to rewrite the sum of the $2^n$ teleportations of the cutting protocol in this theorem (\Cref{eq:thrm2}).
Each teleportation within this sum employs a different unitary transformation $U_j$ as defined in \Cref{eq:mub_basis_transformation}.
The reformulation is presented as follows:
\begin{align}
    \sum_{j=0}^{2^n-1} U_j \mathcal{E}_{\text{tel}}^{\Psi^{\alpha}}(U_j^{\dagger}\varphi U_j)U_j^{\dagger} &= \sum_{j=0}^{2^n-1}U_j\left(\sum_{k=0}^{2^n-1}\braket{\Phi^{\hat{Z}_k}|\Psi^{\alpha}|\Phi^{\hat{Z}_k}}\hat{Z}_k U_j^{\dagger}\varphi U_j\hat{Z}_k\right)U_j^{\dagger} \\
    &= \sum_{k=0}^{2^n-1}\braket{\Phi^{\hat{Z}_k}|\Psi^{\alpha}|\Phi^{\hat{Z}_k}} \sum_{j=0}^{2^n-1} U_j\hat{Z}_k U_j^{\dagger}\varphi U_j\hat{Z}_k U_j^{\dagger}\\
    \begin{split}\label{eq:before_lemma_1}
        &=2^n\braket{\Phi^{\hat{Z}_0}|\Psi^{\alpha}|\Phi^{\hat{Z}_0}}\varphi \\
        &\phantom{=}+ \sum_{k=1}^{2^n-1}\braket{\Phi^{\hat{Z}_k}|\Psi^{\alpha}|\Phi^{\hat{Z}_k}} \sum_{j=0}^{2^n-1} U_j\hat{Z}_k U_j^{\dagger}\varphi U_j\hat{Z}_k U_j^{\dagger}
    \end{split}\\
    &=2^n\braket{\Phi^{\hat{Z}_0}|\Psi^{\alpha}|\Phi^{\hat{Z}_0}}\varphi + \sum_{k=1}^{2^n-1}\braket{\Phi^{\hat{Z}_k}|\Psi^{\alpha}|\Phi^{\hat{Z}_k}} \sum_{j=0}^{2^n-1} S_{j,k} \varphi S_{j,k} \label{eq:before_lemma_4}.
\end{align}
To obtain \Cref{eq:before_lemma_1}, we utilize the equality $U_j\hat{Z}_0 U_j^{\dagger}= U_j U_j^{\dagger}= I^{\otimes n}$.
Subsequently, we apply \Cref{lemma_1} in  \Cref{eq:before_lemma_4}.
Further, by utilizing \Cref{lemma_sum_S_im}, we can express \Cref{eq:before_lemma_4} as
\begin{align}
    \begin{split}
        \sum_{j=0}^{2^n-1} U_j \mathcal{E}_{\text{tel}}^{\Psi^{\alpha}}(U_j^{\dagger}\varphi U_j)U_j^{\dagger}&= 2^n\braket{\Phi^{\hat{Z}_0}|\Psi^{\alpha}|\Phi^{\hat{Z}_0}}\varphi \\
        &\phantom{=}+ \sum_{k=1}^{2^n-1}\braket{\Phi^{\hat{Z}_k}|\Psi^{\alpha}|\Phi^{\hat{Z}_k}} 2^n\sum_{l=0}^{2^n-1}\tr[\ketbras{l}\varphi] \ketbras{l\oplus k}
    \end{split}\\
    \begin{split}\label{eq:before_normalization}
        &= 2^n\braket{\Phi^{\hat{Z}_0}|\Psi^{\alpha}|\Phi^{\hat{Z}_0}}\varphi \\
        &\phantom{=}+ 2^n\sum_{l=0}^{2^n-1}\tr[\ketbras{l}\varphi]\sum_{k=1}^{2^n-1}\braket{\Phi^{\hat{Z}_k}|\Psi^{\alpha}|\Phi^{\hat{Z}_k}}  \ketbras{l\oplus k}.
    \end{split}
\end{align}

In the following, we normalize the coefficients $\braket{\Phi^{\hat{Z}_k}|\Psi^{\alpha}|\Phi^{\hat{Z}_k}}$ in the second line of \Cref{eq:before_normalization} such that their sum equals 1, enabling them to be interpreted as probabilities.
The states $\{\ket{\Phi^\sigma}\}_{\sigma \in \{I, X, Y,Z\}^{\otimes n}}$ form an orthonormal basis.
Thus, summing over the measurement outcomes with respect to this basis leads to
\begin{align}
    1 &= \sum_{\sigma \in \{I, X, Y, Z\}^{\otimes n}}\braket{\Phi^{\sigma}|\Psi^{\alpha}|\Phi^{\sigma}} \\
    &= \sum_{k=0}^{2^n-1}\braket{\Phi^{\hat{Z}_k}|\Psi^{\alpha}|\Phi^{\hat{Z}_k}}.
\end{align}
Consequently, by excluding $k=0$ from the summation, as is required for the normalization in the second line of \Cref{eq:before_normalization}, we have $\sum_{k=1}^{2^n-1}\braket{\Phi^{\hat{Z}_k}|\Psi^{\alpha}|\Phi^{\hat{Z}_k}}= 1-\braket{\Phi^{\hat{Z}_0}|\Psi^{\alpha}|\Phi^{\hat{Z}_0}}$.
We use this term to normalize the coefficients $\braket{\Phi^{\hat{Z}_k}|\Psi^{\alpha}|\Phi^{\hat{Z}_k}}$ resulting in probabilities 
\begin{align}\label{eq:prob}
    \Pr(k|\alpha)=\frac{\braket{\Phi^{\hat{Z}_k}|\Psi^{\alpha}|\Phi^{\hat{Z}_k}}}{1-\braket{\Phi^{\hat{Z}_0}|\Psi^{\alpha}|\Phi^{\hat{Z}_0}}}.
\end{align}
It is important to note that the theorem requires $\Psi^{\alpha}$ to be an NME state, i.e. $R(\Psi^{\alpha}) < 2^n - 1$.
This implies that $\braket{\Phi^{\hat{Z}_0}|\Psi^{\alpha}|\Phi^{\hat{Z}_0}} < 1$, ensuring the denominator remains non-zero. 
By inserting the probabilities of \Cref{eq:prob} in \Cref{eq:before_normalization} and accounting for the normalization, we obtain the following expression
\begin{align}
    &\sum_{j=0}^{2^n-1} U_j \mathcal{E}_{\text{tel}}^{\Psi^{\alpha}}(U_j^{\dagger}\varphi U_j)U_j^{\dagger}\\
    \begin{split}
        &=2^n\braket{\Phi^{\hat{Z}_0}|\Psi^{\alpha}|\Phi^{\hat{Z}_0}}\varphi \\
        &\phantom{=}+ 2^n(1-\braket{\Phi^{\hat{Z}_0}|\Psi^{\alpha}|\Phi^{\hat{Z}_0}})\sum_{l=0}^{2^n-1}\tr[\ketbras{l}\varphi]\sum_{k=1}^{2^n-1}  \Pr(k|\alpha)\ketbras{l\oplus k}.
    \end{split}
\end{align}
We now use  $\braket{\Phi^{\hat{Z}_0}|\Psi^{\alpha}|\Phi^{\hat{Z}_0}} = 2^{-n}(R(\Psi^{\alpha}) +1)$ as given in \Cref{eq:Zm_robustness} to express the coefficients of this sum in terms of the entanglement robustness:
\begin{align}
    &\sum_{j=0}^{2^n-1} U_j \mathcal{E}_{\text{tel}}^{\Psi^{\alpha}}(U_j^{\dagger}\varphi U_j)U_j^{\dagger}\\
    \begin{split}
        &=(R(\Psi^{\alpha}) + 1)\varphi \\
        &\phantom{=}+ (2^n-(R(\Psi^{\alpha}) + 1))\sum_{l=0}^{2^n-1}\tr[\ketbras{l}\rho]\sum_{k=1}^{2^n-1}\Pr(k|\alpha) \ketbras{l\oplus k}.
    \end{split}
\end{align}
By rearranging the equation, we obtain
\begin{align}
    \begin{split}
    \varphi &= \frac{1}{R(\Psi^{\alpha}) + 1} \sum_{j=0}^{2^n-1} U_j \mathcal{E}_{\text{tel}}^{\Psi^{\alpha}}(U_j^{\dagger}\varphi U_j)U_j^{\dagger} \\
    &\phantom{=}- \left(\frac{2^n}{R(\Psi^{\alpha}) + 1}-1\right)\sum_{l=0}^{2^n-1}\tr[\ketbras{l}\varphi]\sum_{k=1}^{2^n}\Pr(k|\alpha)  \ketbras{l\oplus k}.
    \end{split}
\end{align}
This provides the QPD in \Cref{theorem_decomposition}.
This QPD is comprised of the $2^n$ teleportation results (first sum), each weighted by $(R(\Psi^{\alpha}) + 1)^{-1}$.
Additionally, it incorporates a term, weighted by $2^n(R(\Psi^{\alpha})+1)^{-1} -1$, that corrects errors occurring in the teleportations (second sum).
The required operation to correct these errors is a single measure-and-prepare operation.
This operation entails an initial measurement in the computational basis, yielding result $l$, and then prepares with probability $\Pr(k|\alpha)$ the state $\ket{l \oplus k}$.
This probability can be rewritten to match the form as stated in \Cref{eq:thrm2} by applying $\braket{\Phi^{\hat{Z}_0}|\Psi^{\alpha}|\Phi^{\hat{Z}_0}} = 2^{-n}(R(\Psi^{\alpha}) +1)$ as given in \Cref{eq:Zm_robustness} and $\braket{\Phi^{\hat{Z}_k}|\Psi^{\alpha}|\Phi^{\hat{Z}_k}} = 2^{-n}\left(\sum_{i=0}^{2^n-1}(-1)^{k \odot i}\alpha_i\right)^2$ as given in \Cref{eq:Zm_schmidt_coeff}.
As a result, we obtain
\begin{align}
    \Pr(k|\alpha) &= \frac{\braket{\Phi^{\hat{Z}_k}|\Psi^{\alpha}|\Phi^{\hat{Z}_k}}}{1-\braket{\Phi^{\hat{Z}_0}|\Psi^{\alpha}|\Phi^{\hat{Z}_0}}}\\
    &= \frac{\braket{\Phi^{\hat{Z}_k}|\Psi^{\alpha}|\Phi^{\hat{Z}_k}}}{1- 2^{-n}(R(\Psi^{\alpha}) +1)}\\
    &= \frac{2^n\braket{\Phi^{\hat{Z}_k}|\Psi^{\alpha}|\Phi^{\hat{Z}_k}}}{2^n-1 -R(\Psi^{\alpha})}\label{eq:prob_proof}\\
    &=\frac{\left(\sum_{i=0}^{2^n-1}(-1)^{k \odot i}\alpha_i\right)^2}{2^n-1 -R(\Psi^{\alpha})}.
\end{align}

To show that this QPD achieves the optimal sampling overhead $\gamma^{\Psi^\alpha}(\mathcal{I}^{\otimes n})$ given in \Cref{theorem_overhead}, we compute its sampling overhead $\kappa$ as given in \Cref{eq:expectation} by summing over the absolute values of the circuits' coefficients.
The coefficients for each teleportation are given by $(R(\Psi^{\alpha}) + 1)^{-1}$ and the coefficient for the corrective measure-and-prepare operation is $2^n(R(\Psi^{\alpha})+1)^{-1} -1$.
Consequently, we obtain
\begin{align}
    &\sum_{j=0}^{2^n-1}\frac{1}{R(\Psi^{\alpha}) + 1} +  \frac{2^n}{R(\Psi^{\alpha}) + 1}-1 \\
    &= \frac{2^n}{R(\Psi^{\alpha}) + 1} +  \frac{2^n}{R(\Psi^{\alpha}) + 1}-1\\     
    &= \frac{2^{n+1}}{R(\Psi^{\alpha}) + 1} -1 \\
    &=  \gamma^{\Psi^\alpha}(\mathcal{I}^{\otimes n}) .
\end{align}
\end{proof}

\end{document}